\pdfoutput=1
\documentclass[journal]{IEEEtran}
\usepackage{amsmath,amssymb,amsthm,mathtools}
\usepackage{enumitem}
\usepackage{graphicx}
\usepackage[hidelinks]{hyperref}

\newtheorem{theorem}{Theorem}
\newtheorem{lemma}{Lemma}
\newtheorem{proposition}{Proposition}
\newtheorem{corollary}{Corollary}
\newtheorem{definition}{Definition}
\newtheorem{remark}{Remark}
\newtheorem{example}{Example}
\newtheorem{openproblem}{Open Problem}

\newcommand{\KL}[2]{D_{\mathrm{KL}}\!\left(#1 \,\middle\|\, #2\right)}
\newcommand{\TC}{\mathrm{TC}}
\newcommand{\tc}{\mathrm{tc}}
\newcommand{\cE}{\mathcal{E}}

\newcommand{\E}{\mathbb{E}}

\newcommand{\Bin}{\mathrm{Bin}}
\newcommand{\eps}{\varepsilon}

\title{Conditional Total Correlation and the Serial Depth of Adaptive
Parallel Sampling}
\author{Chuling~Wen, Weijie~Liang, and~Jian~Lu%
\thanks{This work was supported in part by the National Natural Science
Foundation of China under Grant 12571569 and Grant U21A20455, in part by
the Natural Science Foundation of Guangdong Province of China under Grant
2024A1515011913, and in part by the Shenzhen Key Laboratory of Advanced
Machine Learning and Applications under Grant SYSPG20241211173920042.
\textit{(Corresponding author: Jian Lu.)}}%
\thanks{Chuling Wen and Weijie Liang are with the Shenzhen Key Laboratory
of Advanced Machine Learning and Applications, School of Mathematical
Sciences, Shenzhen University, Shenzhen 518060, China (e-mail:
2400191020@mails.szu.edu.cn; liangweijie2022@email.szu.edu.cn).}%
\thanks{Jian Lu is with the Shenzhen Key Laboratory of Advanced Machine
Learning and Applications, School of Mathematical Sciences, Shenzhen
University, Shenzhen 518060, China, and also with the National Center for
Applied Mathematics Shenzhen (NCAMS), Shenzhen 518055, China (e-mail:
jianlu@szu.edu.cn).}}

\begin{document}
\maketitle

\begin{abstract}
Motivated by parallel decoding in masked diffusion models, we study the
adaptive parallel sampling of discrete vectors. In each round, a
deterministic policy selects unrevealed coordinates on the basis of the
values observed so far, and the selected coordinates are sampled
independently from their exact conditional marginals. Approximation error is
measured by forward Kullback--Leibler divergence, which represents excess
expected logarithmic loss and admits a chain-rule decomposition aligned with
the reveal process. We define serial depth as the minimum target-averaged
number of rounds required to satisfy a prescribed error budget. Our central
result establishes that the divergence of every policy equals the expected
conditional total correlation accumulated over its reveal rounds.
Conditional total correlation therefore gives the exact information cost of
within-round parallelism.

The identity yields zero-error schedules for finite-order Markov chains whose
round complexity is proportional to the Markov order and logarithmic in
sequence length. It also gives a matching logarithmic characterization of
the Bernoulli walk at every fixed error budget and a
linear-versus-logarithmic separation between left-to-right and hierarchical
reveal orders. Uniform random permutations require a linear number of
expected rounds at every fixed error budget. Their hard-cap round--error
tradeoff is characterized by an exact integer-composition problem, together
with its fixed-round asymptotics and joint-scaling frontier. Uniform balanced
binary strings have depth proportional to the squared logarithm of sequence
length at every fixed positive error budget. Independent binary one-hot blocks
have square-root depth at every fixed error budget, and rectangular versions
realize every polynomial depth exponent up to one half. These results exhibit logarithmic,
polylogarithmic, polynomial, and linear depth regimes, separate serial depth
from entropy and negative log-likelihood, and establish
conditional-dependence structure as a fundamental determinant of
parallelizability.
Experiments with a masked diffusion language model show that the resulting
pseudo-cost distinguishes deployed decoding rules and that policy-level
pseudo-cost rankings agree closely with the quality rankings of self-sampled
outputs on the tested prompts.
\end{abstract}

\begin{IEEEkeywords}
Adaptive parallel sampling, conditional total correlation,
Kullback--Leibler divergence, masked diffusion language models, parallel
decoding, round complexity, serial depth.
\end{IEEEkeywords}

\section{Introduction}

Autoregressive language models generate a sequence according to a fixed
left-to-right factorization. Standard ancestral decoding produces each token
after its predecessors and therefore samples consistently from the modeled
joint distribution, but the generation process is inherently sequential
\cite{bengio2003,vaswani2017}. Conditional masked language models and masked
diffusion language models instead predict masked coordinates from
bidirectional context and permit tokens to be generated in flexible orders or
in parallel batches
\cite{maskpredict2019,austin2021,sedd2024,radd2024,llada2025}. This
flexibility has motivated practical efforts to reduce the number of serial
model calls \cite{mercury2025,fastdllm2025}. Parallel decoding nevertheless
introduces a distributional error when the tokens generated in the same round
are sampled independently from their conditional marginals while remaining
dependent under the target joint distribution.

Recent work has studied this accuracy--parallelism tradeoff for several
classes of decoding procedures. Schedule-level analyses characterize or
optimize reveal schedules whose positions are fixed, randomized, or selected
independently of the values realized during generation
\cite{chen2025schedules,lavenant2025,zhao2026,wainwright2026}. Analyses of
specific decoder classes relate their iteration complexity to entropy,
negative log-likelihood, or confidence-based information budgets
\cite{fu2025bits,cai2026confidence}. Other studies examine generation order,
parallelization bias, and the loss of token dependence under factorized
decoding \cite{parallelbench2026,zhang2026,zhong2026order}. Black-box
parallel-sampling theory provides oracle-complexity bounds for algorithms that
are not specialized to a fixed known distribution
\cite{anari2024,anari2026,dlmoptimal2025}.

These results do not determine the minimum number of rounds required by a
fixed target distribution when the reveal positions may depend on the values
observed during sampling. Value adaptivity changes both the order of
generation and the conditional dependence encountered in later rounds. The
resulting reveal sets are random and coupled to the generated sample. A
converse must therefore control every branch of the adaptive policy tree.
Entropy and negative log-likelihood alone cannot provide such a
characterization, since distributions with large entropy may still have
independent coordinates and admit exact one-round sampling.

We study this question in an exact conditional-marginal oracle model that
isolates the error caused by within-round independent sampling. Serial depth
is defined as the minimum target-averaged number of revealing rounds required
to achieve a prescribed forward-KL accuracy. The central observation is that
conditional total correlation measures the dependence discarded within each
round. The KL chain rule shows that the total approximation error equals the
expected conditional total correlation accumulated along the adaptive reveal
process. This identity transforms adaptive round complexity into an
information-accounting problem. Recursive conditional-independence structure
leads to efficient revealing schedules, while persistent conditional
dependence yields lower bounds.

The theoretical contributions are summarized below.

\begin{enumerate}[leftmargin=2em,itemsep=1pt]
\item We establish the exact KL--total-correlation identity for every
deterministic value-adaptive reveal policy. The result applies to random and
history-dependent reveal sets and provides a common method for analyzing
upper and lower bounds on serial depth.
\item We prove logarithmic serial depth for the Bernoulli walk at every fixed
error budget. Hierarchical bisection achieves exact sampling in
logarithmically many rounds, while contiguous left-to-right schedules require
linearly many rounds to maintain fixed accuracy. The adaptive converse applies
to every deterministic value-adaptive policy. We also construct zero-error
schedules whose round complexity is proportional to the Markov order and
logarithmic in sequence length for all finite-order Markov chains.
\item We prove that uniform random permutations require a linear number of
expected rounds at every fixed error budget. Under a hard round cap, the
optimal approximation error is characterized exactly by an integer-composition
problem. This characterization yields fixed-round asymptotics, an exact
finite-blocklength pairs-first phase, and the complete joint-scaling frontier
for the hard-cap complexity.
\item We establish polynomial serial depth over a binary alphabet. Uniform
balanced binary strings have depth proportional to the squared logarithm of
sequence length at every fixed positive error budget, while independent
one-hot blocks have matching square-root depth against arbitrary deterministic
value-adaptive policies. Rectangular block constructions realize every
polynomial exponent up to one half. Together with the Bernoulli walk and
random permutations, these examples exhibit logarithmic, polylogarithmic,
polynomial, and linear depth regimes. We also show that almost every
full-support distribution has maximal zero-error depth and that serial depth
is not determined by entropy or negative log-likelihood.
\end{enumerate}

We additionally evaluate generic schedules and deployed decoding rules on
$542$ text sequences drawn from WikiText and model-generated text in four
domains. A
paired self-sampling study on $231$ held-out prompts connects teacher-forced
pseudo-costs to output quality under an external autoregressive evaluator. It
also evaluates a model-independent spaced-profile schedule obtained by
combining increasing batch sizes with spatially dispersed reveals. These
experiments are diagnostic rather than a universal benchmark because they use
a single 0.5B model and a single sampling temperature.

\section{Related Work}\label{sec:related}

Existing theory on parallel decoding can be organized into analyses of
sampling schedules, particular decoder classes, and black-box sampling
models. For a broad survey of diffusion language models, see
\cite{dlmsurvey2025}.

Analyses of sampling schedules quantify the error of masked diffusion models
when reveal positions are fixed or randomized independently of the values
generated during sampling. Chen, Cong, and Li \cite{chen2025schedules}
characterize schedules with fixed cardinalities through information curves.
Lavenant and Zanella \cite{lavenant2025} derive error bounds and optimal
schedules for factorized masked diffusion. Zhao and Cai \cite{zhao2026}
design randomized schedules that adapt to distributional dependence.
Wainwright \cite{wainwright2026} develops certified schedule design through
unmasking growth complexity. These procedures may use prior knowledge or
estimated properties of the target. Their reveal positions do not depend on
the values realized along the current sampling trajectory. On the
continuous side, convergence theory for diffusion sampling has recently
entered the journals \cite{liyan2025,huanglin2025}, with
information-theoretic treatments developed within the information-theory
community as well \cite{reeves2025}.

Analyses of particular decoder classes study round complexity under
prescribed selection rules. Bounds based on negative log-likelihood and
information budgets have been established for procedures based on confidence
\cite{fu2025bits}. Related guarantees based on entropy are given in
\cite{cai2026confidence}. The tradeoff between quality and parallelism for
specific diffusion decoder families is studied in \cite{fenggeng2025}.
Bounds based on total correlation for fixed ordered partitions and analyses
of generation order appear in
\cite{parallelbench2026,zhang2026,zhong2026order}. These results constrain
the decoder rule or the ordered partition used during sampling.

Parallel sampling has also been studied in black-box oracle models. Counting
and conditional marginal oracles yield sublinear upper and lower bounds for
algorithms that do not know the target distribution
\cite{anari2024,anari2026}. Diffusion-style procedures are optimal parallel
samplers in a related oracle model \cite{dlmoptimal2025}. These results permit
adaptive oracle queries but require an algorithm that operates without prior
knowledge of the target. Their lower bounds establish a hard distribution for
each algorithm and do not identify a fixed known distribution with large
serial depth. Remark~\ref{rem:blackbox} explains this distinction for the
standard hard instances.

The information cost of \emph{synthesizing} joint dependence is a
classical information-theoretic theme: common information quantifies the
shared randomness needed to generate a dependent pair \cite{wyner1975},
channel resolvability the randomness needed to approximate an output law
\cite{hanverdu1993}, and distributed channel synthesis the communication
needed for approximate simulation of a joint distribution \cite{cuff2013}.
Serial depth asks the complementary question for the same simulation task:
when dependence must be resolved through \emph{revealed values} rather
than shared randomness, how many factorized rounds are unavoidable.
Round-limited adaptivity is likewise a classical theme in group testing
\cite{gtsurvey2019,scarlett2019}, where a small constant number of
adaptive rounds often suffices; our converses show that for sampling, by
contrast, the required number of rounds is an unbounded,
distribution-intrinsic quantity.

These lines of work leave unresolved the distribution-specific problem
addressed in this paper. We fix the target distribution and allow the reveal
policy to use both complete knowledge of that distribution and all values
observed during sampling. The selected reveal sets are random and coupled to
the generated sample, which prevents fixed-schedule bounds from directly
yielding converses for adaptive policies. The cost identity provides exact
information accounting for every deterministic value-adaptive policy and
resolves this coupling at the level of the policy tree. The resulting analysis
gives a tight logarithmic characterization for the Bernoulli walk, a linear
lower bound and an exact tradeoff under a hard round cap for uniform random
permutations, a tight squared logarithmic characterization for balanced binary
strings, a tight square-root characterization for binary one-hot blocks, and
maximal zero-error depth for generic full-support distributions.
It therefore fills the gap between schedule analyses with value-independent
positions and oracle bounds for algorithms that do not know the target.
Lower bounds for value-adaptive selection in this sense were identified as an
open direction in \cite{cai2026confidence}. The one-hot construction realizes
all polynomial depth exponents up to one half over a binary alphabet; whether
higher exponents, including linear depth, are possible remains open.

\section{Preliminaries}\label{sec:prelim}

\subsection{Notation}\label{sec:notation}

For a positive integer $n$, write $[n]=\{1,\ldots,n\}$. The finite alphabet
is denoted by $V$. Uppercase letters denote random variables and lowercase
letters denote their realizations. For $A\subseteq[n]$, the subvectors on
$A$ are denoted by $X_A$ and $x_A$, the cardinality of $A$ by $|A|$, and the
marginal of a distribution $p$ on $V^n$ by $p_A$. The notation
$p(x_B\mid x_A)$ is used only when $A\cap B=\emptyset$ and
$p_A(x_A)>0$. We write $x_{i:j}$ for $(x_i,\ldots,x_j)$,
$X_{\le i}$ for $X_{[i]}$, and $X_{\ge i}$ for $X_{\{i,\ldots,n\}}$.
The support of $p$ is $\operatorname{supp}(p)$, and $\mathbf 1_E$ is the
indicator of an event $E$.

Probability and expectation under $p$ are written $\Pr_p$ and $\E_p$; the
subscript is omitted when the law is clear. All logarithms are base two, so
entropy, mutual information, KL divergence, and total correlation are
measured in bits. The natural logarithm is written $\ln$. We use $H$ for
entropy, $I$ for mutual information, $\KL{p}{q}$ for forward KL divergence,
$\TC$ for total correlation, and $\tc$ for its realized information density.
The binary entropy function is
\[
h_2(u):=-u\log_2u-(1-u)\log_2(1-u),\qquad 0\le u\le1,
\]
with the convention $0\log_2 0=0$.
Conditional information quantities evaluated at a realized context are
written with a vertical bar, as in $H(X_A\mid x_C)$ and
$\TC(X_A\mid x_C)$. Conditional independence is denoted by $\perp$.

The positive part of a real number is $a_+=\max\{a,0\}$, and
$(M)_B=M(M-1)\cdots(M-B+1)$ denotes a falling factorial. The notation
$\Bin(j,\theta)$ denotes a binomial random variable. Standard asymptotic
notation is used as the sequence length tends to infinity. Constants denoted
by $c,C$, or their subscripted variants are positive and independent of the
sequence length unless a dependence is stated explicitly.

In the reveal model, $\pi$ denotes a policy, $C_t$ the coordinates revealed
before round $t$, $S_t$ the coordinates selected in that round, $R$ the
number of rounds, and $q_\pi$ the output distribution. The symbols $D_\eps$
and $\overline D_\eps$ denote the target averaged and hard cap versions of
serial depth. The distributions $p_{\mathrm{walk}}$,
$p_{\mathrm{perm}}$, and $p_{\mathrm{bal}}$ denote the Bernoulli walk,
uniform random permutations, and uniform balanced binary strings. The symbol
$\mathrm{maxrun}(x)$ denotes the longest constant run in a walk realization.
The distribution $p_m^{\mathrm{hot}}$ is the product of $m$ independent
binary one-hot blocks of length $m$, and $p_{b,L}^{\mathrm{hot}}$ is the
rectangular version with $b$ independent blocks of length $L$.

In batch calculations, $M$ denotes the number of unrevealed coordinates and
$B$ the batch size. In the permutation section, $g(M,B)$ is the cost of one
batch, $\cE_{n,R}$ is the minimum cost under a hard cap of $R$ rounds, and
$\underline{\cE}_n$ is its lower convex envelope. The symbols $\rho_R$,
$\eps_*(r)$, and $r_*(\eps)$ describe the fixed round recursion and the two
directions of the joint scaling frontier. In the balanced string section,
$k$ denotes the number of remaining ones and $L_n(\eps)$ denotes the
truncated logarithmic term in the lower bound. Auxiliary quantities used
inside a single result or proof are defined at their first occurrence.

\subsection{Information quantities}

Let $X=(X_1,\ldots,X_n)$ take values in $V^n$ with distribution $p$, which
need not have full support. For a discrete random variable $U$ with law
$p_U$, its entropy is
\[
H(U)=-\sum_u p_U(u)\log p_U(u).
\]
For distributions $p$ and $q$ on the same finite space, their forward
Kullback--Leibler divergence is
\[
\KL{p}{q}=\sum_{x:p(x)>0}p(x)\log\frac{p(x)}{q(x)},
\]
with the value $+\infty$ when $q(x)=0$ for some $x$ with $p(x)>0$. Forward
KL divergence equals the excess expected logarithmic loss incurred when
predicting sequentially under $q$ in place of $p$
\cite{merhavfeder1998}. Conditional entropy and mutual information have
their standard meanings \cite{covertho2006}.

For a nonempty $A\subseteq[n]$ and a context $x_C$ of positive probability
with $A\cap C=\emptyset$, the conditional total correlation is
\[
\TC(X_A\mid x_C)
=\sum_{i\in A}H(X_i\mid x_C)-H(X_A\mid x_C).
\]
This quantity is also called multi-information \cite{watanabe1960}; it
belongs to the classical family of multivariate dependence measures
\cite{han1978}. It is
governed by the following standard properties.

\begin{lemma}[Basic properties of conditional total correlation]
\label{lem:tc-basic}
For every positive-probability context $x_C$, conditional total correlation
is nonnegative and equals zero exactly when the coordinates are mutually
independent under the conditional law. If $A\subseteq S\subseteq[n]\setminus
C$, then
\[
\TC(X_S\mid x_C)\ge \TC(X_A\mid x_C).
\]
Consequently, for distinct $i,j\in S$,
$\TC(X_S\mid x_C)\ge I(X_i;X_j\mid x_C)$.
\end{lemma}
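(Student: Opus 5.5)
Everything will be carried out under the fixed conditional law $p(\cdot\mid x_C)$ on $V^{[n]\setminus C}$, which is a genuine probability distribution because $p_C(x_C)>0$; all entropies below are computed under this law. The first step is to write total correlation as a KL divergence:
\[
\TC(X_A\mid x_C)=\KL{p_A(\cdot\mid x_C)}{\textstyle\prod_{i\in A}p_i(\cdot\mid x_C)}.
\]
This follows by expanding the logarithm of the ratio and using that the marginals of the joint conditional law are exactly the factors $p_i(\cdot\mid x_C)$. Gibbs' inequality then gives nonnegativity. It also gives equality to zero exactly when the conditional joint law coincides with the product of its marginals, which is mutual independence under the conditional law.

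For monotonicity, I will write $S=A\cup B$ with $B=S\setminus A$. Expanding the definition gives
\[
\TC(X_S\mid x_C)-\TC(X_A\mid x_C)=\sum_{i\in B}H(X_i\mid x_C)-\bigl(H(X_S\mid x_C)-H(X_A\mid x_C)\bigr).
\]
By the chain rule, the bracket equals $H(X_B\mid X_A,x_C)$. Conditioning reduces entropy, so this is at most $H(X_B\mid x_C)$, and subadditivity bounds that by $\sum_{i\in B}H(X_i\mid x_C)$. The difference is therefore nonnegative. Equivalently, the difference equals $\TC(X_B\mid x_C)+I(X_A;X_B\mid x_C)\ge 0$, where the mutual information is computed under the conditional law; this decomposition gives a cleaner writeup.

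For the consequence, I will take $A=\{i,j\}$. By definition,
\[
\TC(X_{\{i,j\}}\mid x_C)=H(X_i\mid x_C)+H(X_j\mid x_C)-H(X_i,X_j\mid x_C)=I(X_i;X_j\mid x_C),
\]
and monotonicity then gives the claimed bound.

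I do not expect a real obstacle. The only point needing care is notation: $I(X_i;X_j\mid x_C)$ means mutual information under the conditional law at the realized context, not an average over contexts. Every standard inequality used (Gibbs, chain rule, conditioning reduces entropy, subadditivity) is applied to that single fixed distribution, so none of them needs to be extended to the realized-context setting.
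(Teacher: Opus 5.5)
Your proposal is correct and follows the paper's proof: the paper also writes $\TC(X_A\mid x_C)$ as the KL divergence to the product of conditional marginals to get nonnegativity and the equality case. It then proves monotonicity through the decomposition $\TC(X_S\mid x_C)=\TC(X_A\mid x_C)+\TC(X_B\mid x_C)+I(X_A;X_B\mid x_C)$, which is the alternative you mention, and your explicit check that $\TC(X_{\{i,j\}}\mid x_C)=I(X_i;X_j\mid x_C)$ spells out a step the paper leaves implicit.
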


\begin{lemma}[Product conditioning and additivity]
\label{lem:product-additivity}
Let $\mathcal A_1,\ldots,\mathcal A_r$ partition $[n]$, and suppose
\[
p(x)=\prod_{a=1}^r p^{(a)}(x_{\mathcal A_a}),
\]
where $p^{(a)}$ is a distribution on $V^{\mathcal A_a}$. For every
positive-probability context $x_C$, the conditional law of the unrevealed
coordinates factorizes across the blocks:
\[
p(x_{[n]\setminus C}\mid x_C)
=\prod_{a=1}^r
p^{(a)}(x_{\mathcal A_a\setminus C}
       \mid x_{\mathcal A_a\cap C}).
\]
Consequently, for every nonempty $S\subseteq[n]\setminus C$,
\[
\TC(X_S\mid x_C)
=\sum_{\substack{1\le a\le r\\S\cap\mathcal A_a\ne\emptyset}}
\TC(X_{S\cap\mathcal A_a}\mid x_C).
\]
\end{lemma}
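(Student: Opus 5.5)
The statement to prove is Lemma~\ref{lem:product-additivity}. The plan has two parts: first establish the factorization of the conditional law, then read off additivity of total correlation from it.

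\emph{Factorization.} Fix a positive-probability context $x_C$ and write $C_a=\mathcal A_a\cap C$ and $U_a=\mathcal A_a\setminus C$. Marginalizing the product form over the unrevealed coordinates $x_{[n]\setminus C}$ gives
\[
p_C(x_C)=\prod_{a=1}^r p^{(a)}_{C_a}(x_{C_a}).
\]
Here a block with $C_a=\emptyset$ contributes the factor $1$. Since $p_C(x_C)>0$, every factor $p^{(a)}_{C_a}(x_{C_a})$ is positive, so each conditional $p^{(a)}(x_{U_a}\mid x_{C_a})$ is well defined. Dividing the joint product by $p_C(x_C)$ factor by factor then yields the displayed factorization. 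In other words, under the conditional law given $x_C$, the blocks $X_{U_1},\ldots,X_{U_r}$ are mutually independent, and block $a$ has law $p^{(a)}(\cdot\mid x_{C_a})$.

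\emph{Additivity.} Let $S\subseteq[n]\setminus C$ be nonempty and set $S_a=S\cap\mathcal A_a\subseteq U_a$. Marginalizing the conditional product over $U_a\setminus S_a$ within each block shows that the variables $X_{S_a}$, for the blocks with $S_a\neq\emptyset$, remain mutually independent given $x_C$. Two consequences follow.

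\begin{itemize}
\item Joint entropy splits over blocks: $H(X_S\mid x_C)=\sum_a H(X_{S_a}\mid x_C)$, since entropy is additive over independent components.
\item Single-coordinate entropies regroup trivially: $\sum_{i\in S}H(X_i\mid x_C)=\sum_a\sum_{i\in S_a}H(X_i\mid x_C)$.
\end{itemize}

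Subtracting the first identity from the second, block by block, gives
\[
\TC(X_S\mid x_C)=\sum_{a:\,S_a\ne\emptyset}\TC(X_{S_a}\mid x_C),
\]
which is the claimed additivity.

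The only delicate point is bookkeeping when the support is not full. The definition of a conditional requires positive probability, and this is exactly what the marginal computation above supplies: $p_C(x_C)>0$ forces each block marginal $p^{(a)}_{C_a}(x_{C_a})$ to be positive. Beyond that, the argument is routine.
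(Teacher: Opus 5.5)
Your proposal is correct and follows essentially the same route as the paper's proof: you marginalize the product form to get $p_C(x_C)=\prod_a p^{(a)}_{\mathcal A_a\cap C}(x_{\mathcal A_a\cap C})$, divide to obtain the conditional factorization, deduce conditional independence of the blocks $X_{S\cap\mathcal A_a}$, and then use entropy additivity to split the total correlation. Your remark that $p_C(x_C)>0$ makes every block conditional well defined is a step the paper leaves implicit, but it is not a different argument.
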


\begin{proof}
The product form gives
\[
p_C(x_C)
=\prod_{a=1}^r
p^{(a)}_{\mathcal A_a\cap C}(x_{\mathcal A_a\cap C}).
\]
Dividing the joint mass by this marginal proves the conditional
factorization. Hence the vectors
$X_{S\cap\mathcal A_a}$ with nonempty intersections are conditionally
independent given $x_C$. Put $S_a:=S\cap\mathcal A_a$ and
$\mathcal I:=\{a:S_a\ne\emptyset\}$. Entropy additivity therefore gives
\begin{align*}
\TC(X_S\mid x_C)
&=\sum_{a\in\mathcal I}\sum_{i\in S_a}H(X_i\mid x_C)\\
&\quad-\sum_{a\in\mathcal I}H(X_{S_a}\mid x_C)\\
&=\sum_{a\in\mathcal I}\TC(X_{S_a}\mid x_C),
\end{align*}
which is the claimed identity.
\end{proof}

\subsection{Adaptive reveal model}\label{sec:model}

\begin{definition}[Adaptive reveal policy]\label{def:policy}
A \emph{reveal state} is a pair $(C,x_C)$ with $C\subseteq[n]$ and
$x_C\in V^C$. A \emph{deterministic value-adaptive policy} $\pi$ maps every
state with $C\neq[n]$ to a nonempty set
$S=\pi(C,x_C)\subseteq[n]\setminus C$.
\end{definition}

Given a policy, sampling starts from $C_1=\emptyset$. In round $t$, let
$S_t=\pi(C_t, x_{C_t})$; for each $i\in S_t$ \emph{independently}, draw
$x_i\sim p(\,\cdot\mid x_{C_t})$, where $p(x_i\mid x_{C_t})$ denotes the
conditional marginal of coordinate $i$ under $p$; set
$C_{t+1}=C_t\cup S_t$. The process stops when $C_{t+1}=[n]$; write $R(x)$ for
the number of rounds on realization $x$ and $q_\pi$ for the law of the output.
The conditional marginals are supplied by an exact oracle.  This idealization
isolates the effect of factorized within-round sampling from model-estimation
error.  Without the factorization constraint, auxiliary randomness can encode
arbitrary joint dependence in one call \cite{draxler2026}, and the round
complexity considered here degenerates; modeling within-round dependence
beyond the product law is likewise an active empirical direction
\cite{copula2024,factbarrier2026}.

When $p$ is not of full support, independent
per-coordinate sampling can exit the support: a round may produce values with
$p_{C_{t+1}}(x_{C_{t+1}})=0$ (a \emph{null history}), after which the
conditionals $p(\,\cdot\mid x_{C_{t+1}})$ are undefined. We fix once and for
all an arbitrary extension --- on null histories the oracle returns some fixed
conditional law, say uniform on $V$ --- so that the process is always well
defined and $q_\pi$ is a probability measure on $V^n$. The results are
independent of this choice: if $p(x)>0$ then every prefix marginal
$p_{C_t(x)}(x_{C_t(x)})$ is positive, so the unique trajectory of $x$ never
meets a null history, the factorization~\eqref{eq:chain} for $q_\pi(x)$ is
unaffected by the extension, and
$\KL{p}{q_\pi}=\sum_{x:\,p(x)>0}p(x)\log\bigl(p(x)/q_\pi(x)\bigr)$ depends on
$q_\pi$ only through such $x$. (The convention is not a technicality to be
assumed away: the copy chain of Example~\ref{ex:copy}, the permutation
distribution of Theorem~\ref{thm:perm} and the affine-subspace
distributions of Remark~\ref{rem:blackbox} are all natural non-full-support
instances.)

\begin{definition}[Serial depth]\label{def:depth}
For $\eps\ge0$, let
\[
\begin{aligned}
\Pi_\eps(p):=\{\pi:\;&\pi\text{ is deterministic and value-adaptive},\\
&\KL{p}{q_\pi}\le\eps\}.
\end{aligned}
\]
The \emph{$p$-weighted policy-tree depth} and its worst-case
(\emph{operational}) variant are, respectively,
\[
D_\eps(p):=\min_{\pi\in\Pi_\eps(p)}\E_p[R],\qquad
\overline D_\eps(p):=\min_{\pi\in\Pi_\eps(p)}\max_{x\in V^n}R(x).
\]
Both minima are attained: there are finitely many reveal states, hence
finitely many deterministic policies, and $\Pi_\eps(p)$ is nonempty
because one-coordinate-per-round revealing samples exactly from $p$.
The inequality $D_\eps(p)\le\overline D_\eps(p)$ follows from the
definitions. Since $\E_{q_\pi}[R]\le\max_xR(x)$, the hard cap also bounds
the sampler's operational mean round count.
\end{definition}

\begin{remark}[Which round count?]\label{rem:whichmean}
$R$ is a random variable and three counts are natural: $\E_p[R]$
(Definition~\ref{def:depth} --- the $p$-weighted depth of the policy tree),
the sampler's operational mean $\E_{q_\pi}[R]$ (its own history follows
$q_\pi$, not $p$), and the worst case $\max_x R(x)$. The first two can
differ once $\pi$ is value-adaptive and $q_\pi\ne p$; $D_\eps$ itself is a
target-weighted policy-tree depth, \emph{not} a runtime, and statements
about the runtime of adaptive samplers should be phrased via
$\overline D_\eps$ or $\E_{q_\pi}[R]$. For the results of this paper the
distinction does not affect the stated upper bounds: every such bound is
witnessed by a fixed schedule, whose round count is deterministic, so it
bounds $\overline D_\eps$ and hence all three counts. The lower bound of
Theorem~\ref{thm:ordergap}(ii) is stated for fixed round budgets.
Proposition~\ref{prop:generic} forces $R(x)=n$ on every trajectory and thus
lower-bounds all three. The bound of Theorem~\ref{thm:perm} is stated
directly for $\E_p[R]$, hence for $D_\eps$ and a fortiori for
$\overline D_\eps$. The distinction matters when comparing with
algorithmic black-box round complexity; see Remark~\ref{rem:blackbox}.
\end{remark}

Sequential revealing samples exactly from $p$. Consequently,
$D_\eps(p)\le\overline D_\eps(p)\le n$, and both quantities are
nonincreasing in $\eps$.

\begin{definition}[Finite order Markov distribution]\label{def:markov}
A distribution $p$ is called \emph{order-$m$ Markov} if it
factorizes as
\[
p(x)=\prod_{i=1}^{n-m}\phi_i(x_{i:i+m})
\]
for nonnegative potentials $\phi_i$ on windows of $m+1$ consecutive
coordinates.
\end{definition}

\section{Main Results}\label{sec:main}

This section states the principal theoretical results and records their
consequences. Complete proofs, including all intermediate derivations, are
given in Appendices~\ref{app:identity}--\ref{app:pb}.

\subsection{Exact information cost}\label{sec:identity}

For $S$ disjoint from $C$, define the \emph{realized total-correlation
increment}
\[
\tc(x_S\mid x_C)
:=\log\frac{p(x_S\mid x_C)}{\prod_{i\in S}p(x_i\mid x_C)}.
\]
Its conditional expectation under $p(\cdot\mid x_C)$ is
$\TC(X_S\mid x_C)$.

\begin{theorem}[Adaptive chain rule and cost identity]\label{thm:identity}
Let $\pi$ be any deterministic value-adaptive policy. For every $x$ with
$p(x)>0$, the trajectory $(C_t(x), S_t(x))_{t\le R(x)}$ is well defined,
every conditioning event below has positive probability along it, and
\begin{equation}\label{eq:chain}
\begin{aligned}
p(x)
  &= \prod_{t=1}^{R(x)}p\bigl(x_{S_t(x)}\mid x_{C_t(x)}\bigr),\\
q_\pi(x)
  &= \prod_{t=1}^{R(x)}\prod_{i\in S_t(x)}
     p\bigl(x_i\mid x_{C_t(x)}\bigr).
\end{aligned}
\end{equation}
Consequently
\begin{equation}\label{eq:identity}
\begin{aligned}
\KL{p}{q_\pi}
&= \E_{x\sim p}\!\left[\sum_{t=1}^{R(x)}
   \tc\bigl(x_{S_t(x)}\mid x_{C_t(x)}\bigr)\right]\\
&= \E_{x\sim p}\!\left[\sum_{t=1}^{R(x)}
   \TC\bigl(X_{S_t}\mid x_{C_t}\bigr)\right]\ge0,
\end{aligned}
\end{equation}
with every summand nonnegative in expectation over its round, and equality
$\KL{p}{q_\pi}=0$ iff, $p$-almost surely, every revealed set is conditionally
independent given its realized history.
\end{theorem}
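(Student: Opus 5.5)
The proof follows the trajectory of a fixed $x$ with $p(x)>0$ and telescopes along it. First I would check well-definedness by induction on $t$. Since $C_1=\emptyset$, the state $(C_1,x_{C_1})$ is fixed, and $S_1=\pi(\emptyset)$. Given $C_t$, the next state $(C_{t+1},x_{C_{t+1}})$ is determined by $\pi$ and the coordinates of $x$ already revealed, so the trajectory $(C_t(x),S_t(x))$ is a deterministic function of $x$. Because $C_{t+1}\supseteq C_t\cup S_t$ with $S_t$ nonempty, it terminates after $R(x)\le n$ rounds. Positivity of every conditioning event follows from $p_{C}(x_C)\ge p(x)>0$ for every $C$, by marginalization.

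For the factorization of $p$, I would write $p_{C_{t+1}}(x_{C_{t+1}})=p_{C_t}(x_{C_t})\,p(x_{S_t}\mid x_{C_t})$ and telescope from $p_\emptyset=1$ to $p_{[n]}(x)=p(x)$. For $q_\pi$, the key observation is that the only way the sampler can output $x$ is by following exactly the trajectory of $x$. Inductively, the sampler's state before round $t$ equals $(C_t(x),x_{C_t(x)})$ whenever its partial output agrees with $x$ on $C_t(x)$, since the policy is deterministic and sees only revealed values. Given that state, round $t$ produces $x_{S_t}$ with probability $\prod_{i\in S_t}p(x_i\mid x_{C_t})$ by independent sampling. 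Multiplying these conditional probabilities along the unique path gives \eqref{eq:chain}. No null histories are encountered along it, so the extension convention plays no role. This step is the main subtlety. The adaptive sets are random and coupled to $x$, and the argument rests on the fact that the event $\{$output $=x\}$ is contained in the event that the sampler followed $x$'s branch of the policy tree.

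Taking the log ratio of the two products gives $\log(p(x)/q_\pi(x))=\sum_t \tc(x_{S_t(x)}\mid x_{C_t(x)})$ pointwise. Averaging over $x\sim p$ yields the first equality in \eqref{eq:identity}; all terms are finite because $q_\pi(x)>0$ whenever $p(x)>0$.

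For the second equality I would organize the sum by reveal states rather than by rounds. The sum equals
\[
\sum_{(C,x_C)} \Pr_p\bigl(\text{state }(C,x_C)\text{ is visited}\bigr)\,\E\bigl[\tc(X_S\mid x_C)\bigm| \text{visit}\bigr],
\]
where $S=\pi(C,x_C)$. Whether the state is visited is determined by $x_C$ alone, because the path up to that state only reads coordinates in $C$. Hence, conditional on the visit, $X_{[n]\setminus C}\sim p(\cdot\mid x_C)$, and the inner expectation is $\TC(X_S\mid x_C)$. This conditioning on a stopping-time-like event is the delicate point, and I would state it as a small lemma: the event $\{C_t(X)=C,\,X_C=x_C\}$ is $\sigma(X_C)$-measurable. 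Nonnegativity of each summand in expectation then follows from Lemma~\ref{lem:tc-basic}.

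Finally, $\KL{p}{q_\pi}=0$ holds iff every term $\TC(X_{S_t}\mid x_{C_t})$, which is nonnegative, vanishes on $p$-almost every visited state. By Lemma~\ref{lem:tc-basic}, a term vanishes iff the coordinates of $X_{S_t}$ are mutually independent given $x_{C_t}$, which gives the equality characterization.
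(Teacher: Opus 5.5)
Your proposal is correct and follows essentially the same route as the paper's proof. You telescope the marginals along the unique trajectory to get $p(x)$, and you multiply the within-round product probabilities along that same branch of the policy tree to get $q_\pi(x)$. Your state-indexed decomposition, with the observation that the visit event is $\sigma(X_C)$-measurable, is a more explicit form of the tower-property step the paper uses for the second equality and for the equality condition.
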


\begin{remark}[Randomized policies]\label{rem:randomized}
If $\pi$ uses internal randomness $\omega$ independent of $x$, then
$q_\pi=\E_\omega q_{\pi_\omega}$ and, by convexity of
$\KL{p}{\cdot}$, $\KL{p}{q_\pi}\le \E_\omega \KL{p}{q_{\pi_\omega}}$:
randomization can only reduce the error below the average trajectory cost.
All lower bounds in this paper are stated for deterministic policies;
extending dependency-structural lower bounds to arbitrary randomized
policies remains open.
\end{remark}

Theorem~\ref{thm:identity} shows that $D_\eps$ is equivalently the minimum
expected number of rounds of an adaptive total-correlation-budgeted cover of
$[n]$, whose expected accumulated conditional total correlation is at most
$\eps$.

\subsection{Serial depth is not likelihood}\label{sec:examples}

Fix $h\in(0,1]$ and let both examples below have per-token entropy
$h$ (total $nh$), so that they have \emph{identical} likelihood
profiles.

\begin{example}[Independent coordinates: depth 1]\label{ex:indep}
If $p=\bigotimes_i p_i$ then a single round revealing $[n]$ has
$\TC=0$: $D_0(p)=1$.
\end{example}

\begin{example}[Copy chain: depth 2]\label{ex:copy}
Let $x_1\sim\mathrm{Unif}(V_0)$ with $\log_2|V_0| = nh$ (encoded in the
first coordinate block) and $x_{i+1}=f_i(x_i)$ for fixed bijections
$f_i$. Total correlation of \emph{any} set containing two coordinates is
maximal, yet $D_0(p)=2$: reveal $x_1$, then everything else in one round
(all conditionals are point masses, $\TC=0$). Depth is not total
dependence; it is the structure of \emph{resolvable} dependence.
\end{example}

\begin{example}[NLL-insensitivity]\label{ex:nll}
Example~\ref{ex:indep} has $-\log p(x)=nh$ and depth $1$. Round lower
bounds that scale with $-\log p(x)$ divided by a per-round information
budget \cite{fu2025bits} therefore cannot hold for unconstrained
factorized-reveal policies; they are facts about the
\emph{confidence-thresholded decoder class} (which indeed cannot reveal
high-entropy independent coordinates in parallel), not about the
distribution. Our depth is the complementary, distribution-intrinsic
quantity.
\end{example}

The walk of Section~\ref{sec:ordergap} will supply a third data point at
the same entropy profile (per-token conditional entropy one bit, the case
$h=1$): an $O(\log n)$-round exact schedule --- tight against all
deterministic value-adaptive policies at every fixed error budget
(Theorems~\ref{thm:walkconverse}, \ref{thm:walkfixed}) --- and provably
depth $\Theta(n)$ \emph{for left-to-right schedules}.

\subsection{Bounded-order Markov distributions}\label{sec:markov}

\begin{theorem}[Hierarchical schedule; no serial Markov
chains]\label{thm:markov}
Let $1\le m<n$ and let $p$ be order-$m$ Markov on $V^n$. Then
\[
D_0(p)\;\le\; m\bigl(\lceil \log_2 (n/m)\rceil + 1\bigr).
\]
Moreover the witnessing policy is a \emph{fixed schedule} (no value
adaptivity): recursive bisection by $m$-blocks.
\end{theorem}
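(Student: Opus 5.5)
By Theorem~\ref{thm:identity}, it suffices to exhibit a fixed schedule such that, at every round $t$ and for every positive-probability history $x_{C_t}$, the selected coordinates $X_{S_t}$ are mutually independent given $x_{C_t}$; then every summand $\TC(X_{S_t}\mid x_{C_t})$ vanishes and $\KL{p}{q_\pi}=0$. The basic tool is the global Markov property of the factorization in Definition~\ref{def:markov}. If $C$ contains a window of $m$ consecutive coordinates $W=\{j,\dots,j+m-1\}$, then coordinates strictly to the left of $W$ and strictly to the right of $W$ are conditionally independent given $x_C$. This holds because every potential $\phi_i$ touches coordinates on at most one side of $W$: a window of length $m+1$ cannot straddle a separator of length $m$. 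So the conditional law factorizes into a left part and a right part, as in Lemma~\ref{lem:product-additivity}. More generally, if the revealed set $C$ contains several disjoint separator blocks of length $m$, the unrevealed coordinates split into conditionally independent \emph{gaps}, one between each pair of consecutive separators.

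\textbf{Schedule.} Partition $[n]$ into $K=\lceil n/m\rceil$ consecutive blocks of length at most $m$. View the blocks as leaves of a recursive bisection of $\{1,\dots,K\}$: at level $\ell=0,1,\dots$, each current interval of blocks of length at least $1$ has its middle block designated as the level-$\ell$ separator, and the interval then splits into the left and right remainders. There are at most $\lceil\log_2 K\rceil+1\le\lceil\log_2(n/m)\rceil+1$ levels, since each remainder has at most half the length. Within level $\ell$, reveal the separator blocks coordinate by coordinate: in sub-round $s=1,\dots,m$, reveal the $s$-th coordinate of every level-$\ell$ separator block simultaneously. This gives at most $m$ rounds per level, hence the claimed bound $m(\lceil\log_2(n/m)\rceil+1)$. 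The schedule does not depend on values.

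\textbf{Verification.} At the start of level $\ell$, all separators of levels $<\ell$ are fully revealed full blocks of length $m$, except possibly a short final block; that block sits at an end and therefore does not need to separate. Distinct level-$\ell$ separator blocks lie in distinct gaps delimited by earlier separators or by the ends of the sequence. Hence, given the history at any sub-round, the coordinates revealed in that sub-round lie in distinct gaps. The partially revealed current separators lie inside those same gaps, so the gaps are still separated by the earlier full blocks. The product structure across gaps then makes the selected coordinates mutually independent given $x_{C_t}$, so each round contributes $\TC=0$. Since one coordinate per gap is revealed, no within-gap dependence is discarded.

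\textbf{Main obstacle.} The delicate step is making the separation claim rigorous for conditioning that includes arbitrary revealed coordinates inside the gaps, and handling zero-probability contexts. The plan is to write $p(x_{\bar C}\mid x_C)\propto\prod_i\phi_i$ with the revealed values fixed, group the potentials by the gap containing their unrevealed coordinates, and check that no potential has unrevealed coordinates in two gaps. Only positive-probability histories matter, by the discussion following Definition~\ref{def:policy}. The ceiling and counting edge cases, including $K$ not a power of two and $n/m$ not an integer, should be routine.
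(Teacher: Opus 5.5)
Your proposal is correct and follows essentially the same route as the paper. Both use a value-independent recursive bisection in stages of $m$ rounds, each round revealing one coordinate of every current separator. Each round has zero conditional total correlation because fully revealed $m$-blocks split the unrevealed coordinates into conditionally independent gaps; a window of $m{+}1$ consecutive coordinates cannot straddle a separator, exactly as in the paper's block-separation lemma.

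The only difference is cosmetic. You fix an aligned partition into $\lceil n/m\rceil$ blocks and bisect the block indices, using $\lceil\log_2\lceil n/m\rceil\rceil=\lceil\log_2(n/m)\rceil$. The paper instead takes the central $m$-block of each active segment and reveals segments of length at most $m$ whole. In both versions, empty sub-rounds, such as those caused by your short final block, are simply skipped.
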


\begin{remark}
Theorem~\ref{thm:markov} is the discrete analogue of the L\'evy
midpoint-displacement construction of Brownian motion. It rules out
super-logarithmic depth for fixed-order chain models; larger depth (at a
given error tolerance), if it exists, must come from structures without
recursive small separators. More generally the same argument gives
$D_0(p) = O\bigl(w\cdot \mathrm{(recursion\ depth)}\bigr)$ whenever the
dependency (Markov random field) graph of $p$ admits a recursive family of
separators of size $\le w$ --- e.g.\ $O(w\log n)$ for graphs of treewidth
$w$ via balanced separator trees. We develop the graph-structural theory,
and the matching question of lower bounds via separator-free (expander)
structures, in Section~\ref{sec:spectrum}.
\end{remark}

\subsection{Bernoulli walks and reveal-order separation}\label{sec:ordergap}

Let $p_{\mathrm{walk}}$ be the law of the standard Bernoulli walk:
$x_0:=0$ (a constant, not counted), increments
$z_i:=x_i-x_{i-1}\in\{0,1\}$ i.i.d.\ fair coins, $i=1,\dots,n$. This is
order-1 Markov with per-coordinate conditional entropy $1$ bit.

\begin{theorem}[Order gap]\label{thm:ordergap}
On $p_{\mathrm{walk}}$:
\begin{enumerate}[label=(\roman*),leftmargin=2.2em]
\item (Hierarchical order.) The bisection schedule of
Theorem~\ref{thm:markov} ($m{=}1$) attains $\KL{p}{q}=0$ in
$\lceil\log_2 (n{+}1)\rceil$ rounds.
\item (Left-to-right orders.) Every deterministic schedule whose round-$t$
set is the next $B_t$ unrevealed coordinates in left-to-right order
(``contiguous scan''), with $R$ rounds, satisfies
\begin{align*}
\KL{p_{\mathrm{walk}}}{q}
&=\sum_{t=1}^{R}\sum_{j=1}^{B_t}
  H(\Bin(j,\tfrac12))-n\\
&\ge \frac n2\log_2\frac nR-c_0n,
\end{align*}
where $c_0:=1+\tfrac12\log_2(e/\pi)<1.3$.
In particular, achieving $\KL{p_{\mathrm{walk}}}{q_\pi}\le \delta n$ requires
$R \ge n\,2^{-2(\delta+c_0)}=\Omega(n)$ rounds for constant~$\delta$,
whereas the hierarchical order achieves $\delta=0$ with $O(\log n)$
rounds.
\end{enumerate}
\end{theorem}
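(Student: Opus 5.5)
\emph{Part (i).} I would apply Theorem~\ref{thm:identity}, so it suffices to show that every round of the bisection schedule reveals a set that is conditionally independent given the revealed history. The schedule treats $x_0=0$ as revealed. Round~$1$ reveals $x_n$, a single coordinate, which has zero total correlation. Each later round reveals, for every current maximal gap $(a,b)$ between revealed positions, the midpoint $\lfloor (a+b)/2\rfloor$. Because $p_{\mathrm{walk}}$ is order-$1$ Markov, the segments strictly inside different gaps are conditionally independent given the revealed endpoints. Lemma~\ref{lem:tc-basic} together with this Markov factorization, in the same way as in the proof of Theorem~\ref{thm:markov} with $m=1$, then gives $\TC(X_{S_t}\mid x_{C_t})=0$ for every round. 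Counting rounds: the $n+1$ positions $0,\dots,n$ form an interval whose gap lengths are halved (rounded up) in each round. Hence all positions are covered after $\lceil\log_2(n+1)\rceil$ rounds, counting the round that reveals $x_n$.

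\emph{Part (ii), exact formula.} Suppose a round starts with the prefix $x_{0:k}$ revealed and reveals $x_{k+1},\dots,x_{k+B}$. Conditional on $x_k$, we have $x_{k+j}=x_k+\Bin(j,\tfrac12)$, so $H(X_{k+j}\mid x_{0:k})=H(\Bin(j,\tfrac12))$. The joint conditional law is a bijective image of $B$ fair increments, so $H(X_{k+1:k+B}\mid x_{0:k})=B$. The round's total correlation is therefore $\sum_{j=1}^{B}H(\Bin(j,\tfrac12))-B$, and this value does not depend on $x_{0:k}$. Summing over rounds via Theorem~\ref{thm:identity} and using $\sum_t B_t=n$ gives the stated identity.

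\emph{Part (ii), lower bound.} I would first establish the uniform estimate
\[
H(\Bin(j,\tfrac12))\ge \tfrac12\log_2(\pi j)\qquad\text{for all } j\ge1 .
\]
Given this, Stirling's bound $\ln B!\ge B\ln B-B$ yields
\[
\sum_{j\le B}H(\Bin(j,\tfrac12))\ge \tfrac B2\log_2 B-\tfrac B2\log_2 e+\tfrac B2\log_2\pi .
\]
Convexity of $u\mapsto u\log u$ (Jensen over the $R$ rounds with $\sum_t B_t=n$) gives $\sum_t B_t\log_2 B_t\ge n\log_2(n/R)$. Combining these with the $-n$ term gives
\[
\KL{p_{\mathrm{walk}}}{q}\ge \tfrac n2\log_2\tfrac nR-n\bigl(1+\tfrac12\log_2(e/\pi)\bigr)=\tfrac n2\log_2\tfrac nR-c_0n .
\]
Setting this lower bound $\le\delta n$ and solving gives $R\ge n2^{-2(\delta+c_0)}$. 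Comparing with part~(i) then gives the stated separation.

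The main obstacle is the binomial entropy estimate $H(\Bin(j,\tfrac12))\ge\tfrac12\log_2(\pi j)$ with an explicit constant. The cheap min-entropy bound $-\log_2\max_k\Pr(\Bin=k)$ only yields $\tfrac12\log_2(\pi j/2)$, which is off by exactly $\tfrac12$ bit. I would instead use the asymptotic expansion $H=\tfrac12\log_2(\pi e j/2)-O(1/j)$ with explicit error terms, for example via local-CLT and Stirling bounds on the pmf. This leaves a slack of $\tfrac12\log_2(e/2)$ for large $j$, and I would check the finitely many small $j$ directly; $j=1$ holds since $1\ge\tfrac12\log_2\pi$. If the constant proves delicate, I would accept a slightly weaker $c_0$ and adjust.
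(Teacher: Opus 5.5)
Your part (ii) follows the paper step for step: the value-free round cost from the Markov property, the bound $H(\Bin(j,\tfrac12))\ge\tfrac12\log_2(\pi j)$, $\log_2 B!\ge B\log_2(B/e)$, then Jensen. The round count in part (i), however, is wrong for general $n$.

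Your schedule spends round $1$ on $x_n$ alone and then bisects the closed gap $(0,n)$, whose interior has $n-1$ coordinates. Midpoint bisection of $\ell$ unrevealed coordinates takes $\lceil\log_2(\ell+1)\rceil$ rounds. Your total is therefore $1+\lceil\log_2 n\rceil=\lceil\log_2(2n)\rceil$, which equals $\lceil\log_2(n+1)\rceil$ only when $n$ is a power of two. For $n=3$ you reveal $x_3$, then $x_1$, then $x_2$, which is three rounds; revealing $x_2$ and then $\{x_1,x_3\}$ takes two. For $n=7$ you use four rounds instead of three.

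The fix is the schedule the statement actually names: bisect $[1,n]$ directly and keep the rightmost segment open. Given the history, the open segment depends only on its left anchor (Markov property). It is therefore conditionally independent of the closed segments, and one reveal per segment still has zero total correlation. Revealing the central coordinate of a segment with $\ell$ unrevealed coordinates, open or closed, leaves pieces of length at most $\lceil(\ell-1)/2\rceil$. Hence $L(\ell)=1+L(\lceil(\ell-1)/2\rceil)$ with $L(0)=0$. Using $\lceil\log_2\lceil x\rceil\rceil=\lceil\log_2 x\rceil$ for $x\ge1$, this gives $L(n)=\lceil\log_2(n+1)\rceil$. (The paper's $n=8$ illustration also reveals $x_8$ first; that is harmless there only because $8$ is a power of two.)

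For the binomial estimate, your local-CLT route with explicit Stirling errors and a finite check of small $j$ is viable, since the slack $\tfrac12\log_2(e/2)\approx0.22$ bits is comfortable. It is only sketched, though, and the fallback of weakening $c_0$ would not prove the statement as written. The paper avoids all of this with collision entropy, which sits between the min-entropy bound you rejected and Shannon entropy. It uses $H\ge H_2=-\log_2\sum_k\Pr[\Bin(j,\tfrac12)=k]^2$, and Vandermonde's identity $\sum_k\binom jk^2=\binom{2j}{j}$ gives $H\ge\log_2\bigl(4^j/\binom{2j}{j}\bigr)$. The ratio $a_j:=\binom{2j}{j}\sqrt{\pi j}\,4^{-j}$ satisfies $(a_{j+1}/a_j)^2=1+\frac{1}{4j(j+1)}>1$ and tends to $1$, so $a_j<1$ for every $j\ge1$. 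That is exactly $H(\Bin(j,\tfrac12))\ge\tfrac12\log_2(\pi j)$, with no asymptotics or case analysis, and it is where the $\tfrac12\log_2\pi$ in $c_0$ comes from.
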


\begin{remark}[Practical reading]
Semi-autoregressive block decoding --- the standard deployment mode of
current diffusion language models \cite{blockdiff2025} --- is a
contiguous scan. On
chain-structured data it is exponentially suboptimal \emph{in the reveal
order alone}: at equal round budget $R=\lceil\log_2 (n{+}1)\rceil$ the scan pays
$\Omega\bigl(n\log\frac{n}{\log n}\bigr)$ bits while the hierarchical
order pays zero. The experiments of Section~\ref{sec:empirical} measure
this gap on real text with learned conditionals.
\end{remark}

The scan bound of Theorem~\ref{thm:ordergap}(ii) constrains one schedule
class. The following converse shows that on $p_{\mathrm{walk}}$ \emph{no} deterministic
value-adaptive policy beats the hierarchical schedule at zero error, up to
an additive
$O(\log\log n)$ --- a converse on a distribution where adaptivity genuinely
has room to act (histories reveal which conditionals are cheap).

\begin{theorem}[Zero-error walk converse]\label{thm:walkconverse}
Let $\mathrm{maxrun}(x)$ be the length of the longest constant run of
increments of $x$. For every deterministic value-adaptive policy $\pi$ with
$\KL{p_{\mathrm{walk}}}{q_\pi}=0$,
\begin{align*}
\E_p[R]
&\ge\log_2 n-\E_p\!\left[\log_2(1+\mathrm{maxrun}(X))\right]-1\\
&\ge\log_2 n-\log_2\log_2 n-O(1),\qquad n\ge2.
\end{align*}
With Theorem~\ref{thm:ordergap}(i): $D_0(p_{\mathrm{walk}})=\Theta(\log n)$,
and the hierarchical schedule is optimal at zero error up to an additive
$O(\log\log n)$.
\end{theorem}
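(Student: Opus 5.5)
The plan is to convert zero error into a combinatorial constraint on every trajectory, and then count segments. By Theorem~\ref{thm:identity}, $\KL{p_{\mathrm{walk}}}{q_\pi}=0$ forces the following: for $p$-almost every $x$ (equivalently every $x$ with $p(x)>0$, since the space is finite), each revealed set $S_t(x)$ is mutually independent given $x_{C_t(x)}$. In particular any two of its coordinates are pairwise independent, by Lemma~\ref{lem:tc-basic}.

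\emph{Step 1: the conditional law given a revealed set.} Fix a context $C$, put $a_0:=0$, and list the revealed positions $a_0<a_1<\dots<a_k$ of $C\cup\{0\}$. The walk is order-1 Markov. Hence, given $x_C$, the unrevealed coordinates split into independent \emph{gaps}: the bridges $(a_{\ell},a_{\ell+1})$ and the open tail $(a_k,n]$. This is Lemma~\ref{lem:product-additivity} applied conditionally, or the Markov property directly.
\begin{itemize}
\item Call a bridge gap \emph{degenerate} if $x_{a_{\ell+1}}-x_{a_\ell}\in\{0,\,a_{\ell+1}-a_\ell\}$. In that case all interior coordinates are point masses.
\item Otherwise, the bridge with $L=a_{\ell+1}-a_\ell$ steps and $K$ ones, $0<K<L$, gives interior values $X_i-x_{a_\ell}$ with hypergeometric law. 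For $i<j$ inside it,
\[
\mathrm{Cov}(X_i,X_j\mid x_C)=\frac{i'(L-j')K(L-K)}{L^2(L-1)}>0 ,
\]
with $i',j'$ the offsets of $i,j$ from $a_\ell$.
\item In the open tail, $X_j=X_i+(\text{independent nonconstant})$, so $\mathrm{Cov}=\mathrm{Var}(X_i\mid x_C)>0$.
\end{itemize}
Hence zero error forces the following rule: in every round, at most one coordinate is revealed inside each nondegenerate gap (the tail counts as nondegenerate). Degenerate gaps may be revealed freely.

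\emph{Step 2: counting.} Fix $x$ with $p(x)>0$, and track the nondegenerate gaps at the start of each round. Initially there is one, namely the tail $[1,n]$. Revealing one point in a nondegenerate gap splits it into at most two subgaps. Subgaps of a degenerate gap remain degenerate. Therefore the number of gaps \emph{created by splitting a nondegenerate gap} during rounds $1,\dots,R-1$ is at most $2^{R-1}$ (each doubling step) plus lower-order terms. I will phrase this as a bound of at most $2^{R}$ maximal gaps at the start of round $R$, where a maximal gap is a nondegenerate gap or a maximal degenerate gap inherited from a nondegenerate parent.

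In round $R$ everything still unrevealed must be revealed. So every nondegenerate gap has at most one interior point, i.e.\ length at most $2$. Every maximal degenerate gap has constant increments, so its length is at most $\mathrm{maxrun}(x)$. These gaps cover $[n]$, which gives
\[
n\le 2^{R(x)}\bigl(1+\mathrm{maxrun}(x)\bigr).
\]
Taking $\log_2$ and then $\E_p$ yields the first inequality; the $-1$ absorbs the slack in the gap count. The main obstacle is getting this bookkeeping exactly right. Degenerate gaps can shatter into many pieces when revealed, so the potential must count maximal degenerate ancestors rather than current gaps. I would verify the claim "at most $2^{t}$ maximal gaps after $t$ rounds" by induction on $t$, carefully tracking the tail gap and boundary cases.

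\emph{Step 3: the $\log\log$ form.} By Jensen, $\E\log_2(1+\mathrm{maxrun})\le\log_2(1+\E\,\mathrm{maxrun})$. The standard longest-run estimate for $n$ fair coins is $\E\,\mathrm{maxrun}\le\log_2 n+O(1)$. It follows from a union bound: $\Pr(\mathrm{maxrun}\ge \log_2 n+s)\le 2^{1-s}$, summed over $s$. This gives $\log_2 n-\log_2\log_2 n-O(1)$. Combined with Theorem~\ref{thm:ordergap}(i), it yields $D_0(p_{\mathrm{walk}})=\Theta(\log n)$ and the additive $O(\log\log n)$ optimality.
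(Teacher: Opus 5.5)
Your proof is correct and shares the paper's skeleton: the bridge structure of the conditional law, at most one reveal per nondegenerate segment, the doubling bound $s_{t+1}\le 2s_t$, degenerate segments lying inside constant runs, and the union bound plus Jensen for $\mathrm{maxrun}$. Where you differ is in how degenerate segments are handled.

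The paper first replaces $\pi$ by a full-clearing policy $\pi'$ (Lemma~\ref{lem:fullclear}), built by simulating $\pi$ on a virtual context. It then has to check that $\pi'$ is a legal state policy with zero cost and $R'\le R$. Once that is done, degenerate segments disappear one round after they appear, so the paper can simply count all segments ever created.

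You work with $\pi$ directly and charge every degenerate reveal to its maximal degenerate ancestor, that is, the degenerate child produced when a nondegenerate segment is split. This is legitimate for two reasons. A segment's conditional law depends only on its own anchors, so nondegenerate segments are born only from nondegenerate parents, and $s_t\le 2^{t-1}$ holds without any normalization of the policy. Also, every descendant of a degenerate segment stays inside one constant run.

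Your route is shorter and dispenses with the reduction lemma entirely. The paper's route buys a normal form for zero-error policies, never worse on any trajectory, in which the segment process is cleaner.

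When you write it up, make the bookkeeping exact:
\begin{itemize}
\item Rounds $1,\dots,R-1$ contain at most $\sum_{t<R}s_t\le 2^{R-1}-1$ single reveals, which create at most $2^R-2$ children.
\item The root together with these children includes every nondegenerate gap present at the start of round $R$ (one coordinate each) and every maximal degenerate gap (at most $\mathrm{maxrun}(x)$ coordinates each).
\item The maximal gaps do not by themselves cover $[n]$, so the earlier single reveals must be added, not assumed covered.
\end{itemize}
This gives $n\le 2^{R-1}-1+(2^R-1)\,\mathrm{maxrun}(x)\le 2^{R}(1+\mathrm{maxrun}(x))$. That is a factor of two sharper than the paper's $n\le 2^{R+1}(1+\mathrm{maxrun}(x))$.
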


\begin{remark}
The proof has to fight adaptivity at exactly one point: a policy can watch
for realized histories that make segments degenerate and clear them for
free. The maxrun term is the exact price of that freedom, and it is what
separates the walk from the permutation distribution of
Section~\ref{sec:perm}, where no history is cheaper than any other.
The fractional policy-tree depths induced by trajectory-dependent free
clearing are examined numerically in Section~\ref{sec:numerical}.
\end{remark}

At a positive budget the free-clearing accounting no longer suffices:
near-degenerate segments are cheap rather than free, and a policy can buy
extra splits. Appendix~\ref{app:walk} shows that cheap reveals make little
depth progress and derives an amortized cost--progress inequality from a
uniform information bound for bulk reveals.

\begin{theorem}[Fixed-error walk converse]\label{thm:walkfixed}
There are absolute constants $c_2,c_3$ such that for every $\eps\ge0$ and
every deterministic value-adaptive policy $\pi$ with
$\KL{p_{\mathrm{walk}}}{q_\pi}\le\eps$,
\[
\E_p[R]\;\ge\;\tfrac13\log_2 n-\tfrac13\log_2\log_2 n-c_2-c_3\,\eps
\qquad(n\ge2).
\]
In particular $D_\eps(p_{\mathrm{walk}})=\Theta(\log n)$ for every fixed
$\eps$: the hierarchical schedule is optimal up to constant factors at
every constant error budget.
\end{theorem}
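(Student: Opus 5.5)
The plan is to convert the error budget into a per-trajectory information budget through Theorem~\ref{thm:identity}, and then show that every round which makes real ``depth progress'' on the walk costs a constant amount of conditional total correlation.

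\textbf{Progress measure.} At state $(C,x_C)$ the unrevealed coordinates form maximal gaps between revealed positions, with the convention $x_0=0$ and the right end free. By the Markov property, and by Lemma~\ref{lem:product-additivity} applied conditionally, each gap is an independent bridge (or, for the last gap, a free walk). Its conditional law depends only on its length $\ell$ and its endpoint increment $d\in[0,\ell]$. I will use the potential
\[
\Phi=\max_{\text{gaps}}\log_2(1+\ell),
\]
or a smoothed version $\log_2\sum_{\text{gaps}}(1+\ell)^{\alpha}$, which starts near $\log_2 n$ and must reach $0$. The aim is to show that a round reducing $\Phi$ by $\Delta$ either pays total correlation at least $c(\Delta-O(1))$ or is ``cheap''.

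\textbf{Key per-round bound.} For a bridge of length $\ell$ with $d$ ones, take the revealed set $S$ inside the gap. If $S$ splits the gap into pieces that are all at most a third of $\ell$, then $S$ contains at least two points at distance $\ge\ell/3$ from each other and from the endpoints. I will lower-bound their conditional mutual information by an explicit hypergeometric calculation: it is order one whenever the bridge is non-degenerate, that is, when $\min(d,\ell-d)$ is large. Lemma~\ref{lem:tc-basic} then shows that the $\TC$ of the round is at least this amount. For the depth to shrink by a factor of $3$ per round one needs such a split. Hence a lossless round reduces $\log_2\ell$ only by $\log_2 3$ or less unless it pays; this is where the factor $\tfrac13$ (base-$3$ versus base-$2$ halving, roughly) should come from.

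\textbf{Degenerate and near-degenerate gaps.} These are the main obstacle. If $d$ or $\ell-d$ is small, the bridge is nearly deterministic, and cheap large reveals are possible. This is the adaptive freedom already seen in Theorem~\ref{thm:walkconverse}. I will try an amortization. The entropy remaining in a gap, $\log\binom{\ell}{d}$, bounds the progress a cheap reveal can buy. Under $p$, a gap of length $\ell$ reached at depth $t$ has $\min(d,\ell-d)\ll\ell$ only with probability controlled by a binomial tail. As in the zero-error converse, the total possible free progress is bounded by $\E\log_2(1+\mathrm{maxrun})\le\log_2\log_2 n+O(1)$, which gives the $\log\log n$ loss term. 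Combining the pieces, I aim for a per-trajectory inequality
\[
\text{progress}_t\le 3+c\,\tc_t+\text{(free part)},
\]
which I will sum over rounds and take in expectation. Using $\E\sum_t\tc_t=\KL{p}{q_\pi}\le\eps$, this yields
\[
\log_2 n\le 3\E_p[R]+c'\eps+\log_2\log_2 n+O(1).
\]
The delicate point is that the bulk-reveal information bound must be uniform over the adaptively chosen $S$ inside non-degenerate bridges. I would first establish a uniform lower bound on $I(X_i;X_j\mid x_C)$ for well-separated $i,j$ in a bridge with $d/\ell\in[\delta,1-\delta]$, via a local CLT for hypergeometric laws, and then handle the intermediate regime by a dyadic case split on $\min(d,\ell-d)$.
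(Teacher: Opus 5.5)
Your outline matches the paper's: the cost identity becomes a per-trajectory sum, the potential is $\log_2(1+\ell)$ of a followed segment, there is a bulk-pair information bound, and degenerate segments are charged to $\mathrm{maxrun}$. The gap is the step that produces the additive form in the statement.

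\textbf{The per-round bound is only a threshold.} Write $Y_t:=\TC(X_{S_t}\mid x_{C_t})$; the per-round inequality must use this conditional quantity, not the realized $\tc_t$, which can be negative. Write $\delta_t$ for the drop of $\log_2(1+\ell)$ in round $t$. One well-separated bulk pair shows that a round with $\delta_t$ above a constant has $Y_t\ge c$. Nothing stops a single round from achieving $\delta_t\gg1$ at cost exactly $c$. From a threshold you only get $\log_2 n\lesssim 3R+(\log_2 n)\sum_tY_t/c$, i.e.\ $\E_p[R]\ge\frac13(1-\eps/c)\log_2 n-\cdots$. This is vacuous for $\eps\ge c$ and is not of the form $-c_3\eps$.

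The inequality $\delta_t\le 3+C\,Y_t$ that you aim for needs the cost to grow with the progress. The paper gets it in two moves.
\begin{enumerate}[leftmargin=2em]
\item If $m$ reveals fall in the middle half of the segment, a reveal-free stretch of length about $\ell/(2(m+1))$ survives, so $\delta_t\le 2+\log_2(m+1)$. At $m\le1$ this is where the $3$, and hence the $\frac13$, comes from, not a base-$3$ split.
\item Order the middle reveals $u_1<\dots<u_m$. The chain rule gives $Y_t\ge\sum_{i\ge2}I(X_{u_i};X_{u_{i-1}}\mid x_{C_t})\ge c_1(m-1)_+$.
\end{enumerate}
Consecutive middle reveals may be adjacent, so this uses the pair bound for \emph{all} middle-half pairs, whatever their separation. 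That bound holds because $\rho^2=\frac{u(\ell-v)}{v(\ell-u)}\ge\frac19$ there. If you want to keep your well-separated lemma, you can instead take roughly evenly spaced middle reveals forced by the progress, pair them at distance about $\ell/4$, and use superadditivity $\TC(X_{A\cup B})\ge\TC(X_A)+\TC(X_B)$ over disjoint pairs. Some such argument has to be supplied.

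Your geometric claim is also false as stated. For $\ell=9$ and $S=\{2,5,8\}$, every piece has size at most $3$, yet only one revealed point is at distance at least $3$ from both anchors.

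\textbf{Near-degenerate gaps.} Your plan here rests on a misconception. Bulk-pair information is bounded below by an absolute constant for \emph{every} nondegenerate bridge $0<d<\ell$, including $d=1$. For $d$ below an absolute threshold, the paper splits $\operatorname{Cov}(X_u,X_v)\ge d/32$ into at most $d^2$ covariances of threshold indicators and applies Pinsker to the largest. Above the threshold it uses the Gaussian entropy argument. Hence only exactly degenerate segments are free, and they have length at most $\mathrm{maxrun}(x)$.

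Your proposed amortization, ``$\log\binom{\ell}{d}$ bounds the progress a cheap reveal can buy,'' is not a valid mechanism. A degenerate gap has zero entropy yet is cleared for free in one round. Also, probability statements about ``a gap of length $\ell$ reached at depth $t$'' concern policy-dependent events, which an adaptive policy can bias.

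If you prefer to avoid the small-$d$ lemma, make the binomial-tail idea policy-independent. By Hoeffding and a union bound over all $O(n^2)$ intervals, with probability $1-O(1/n)$ every interval of length at least $C\log_2 n$ has increment fraction in $[\frac14,\frac34]$. Stop following the segment once it is shorter than $\max\{\ell_0,C\log_2 n\}$. This costs only $\log_2\log_2 n+O(1)$. It then requires the bulk-pair bound only for balanced bridges and the open segment, which is arguably simpler than the paper's route.
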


\subsection{Uniform random permutations}\label{sec:perm}

Let $p_{\mathrm{perm}}$ denote the uniform distribution on the $n!$
permutations of $[n]$: the sequence $(X_1,\dots,X_n)$ is a uniformly random
bijection, i.e.\ sampling without replacement from the alphabet $V=[n]$.
This is a natural, fully exchangeable form of global dependence: every
coordinate excludes every other.

If $M$ coordinates remain and a round reveals $B$ of them, define
\[
g(M,B):=\log_2\frac{M^B}{(M)_B}.
\]
Exchangeability makes this quantity the realized and conditional expected
cost of the round, independently of the selected positions and their
observed values.

\begin{theorem}[Value-adaptive $\Omega(n)$ lower bound at every fixed
error]\label{thm:perm}
For every $\eps\ge0$ and every deterministic value-adaptive policy $\pi$
with $\KL{p_{\mathrm{perm}}}{q_\pi}\le\eps$,
\[
\E_{p_{\mathrm{perm}}}[R]\;\ge\;\frac{n}{1+2\eps\ln 2}.
\]
Consequently $D_\eps(p_{\mathrm{perm}})\ge n/(1+2\eps\ln2)$, which is
$\Omega(n)$ for every fixed $\eps$.
\end{theorem}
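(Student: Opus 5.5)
The plan is to combine the cost identity of Theorem~\ref{thm:identity} with a per-round lower bound on $g(M,B)$ that is quadratic in the batch size, and then to convert the resulting bound on $\sum_t B_t^2$ into a bound on the number of rounds via Cauchy--Schwarz and Jensen.

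\textbf{Step 1: reduce the KL divergence to batch sizes.} Fix a deterministic value-adaptive policy $\pi$ and a permutation $x$ in the support of $p_{\mathrm{perm}}$. Along its trajectory, round $t$ reveals $B_t=|S_t(x)|$ of the $M_t=n-|C_t(x)|$ remaining coordinates. Given any history, the unrevealed coordinates form a uniformly random arrangement of the $M_t$ unused symbols. Hence the realized increment is
\[
\tc\bigl(x_{S_t}\mid x_{C_t}\bigr)=\log_2\frac{1/(M_t)_{B_t}}{M_t^{-B_t}}=g(M_t,B_t),
\]
whatever positions are selected and whatever values are observed. This is the exchangeability remark stated before the theorem. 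By~\eqref{eq:identity},
\[
\KL{p_{\mathrm{perm}}}{q_\pi}=\E_p\Bigl[\sum_{t=1}^{R}g(M_t,B_t)\Bigr],
\]
so the argument is pathwise in the batch sizes. Value adaptivity only makes the sequence $(B_t)$ random.

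\textbf{Step 2: a quadratic per-round lower bound.} Write
\[
g(M,B)=-\sum_{j=0}^{B-1}\log_2\Bigl(1-\frac{j}{M}\Bigr).
\]
Applying $-\ln(1-u)\ge u$ termwise gives
\[
g(M,B)\ge \frac{1}{\ln2}\sum_{j=0}^{B-1}\frac{j}{M}=\frac{B(B-1)}{2M\ln 2}\ge\frac{B(B-1)}{2n\ln2},
\]
since $M\le n$. The last step, which discards the dependence on $M$, is the lossy step. The target constant $1+2\eps\ln2$ indicates that this crude form is exactly what is needed.

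\textbf{Step 3: Cauchy--Schwarz and Jensen.} On each trajectory $\sum_t B_t=n$, so Cauchy--Schwarz gives $\sum_t B_t^2\ge n^2/R$. Therefore
\[
\sum_{t=1}^{R} g(M_t,B_t)\ge\frac{1}{2n\ln2}\Bigl(\frac{n^2}{R}-n\Bigr).
\]
Taking $\E_p$ and using $\KL{p_{\mathrm{perm}}}{q_\pi}\le\eps$ yields $\E_p[n/R]\le 1+2\eps\ln2$. Since $r\mapsto 1/r$ is convex, Jensen's inequality gives $\E_p[1/R]\ge 1/\E_p[R]$. Hence
\[
\frac{n}{\E_p[R]}\le 1+2\eps\ln 2,
\]
which is the claim. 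The bound on $D_\eps$ then follows by minimizing over $\Pi_\eps(p_{\mathrm{perm}})$.

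\textbf{Main obstacle.} The delicate point is Step~2. A bound that is merely linear in $B-1$ is false: pairs revealed late, with $M$ large, cost only about $1/(M\ln 2)$. So the argument must keep the quadratic $B(B-1)$ and pay the uniform factor $1/n$. It must also apply Jensen after Cauchy--Schwarz, because the round count $R$ is random under adaptivity. I would check that the order of these two steps is correct, and that the identity of Step~1 holds for every history, so that no conditioning on the adaptive choice of $S_t$ is lost.
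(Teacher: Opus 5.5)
Your proposal is correct and matches the paper's proof step for step. Exchangeability makes each realized round cost equal to $g(M_t,B_t)$; the bound $-\ln(1-u)\ge u$ with $M\le n$ gives $g(M,B)\ge B(B-1)/(2n\ln2)$; Cauchy--Schwarz gives the per-trajectory bound $(n/R-1)/(2\ln2)$; and Jensen for $r\mapsto1/r$ turns $\E_p[1/R]$ into $1/\E_p[R]$. One cosmetic slip in your closing remark: the pairs that cost only about $1/(M\ln2)$ are those revealed \emph{early}, when $M$ is large, not late.
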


\begin{remark}[Cost invariance on $p_{\mathrm{perm}}$]
\label{rem:permflat}
The realized cost of a trajectory is a function of its batch-size sequence
alone. Neither position selection nor value adaptivity changes it.
Minimizing KL at a given round budget on $p_{\mathrm{perm}}$ is therefore a
one-dimensional problem over integer compositions of $n$.  This invariance
both yields a converse for arbitrary deterministic policies and makes the
hard-cap tradeoff exactly computable through the characterization below. It also
implies that $p_{\mathrm{perm}}$ cannot \emph{separate} adaptive from non-adaptive
strategies; candidates for that separation need history-dependent round
costs.
\end{remark}

\begin{remark}[Alphabet size]\label{rem:permalphabet}
The alphabet of $p_{\mathrm{perm}}$ has size $n$ (the value range of the
walk of Section~\ref{sec:ordergap} likewise grows with $n$). Within the
model of Section~\ref{sec:model} --- a finite alphabet, with no uniformity
requirement in $n$ --- Theorem~\ref{thm:perm} establishes the existence of
robustly deep distributions; Section~\ref{sec:balanced} pushes this to a
\emph{two-letter} alphabet at depth $\Theta(\log^2n)$, and
Section~\ref{sec:onehot} gives binary depth $\Theta(\sqrt n)$. Whether a
binary family can have linear robust depth remains open. Since
$\overline D_\eps\ge D_\eps$, the bound applies to the operational depth as
well.
\end{remark}

Because on $p_{\mathrm{perm}}$ the cost of a trajectory is a function of
its batch-size composition alone (Remark~\ref{rem:permflat}), the hard-cap
round--error tradeoff is exactly computable. For a composition
$(B_1,\dots,B_R)$ of $n$ into $R$ positive parts, write
$M_t=\sum_{s\ge t}B_s$ and define
\[
\cE_{n,R}:=
\min_{\substack{B_1+\cdots+B_R=n\\ B_t\ge1}}
\sum_{t=1}^{R}g(M_t,B_t),
\qquad 1\le R\le n .
\]

\begin{theorem}[Exact hard-cap characterization]\label{thm:hardcap}
For every $1\le R\le n$,
\[
\inf_{\substack{\pi\text{ deterministic value-adaptive}\\
                  \max_xR(x)\le R}}
\KL{p_{\mathrm{perm}}}{q_\pi}
=\cE_{n,R},
\]
attained by a fixed (non-adaptive) schedule. Consequently
$\overline D_\eps(p_{\mathrm{perm}})=\min\{R:\cE_{n,R}\le\eps\}$: under a
hard round cap, position selection and value adaptivity buy nothing.
\end{theorem}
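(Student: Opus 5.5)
The plan has two inequalities and one cost computation. First, I will establish the cost invariance of Remark~\ref{rem:permflat} directly. Take a state $(C,x_C)$ of positive probability with $M=n-|C|$ unrevealed coordinates. Conditionally on it, $X_{[n]\setminus C}$ is a uniform bijection onto the $M$ unused values. So every unrevealed coordinate has the uniform conditional marginal on those $M$ values, and for any $S$ with $|S|=B$ and any feasible $x_S$ (distinct unused values) we get $p(x_S\mid x_C)=1/(M)_B$. Hence the realized increment is $\tc(x_S\mid x_C)=\log_2\bigl(M^B/(M)_B\bigr)=g(M,B)$. It depends only on $(M,B)$, not on which positions are chosen or which values were seen.

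Plugging this into Theorem~\ref{thm:identity}, for every deterministic value-adaptive policy and every $x$ with $p(x)>0$, the inner sum equals $\sum_{t\le R(x)}g(M_t(x),B_t(x))$. Here $(B_1(x),\dots,B_{R(x)}(x))$ is the batch-size composition of $n$ along the trajectory of $x$, and $M_t(x)=\sum_{s\ge t}B_s(x)$ by definition.

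\textbf{Upper bound.} Take a minimizing composition $(B_1,\dots,B_R)$ for $\cE_{n,R}$ and the fixed schedule that reveals the next $B_t$ coordinates (say, left to right) in round $t$. It uses exactly $R$ rounds on every $x$. Its cost is the same for every $x$ in the support, so the identity gives $\KL{p}{q}=\cE_{n,R}$.

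\textbf{Lower bound.} Take any policy with $\max_xR(x)\le R$. For each support point $x$, the trajectory composition has $R(x)\le R$ parts. I need a monotonicity lemma: $\cE_{n,R'}\ge\cE_{n,R}$ whenever $R'\le R$. To prove it, I will show that splitting a batch never increases cost. The key identity is
\[
g(M,B_1+B_2)=g(M,B_1)+\log_2\frac{M^{B_2}}{(M-B_1)_{B_2}}\ge g(M,B_1)+g(M-B_1,B_2),
\]
which holds because $M^{B_2}\ge (M-B_1)^{B_2}$. Iterating from a composition with $R'$ parts, I split a part of size at least $2$ until there are $R$ parts. This is possible whenever $R\le n$, and the cost does not increase along the way. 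So each support point has realized cost at least $\cE_{n,R(x)}\ge\cE_{n,R}$. Averaging over $x\sim p$ through Theorem~\ref{thm:identity} gives $\KL{p}{q_\pi}\ge\cE_{n,R}$. Because the bound is pointwise in $x$, adaptivity in the batch sizes cannot help: a mixture of compositions is no better than the best single one.

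\textbf{Consequence and main obstacle.} The formula $\overline D_\eps=\min\{R:\cE_{n,R}\le\eps\}$ then follows from the definition of $\overline D_\eps$ together with the monotonicity of $\cE_{n,R}$ in $R$. The only delicate points are the splitting inequality and making sure the identity's per-trajectory form is used on the support only. That second point is handled by the null-history convention already stated in Section~\ref{sec:model}. I expect the monotonicity/splitting step to be the main thing to check carefully. Everything else is a direct substitution into Theorem~\ref{thm:identity}.
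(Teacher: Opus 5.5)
Your proposal is correct and follows the paper's proof. The paper uses the same steps: cost invariance as in Step~1 of Theorem~\ref{thm:perm}, the splitting identity (which is exactly Lemma~\ref{lem:split}, $g(M,B)-g(M,A)-g(M-A,B-A)=(B-A)\log_2\frac{M}{M-A}$) to get that $\cE_{n,R}$ is nonincreasing in $R$, a lower bound holding pointwise on the support and averaged through Theorem~\ref{thm:identity}, and a minimizing composition run as a fixed schedule to attain the infimum.
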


\begin{proposition}[Structure and computation]\label{prop:dporder}
(i) For a fixed multiset of batch sizes, arranging them in nonincreasing
order minimizes the cost; some optimizer of $\cE_{n,R}$ has
$B_1\ge\cdots\ge B_R$. (ii) With $F_{n,R}:=\cE_{n,R}+\log_2 n!$,
\begin{align*}
F_{n,1}&=n\log_2 n,\\
F_{n,R}&=\min_{R-1\le m\le n-1}
 \bigl\{(n{-}m)\log_2 n+F_{m,R-1}\bigr\},
\end{align*}
so the table $\{\cE_{n,R}\}_{R\le R_0}$ is computable in $O(R_0 n^2)$ time.
\end{proposition}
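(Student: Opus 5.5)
The plan is to rewrite the objective in telescoped form, from which both parts follow almost at once. For a composition $(B_1,\dots,B_R)$ of $n$, set $M_t=\sum_{s\ge t}B_s$. Then $M_1=n$, $M_{t+1}=M_t-B_t$, and $M_{R+1}:=0$. Each batch cost is
\[
g(M_t,B_t)=B_t\log_2 M_t-\log_2 (M_t)_{B_t},
\]
and the falling factorials telescope:
\[
\prod_{t=1}^R (M_t)_{B_t}=\prod_{t=1}^R \frac{M_t!}{M_{t+1}!}=n!.
\]
Hence the total cost is $\sum_t B_t\log_2 M_t-\log_2 n!$. Minimizing the cost over compositions is therefore the same as minimizing
\[
\Phi(B_1,\dots,B_R):=\sum_{t=1}^R B_t\log_2 M_t ,
\]
and $F_{n,R}=\cE_{n,R}+\log_2 n!$ is exactly $\min\Phi$ over compositions of $n$ into $R$ positive parts.

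For (ii), I would condition on the first batch. Write $B_1=n-m$, so $m=M_2$ coordinates remain. The first round contributes $(n-m)\log_2 n$. The remaining parts $(B_2,\dots,B_R)$ form an arbitrary composition of $m$ into $R-1$ positive parts, and their $M$-values are exactly the tail sums computed as if $m$ were the starting length. So the remaining contribution is minimized by $F_{m,R-1}$. The constraints $B_1\ge1$ and "$R-1$ positive parts summing to $m$" give exactly $R-1\le m\le n-1$. The base case $R=1$ forces $B_1=n$, giving $F_{n,1}=n\log_2 n$. For the complexity claim, I would fill the table $F_{m,r}$ for $m\le n$ and $r\le R_0$ by this recursion. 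There are $O(R_0 n)$ entries, each a minimum over $O(n)$ candidates, which gives $O(R_0n^2)$ time.

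For (i), I would use an adjacent-exchange argument. Suppose $B_t=a<b=B_{t+1}$ and put $M=M_t\ge a+b$. Swapping the two batches leaves every $M_s$ with $s\ne t+1$ unchanged, and changes only the terms at positions $t$ and $t+1$. Define $\varphi(u):=\log_2\frac{M}{M-u}$ for $0\le u<M$. The original minus the swapped contribution equals
\[
a\log_2 M+b\log_2(M-a)-b\log_2 M-a\log_2(M-b)=a\varphi(b)-b\varphi(a).
\]
The function $\varphi$ is convex with $\varphi(0)=0$, so $\varphi(u)/u$ is nondecreasing. Hence $a\varphi(b)\ge b\varphi(a)$, and the swap does not increase $\Phi$. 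Bubble-sorting any optimizer with such swaps therefore yields a nonincreasing optimizer. The same argument shows that, for a fixed multiset of batch sizes, the nonincreasing arrangement is minimal.

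The only delicate step is checking that the swap touches nothing but the $t$ and $t+1$ terms and that all arguments stay in the domain of $\varphi$. The first holds because $M_{t+2}=M-a-b$ is unchanged by the swap. The second holds because $M-b\ge a\ge1$ and $M-a\ge b\ge 1$, so all logarithms are finite. Everything else is routine.
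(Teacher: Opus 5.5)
Your proposal is correct and follows the paper's proof. Both parts rest on the telescoping identity $\prod_t (M_t)_{B_t}=n!$: conditioning on the first batch gives (ii), and an adjacent swap gives (i). Your observation that $\varphi(u)/u$ is nondecreasing, by convexity with $\varphi(0)=0$, is the same fact as the paper's statement that $h(u)=\ln(1-u)/u$ is strictly decreasing.
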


\begin{theorem}[Optimal fixed-round error rate]\label{thm:fixedR}
Define $\rho_1=0$ and $\rho_R=e^{\rho_{R-1}-1}$. For every fixed $R$, as
$n\to\infty$,
\[
\cE_{n,R}\;=\;\frac{n(1-\rho_R)}{\ln 2}\;-\;\tfrac12\log_2(2\pi n)+O_R(1),
\]
and the normalized optimal profiles converge to the unique continuous
optimizer, whose remaining-mass fractions are
$m_t^*=\prod_{j=R-t+2}^{R}\rho_j$. Moreover $1-\rho_R\sim 2/R$ as
$R\to\infty$, whereas equal batches attain only the rate
$\log_2 e-\log_2 R+\frac1R\log_2(R!)=\frac{\log_2(2\pi R)}{2R}+O(R^{-2})$
--- asymptotically worse by a factor $\sim(\ln R)/4$.
\end{theorem}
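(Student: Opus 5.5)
By Proposition~\ref{prop:dporder}(ii), the problem reduces to a continuous optimization over normalized remaining-mass profiles. We solve that optimization by a value-function recursion in closed form, and then control the error from the integer constraint. First rewrite the objective. Since $g(M,B)=B\log_2 M-\log_2 (M)_B$ and $\prod_t (M_t)_{B_t}=n!$ for every composition, we get
\[
\cE_{n,R}=\min\sum_{t=1}^R (M_t-M_{t+1})\log_2 M_t-\log_2 n!,
\]
where $n=M_1>M_2>\cdots>M_R\ge1$ and $M_{R+1}=0$. This is exactly the $F$-recursion. Normalizing $m_t=M_t/n$ gives
\[
\sum_t (M_t-M_{t+1})\log_2 M_t = n\log_2 n+n\,\Phi(m),\qquad \Phi(m)=\sum_t (m_t-m_{t+1})\log_2 m_t .
\]
Stirling's formula gives $\log_2 n!=n\log_2 n-n\log_2 e+\tfrac12\log_2(2\pi n)+O(1/n)$. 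The $n\log_2 n$ terms cancel exactly, so it suffices to show $\min_{\text{integer}}\Phi=\Phi^*+O_R(1/n)$ with $\log_2 e+\Phi^*=(1-\rho_R)/\ln2$.

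\emph{Continuous problem.} Work in nats. Let $V_k(\mu)$ be the minimal cost $\sum(m_t-m_{t+1})\ln m_t$ starting from remaining mass $\mu$ with $k$ rounds left, where we allow $m_{t}\ge m_{t+1}\ge0$ and use the convention $0\ln0=0$. The objective is homogeneous: scaling all masses by $\mu$ adds $\mu\ln\mu$. Hence $V_k(\mu)=\mu\ln\mu+\mu v_k$, with $v_1=0$. From mass $1$, a first round leaving mass $s$ costs $(1-s)\ln 1=0$, so
\[
v_k=\min_{s\in[0,1]}\{s\ln s+s v_{k-1}\}.
\]
The function $s\mapsto s\ln s+s v_{k-1}$ is strictly convex. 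Its unique minimizer is $s=e^{-1-v_{k-1}}$, with minimum value $-s$; this minimizer lies in $(0,1]$ because $v_{k-1}\ge-1$ inductively. Setting $\rho_k=-v_k$ gives $\rho_k=e^{\rho_{k-1}-1}$, and the rate is $(1+v_R)/\ln2=(1-\rho_R)/\ln2$. The unique optimal profile then follows by induction. The first remaining fraction is $\rho_R$, the next step applies the $(R-1)$-stage rule to the rescaled mass, and so on, giving $m_t^*=\prod_{j=R-t+2}^R\rho_j$. Uniqueness at each stage comes from the strict convexity.

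\emph{Discretization.} For the lower bound, every integer composition is a feasible continuous profile, so $\min_{\text{integer}}\Phi\ge\Phi^*$ exactly. For the upper bound, take $M_t=\mathrm{round}(n m_t^*)$; for fixed $R$ and large $n$ these are strictly decreasing. All $m_t^*$ are bounded away from $0$ for fixed $R$, so $\partial/\partial M_t$ of $\sum(M_t-M_{t+1})\log_2 M_t$ equals $\log_2(M_t/M_{t-1})+(M_t-M_{t+1})/(M_t\ln 2)=O_R(1)$. Rounding therefore costs $O_R(1)$, which proves the displayed expansion.

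\emph{Profile convergence.} Normalized discrete optimizers have $\Phi\le\Phi^*+O_R(1/n)$. The function $\Phi$ is continuous on the compact simplex of monotone profiles, with $x\ln x$ extended continuously at $0$, and it has a unique minimizer. Hence every limit point of the normalized optimizers equals $m^*$. I expect this step to be the main, though mild, obstacle, because degenerate profiles with some $m_t\to0$ must be handled by the continuous extension and not by the derivative bound.

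\emph{Asymptotics.} Put $u_R=1-\rho_R$, so $u_R=1-e^{-u_{R-1}}$ and $u_R\downarrow0$. Expanding, $u_R=u_{R-1}-u_{R-1}^2/2+O(u_{R-1}^3)$, which gives
\[
\frac1{u_R}=\frac1{u_{R-1}}+\frac12+O(u_{R-1}),
\]
hence $u_R\sim2/R$.

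\emph{Equal batches.} Equal batches give $m_t=(R-t+1)/R$. Then $\log_2 e+\Phi=\log_2 e+\frac1R\sum_{j=1}^R\log_2(j/R)=\log_2 e-\log_2R+\frac1R\log_2 R!$. Stirling's formula turns this into $\frac{\log_2(2\pi R)}{2R}+O(R^{-2})$. Dividing by the optimal rate $2/(R\ln2)$ gives the ratio $\ln(2\pi R)/4\sim(\ln R)/4$.
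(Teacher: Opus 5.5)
Your proposal is correct and follows essentially the same route as the paper's proof: the telescoped form plus Stirling, the homogeneous value recursion $v_k=\min_s\{s\ln s+s v_{k-1}\}$ with unique minimizer $e^{-1-v_{k-1}}$ giving $v_R=-\rho_R$ and the product profile, rounding the strictly positive continuous optimizer at $O_R(1)$ total cost, $1/u_R-1/u_{R-1}\to\frac12$ for the tail, and Stirling on $R!$ for equal batches. Two steps the paper leaves implicit are written out explicitly in your version: the exact lower bound from the relaxation ($\min_{\text{integer}}\Phi\ge\Phi^*$), and the compactness argument for profile convergence using the continuous extension of $\Phi$ at $m_t=0$.
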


In the complementary joint regime, where $R=\Theta(n)$ and the error is
$\Theta(1)$, the tradeoff can be solved exactly in its first
phase. Write $s=n-R$ for the number of saved rounds and
$\lambda(m):=\log_2\frac{m}{m-1}$.

\begin{theorem}[Exact first-phase optimality: pairs first]\label{thm:phase1}
If $3s\le n$, then the \emph{pair-first} composition --- $s$ batches of
size two followed by $n-2s$ singletons --- is optimal:
\[
\cE_{n,n-s}\;=\;P(n,s)\;:=\;\sum_{j=0}^{s-1}\lambda(n-2j).
\]
Consequently, for every $0\le\eps<\tfrac12\log_2 3$,
\[
\lim_{n\to\infty}\frac{\overline D_\eps(p_{\mathrm{perm}})}{n}
\;=\;\frac{1+2^{-2\eps}}{2}.
\]
\end{theorem}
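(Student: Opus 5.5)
The first claim is a pure statement about integer compositions. By Theorem~\ref{thm:hardcap} and Remark~\ref{rem:permflat}, it suffices to show that every composition of $n$ into $R=n-s$ positive parts has cost at least $P(n,s)$, and that the pair-first composition attains $P(n,s)$.

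\emph{Attainment.} A batch of size two at remaining mass $M$ costs $g(M,2)=\log_2\frac{M^2}{M(M-1)}=\lambda(M)$. Singletons cost $g(M,1)=0$. Since the $j$-th pair is revealed at mass $n-2j$, the pair-first composition costs exactly $P(n,s)$.

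\emph{Cost of a general composition.} I would write
\[
g(M,B)=\sum_{j=1}^{B-1}\log_2\frac{M}{M-j}.
\]
Here $\lambda$ is decreasing in $m$, so extra coordinates placed earlier are cheaper. By Proposition~\ref{prop:dporder}(i) we may assume $B_1\ge\cdots\ge B_R$, so all batches of size at least two come first. The lower bound then goes by induction on $s$ (for fixed $n$), via an exchange step. Take an optimal nonincreasing composition and suppose some batch has size $B\ge3$, revealed at mass $M$. Replace it with a pair at mass $M$. The remaining $B-2$ extra coordinates are pushed, one at a time, to the smallest-index rounds that are currently singletons, turning each into a pair. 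This keeps the number of rounds fixed. The removed cost is $\sum_{j=2}^{B-1}\log_2\frac{M}{M-j}$. Each added pair costs $\lambda(M')$ for some $M'\ge n-2s\ge s$, using $3s\le n$.

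The exchange is profitable precisely when
\[
\log_2\frac{M}{M-j}\;\ge\;\lambda(M'),
\quad\text{i.e.}\quad
j\,M'\;\ge\;M\ \text{ roughly}.
\]
This is where the hypothesis $3s\le n$ must enter. It guarantees that every pair in the pair-first profile is revealed at mass at least $n-2s+2\ge\frac{n}{3}+2$. For the first extra coordinate ($j=2$) it also ensures $2M'\ge M$, since $M\le n$ and $M'>n/3\ge n/2-\ldots$. I would check the boundary case carefully: $2(n-2s+2)\ge n$ needs $4s\le n+4$, which $3s\le n$ gives. For $j\ge3$ the inequality $j(n-2s)\ge n$ follows from $3s\le n$ as well.

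\textbf{Main obstacle.} The hard step is making this exchange rigorous. The difficulty is that converting singletons into pairs shifts the masses of all later rounds. I would therefore handle the exchange with a telescoping comparison of the two cost sums rather than term by term. If that becomes too messy, a fallback is a convexity argument: treat the recursion of Proposition~\ref{prop:dporder}(ii) as a DP and verify by induction on $s$ that $F_{n,n-s}=F^{\text{pair}}$ whenever $3s\le n$. The induction step then compares removing a first batch of size $B$ against removing a pair.

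\emph{Asymptotics.} For the limit, I would evaluate $P(n,s)$ by telescoping:
\[
P(n,s)=\sum_{j<s}\log_2\frac{n-2j}{n-2j-1}
=\tfrac12\log_2\frac{n}{n-2s}+O(1/n)\ \text{ uniformly for } s\le n/3,
\]
for instance by pairing terms or comparing with an integral. Setting $s=\sigma n$ gives $\cE_{n,n-s}\to\tfrac12\log_2\frac1{1-2\sigma}$. This quantity increases continuously in $\sigma$ and reaches $\tfrac12\log_2 3$ at $\sigma=1/3$. Hence, for $\eps<\tfrac12\log_2 3$, the largest feasible $s$ satisfies
\[
s/n\to\sigma_\eps:=\frac{1-2^{-2\eps}}{2}.
\]
It remains to check that no composition with $s/n>1/3$ can beat $\eps$. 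This holds because $\cE_{n,R}$ is nonincreasing in $R$, so $\cE_{n,n-s}\ge\cE_{n,n-\lfloor n/3\rfloor}$ for such $s$. The latter is $\tfrac12\log_2 3-o(1)>\eps$. Therefore
\[
\overline D_\eps(p_{\mathrm{perm}})/n=(n-s)/n\to 1-\sigma_\eps=\frac{1+2^{-2\eps}}{2}.
\]
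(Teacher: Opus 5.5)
Your attainment computation and the asymptotic part are correct and match the paper: the $O(1/n)$ comparison of $P(n,s)$ with $\frac12\log_2\frac{n}{n-2s}$, the monotonicity of $\cE_{n,R}$ in $R$, and the treatment of $s>n/3$. The gap is in the lower bound $\cE_{n,n-s}\ge P(n,s)$, which is the substance of the theorem.

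\textbf{The exchange criterion is false.} You justify the exchange term by term via $jM'\ge M$. For $j=2$ you claim that $2(n-2s+2)\ge n$, i.e.\ $4s\le n+4$, follows from $3s\le n$. It does not: $3s\le n$ only gives $4s\le\frac43 n$, so the criterion fails for roughly $n/4<s\le n/3$.

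Concretely, take $n=30$, $s=10$ and the composition $(3,2^8,1^{11})$. The removed term is $\log_2\frac{30}{28}\approx0.0995$. The singleton that becomes a pair sits at mass $12$ and costs $\lambda(12)\approx0.1255$. Moving to $(2^{10},1^{10})$ does lower the cost, but only because the eight intermediate pairs shift from masses $13,15,\dots,27$ to $14,16,\dots,28$. That shift saves about $0.0343$ bits, for a net gain of about $0.0083$. So the mass shift you treat as a bookkeeping obstacle is in fact the indispensable source of profit.

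\textbf{The margin vanishes at first order.} At $s=n/3$ with a triple placed first, the removed term is about $\frac{2}{n\ln2}$, the new pair costs about $\frac{3}{n\ln2}$, and the shift saves about $\frac{1}{n\ln2}$. These cancel exactly at first order: the boundary $3s=n$ of the hypothesis is precisely where the leading-order balance vanishes. Any proof therefore has to control second-order terms.

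\textbf{What the paper supplies.} Your fallback, induction through the first batch of a nonincreasing optimizer, is the paper's route, but you supply none of its substance.
\begin{itemize}
\item The paper inducts on $n$ over the region $3s\le n$, checking that $(n-B,s-B+1)$ remains in it.
\item It sets $\Delta(B):=g(n,B)+P(n-B,s-B+1)-P(n,s)$, notes $\Delta(2)=0$, and proves $\Delta(B+1)\ge\Delta(B)$ from an exact increment identity. In that identity the shift of the intermediate pairs appears as $\sum_j\log_2\bigl(1+\frac{1}{m_j(m_j-2)}\bigr)$.
\item For $B\ge3$ the sum may be dropped; this is the analogue of your $j\ge3$ check.
\item For $B=2$ the sum must be kept. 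It telescopes exactly to $\frac12\bigl(\frac1q-\frac1{n-2}\bigr)$, and an explicit second-order margin at the worst case $q=\frac n3+2$ closes the inequality.
\end{itemize}
Conditioning on the first batch also avoids an issue your global exchange leaves open: the intermediate batches could have arbitrary sizes, not just two.
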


For the target-averaged depth $D_\eps$ the composition may depend on
observed values (early symbols act as a random seed mixing later
profiles).  The appropriate lower envelope is therefore convex:

\begin{proposition}[Convex-envelope sandwich]\label{prop:convexenv}
Let $\underline\cE_n$ be the greatest convex nonincreasing function on
$[1,n]$ lying below the points $(R,\cE_{n,R})$. Then every deterministic
value-adaptive policy satisfies
$\KL{p_{\mathrm{perm}}}{q_\pi}\ge\underline\cE_n(\E_p[R])$, and hence
\begin{align*}
\inf\{r:\underline\cE_n(r)\le\eps\}
&\le D_\eps(p_{\mathrm{perm}})\\
&\le\overline D_\eps(p_{\mathrm{perm}})\\
&=\min\{R:\cE_{n,R}\le\eps\}.
\end{align*}
\end{proposition}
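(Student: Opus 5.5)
The plan is to reduce an arbitrary deterministic value-adaptive policy to a probability mixture over batch-size compositions, and then apply convexity. By Theorem~\ref{thm:identity} together with Remark~\ref{rem:permflat}, the realized cost of the trajectory of any $x$ with $p_{\mathrm{perm}}(x)>0$ is $\sum_{t=1}^{R(x)} g(M_t(x),B_t(x))$. Here $(B_1(x),\dots,B_{R(x)}(x))$ is the batch-size composition of $n$ followed on that trajectory, and $M_t(x)=\sum_{s\ge t}B_s(x)$. Taking expectations gives
\[
\KL{p_{\mathrm{perm}}}{q_\pi}=\E_p\Bigl[\textstyle\sum_{t=1}^{R(x)} g(M_t(x),B_t(x))\Bigr].
\]
By the definition of $\cE_{n,R}$ as a minimum over compositions into exactly $R$ positive parts, each trajectory cost is at least $\cE_{n,R(x)}$. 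Hence
\[
\KL{p_{\mathrm{perm}}}{q_\pi}\ge \E_p\bigl[\cE_{n,R(X)}\bigr].
\]

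Next I use the envelope. Since $\underline\cE_n$ lies below every point $(R,\cE_{n,R})$, we have $\cE_{n,R(x)}\ge\underline\cE_n(R(x))$ pointwise. Because $\underline\cE_n$ is convex on $[1,n]$ and $R(X)\in[1,n]$, Jensen's inequality gives
\[
\E_p[\underline\cE_n(R(X))]\ge\underline\cE_n(\E_p[R]).
\]
This proves the first claim, $\KL{p_{\mathrm{perm}}}{q_\pi}\ge\underline\cE_n(\E_p[R])$.

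For the sandwich, take a policy $\pi$ attaining $D_\eps(p_{\mathrm{perm}})$, which exists by Definition~\ref{def:depth}. Setting $r=\E_p[R]=D_\eps$ gives $\underline\cE_n(r)\le\KL{p}{q_\pi}\le\eps$. So $r$ lies in the set $\{r:\underline\cE_n(r)\le\eps\}$, and its infimum is at most $D_\eps$. The middle inequality $D_\eps\le\overline D_\eps$ holds by definition. The final equality is exactly the consequence already stated in Theorem~\ref{thm:hardcap}: hard-cap policies with $\max_x R(x)\le R$ achieve $\cE_{n,R}$ and no better, and $\cE_{n,R}$ is nonincreasing in $R$ (split a batch; I would check monotonicity quickly, or simply use the minimum form directly).

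The main obstacle is justifying pointwise that the trajectory cost equals the composition cost $\sum_t g(M_t,B_t)$, i.e.\ the realized version of Remark~\ref{rem:permflat}. To verify it, I would compute on the permutation support. Given $x_C$ with $M$ remaining values, $p(x_S\mid x_C)=1/(M)_B$ for any injective assignment of unused values, and each marginal $p(x_i\mid x_C)=1/M$. So $\tc=\log_2 M^B/(M)_B=g(M,B)$ holds for every $x$ with $p(x)>0$, which makes the bound deterministic per trajectory rather than only in expectation. The remaining care point is that the "greatest convex nonincreasing minorant" is well defined and finite on $[1,n]$, which holds since the points are finitely many.
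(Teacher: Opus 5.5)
Your proposal is correct and matches the paper's proof: bound each trajectory's cost below by $\cE_{n,R(x)}\ge\underline\cE_n(R(x))$, apply Jensen to the convex envelope, and read off the sandwich from an optimal policy and Theorem~\ref{thm:hardcap}. You also verify the per-trajectory cost formula and that the envelope is well defined; the paper leaves both implicit.
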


In the joint regime $\eps=\Theta(1)$, $R=\Theta(n)$, the linear bound of
Theorem~\ref{thm:perm} can be sharpened:

\begin{proposition}[Logarithmic per-trajectory bound]\label{prop:permlog}
For every deterministic value-adaptive policy on $p_{\mathrm{perm}}$ and
every $x\in\mathrm{supp}(p_{\mathrm{perm}})$, the realized trajectory cost
satisfies
\[
\sum_t g(M_t,B_t)\;\ge\;\log_2\frac{n}{R(x)}.
\]
Consequently $\KL{p_{\mathrm{perm}}}{q_\pi}\le\eps$ implies
\[
\E_p[R]\;\ge\;n\,2^{-\eps},
\]
which is sharper than Theorem~\ref{thm:perm} for all
$\eps\lesssim1.81$ (precisely: whenever $2^\eps\le1+2\eps\ln2$).
\end{proposition}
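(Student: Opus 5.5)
The plan is to bound each round's cost from below by a single factor of $g$, and then telescope the product of these factors along the trajectory. Recall
\[
g(M,B)=\log_2\frac{M^B}{(M)_B}=\sum_{j=0}^{B-1}\log_2\frac{M}{M-j},
\]
where every summand is nonnegative. Keeping only the last summand ($j=B-1$) gives
\[
g(M_t,B_t)\ \ge\ \log_2\frac{M_t}{M_t-B_t+1}=\log_2\frac{M_t}{M_{t+1}+1},
\]
with $M_1=n$ and $M_{R+1}=0$ along the realized trajectory of $x$.

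Next I sum over $t=1,\dots,R=R(x)$ and regroup the product, so that the denominator of each factor is paired with the next numerator:
\[
\prod_{t=1}^{R}\frac{M_t}{M_{t+1}+1}
= \frac{n}{M_{R+1}+1}\prod_{t=2}^{R}\frac{M_t}{M_t+1}
= n\prod_{t=2}^{R}\frac{M_t}{M_t+1}.
\]
Every later round reveals at least one coordinate, so $M_t\ge R-t+1$. Since $u\mapsto u/(u+1)$ is increasing,
\[
\prod_{t=2}^{R}\frac{M_t}{M_t+1}\ \ge\ \prod_{t=2}^{R}\frac{R-t+1}{R-t+2}=\frac1R,
\]
which also telescopes. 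This yields $\sum_t g(M_t,B_t)\ge\log_2(n/R(x))$ for every $x$ in the support.

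The key point is that this bound is pathwise. Remark~\ref{rem:permflat} says the realized cost of a trajectory is exactly $\sum_t g(M_t,B_t)$, whatever positions or values a deterministic adaptive policy chooses. So no control of the adaptive tree is needed beyond the identity of Theorem~\ref{thm:identity}.

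For the consequence, Theorem~\ref{thm:identity} gives $\KL{p_{\mathrm{perm}}}{q_\pi}=\E_p\bigl[\sum_t g(M_t,B_t)\bigr]$. Combining this with the pathwise bound and then Jensen's inequality (concavity of $\log_2$) gives
\[
\eps\ \ge\ \E_p\!\left[\log_2\frac{n}{R}\right]\ \ge\ \log_2\frac{n}{\E_p[R]}.
\]
Rearranging gives $\E_p[R]\ge n2^{-\eps}$.

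The comparison with Theorem~\ref{thm:perm} is immediate: $n2^{-\eps}\ge n/(1+2\eps\ln2)$ holds exactly when $2^\eps\le1+2\eps\ln2$. The left side is convex and the right side is linear in $\eps$, and they agree at $\eps=0$. So this inequality holds on an interval $[0,\eps_0]$, where numerically $\eps_0\approx1.81$.

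The only delicate step is the choice of the per-round lower bound. The last-factor bound is too weak if it is used round by round on its own; the proof needs the regrouping, which turns the $+1$ shifts in the denominators into the factor $1/R$ via $M_t\ge R-t+1$.
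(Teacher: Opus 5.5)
Your proposal is correct and follows the paper's proof step for step. You keep only the $j=B-1$ summand, telescope the product to $n\prod_{t\ge2}M_t/(M_t+1)$, and use $M_t\ge R-t+1$ with monotonicity of $u\mapsto u/(u+1)$ to obtain the factor $1/R$. Then you combine the cost identity with Jensen's inequality for $\log_2$ to get $\E_p[R]\ge n2^{-\eps}$.
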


A simple schedule complements this from above: $s$ batches of size two
followed by singletons has $R=n-s$ rounds and exact error
$\sum_{j<s}\log_2\frac{n-2j}{n-2j-1}\to\frac12\log_2\frac1{1-2\alpha}$ for
$s/n\to\alpha$. Hence
\begin{equation}\label{eq:permgap}
\begin{aligned}
2^{-\eps}
&\le\liminf_n\frac{D_\eps(p_{\mathrm{perm}})}{n}\\
&\le\limsup_n\frac{\overline D_\eps(p_{\mathrm{perm}})}{n}
\le\frac{1+2^{-2\eps}}{2},
\end{aligned}
\end{equation}
and by the AM--GM inequality the two sides agree exactly at $\eps=0$.

The joint-scaling limit admits a variational characterization.  For
$\Lambda>0$, define
\begin{align*}
h_\Lambda(m)&:=\min_{B\ge1}
 \Bigl[\frac{(B-1)\log_2e}{2m}+\frac\Lambda B\Bigr],
 \quad 0<m\le1,\\
\eps_*(r)&:=\sup_{\Lambda>0}
 \Bigl[\int_0^1h_\Lambda(m)\,dm-\Lambda r\Bigr].
\end{align*}
The inner minimum is attained at $B=K$ exactly for
$aK(K-1)\le m\le aK(K+1)$ with $a=\frac1{2\Lambda\ln2}$, which makes
$\eps_*$ an explicit piecewise-analytic function with integer phase
transitions; on $r\ge\frac23$ it reduces to
$\eps_*(r)=\frac12\log_2\frac1{2r-1}$, matching
Theorem~\ref{thm:phase1}.

\begin{theorem}[Joint-scaling frontier]\label{thm:frontier}
For every fixed $r\in(0,1]$,
$\lim_{n\to\infty}\cE_{n,\lceil rn\rceil}=\eps_*(r)$. Consequently, for
every fixed $\eps\ge0$,
\[
\lim_{n\to\infty}\frac{\overline D_\eps(p_{\mathrm{perm}})}{n}
\;=\;r_*(\eps):=\inf\{r:\eps_*(r)\le\eps\}.
\]
\end{theorem}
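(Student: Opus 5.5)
We prove $\lim_n \cE_{n,\lceil rn\rceil}=\eps_*(r)$ by a matching upper bound (an explicit composition) and lower bound (Lagrangian duality). The depth statement then follows from Theorem~\ref{thm:hardcap}, monotonicity of $\cE_{n,R}$ in $R$, and continuity and strict decrease of $\eps_*$ where it is positive.

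\textbf{Local cost expansion.} For $M=mn$ with $m$ bounded away from $0$ and $B=O(1)$,
\[
g(M,B)=-\sum_{j<B}\log_2\Bigl(1-\tfrac jM\Bigr)=\frac{B(B-1)\log_2 e}{2mn}+O\bigl(B^3/(mn)^2\bigr).
\]
Take a composition and let $m$ denote the remaining-mass fraction. A round with batch $B$ consumes $B/n$ of mass and one round. Assign per unit mass the cost density $(B-1)\log_2e/(2m)$ and the round density $1/(nB)$. Then, up to errors $o(1)$, a composition corresponds to a measurable choice $B(m)\in\mathbb Z_{\ge1}$ with
\[
\int_0^1 \frac{dm}{B(m)}\le r,
\qquad
\text{cost}=\int_0^1 \frac{(B(m)-1)\log_2e}{2m}\,dm .
\]

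\textbf{Upper bound.} The relaxed problem is to minimize the cost above subject to the round constraint, over mixtures of integer $B$. Its value is the concave dual $\sup_\Lambda\bigl[\int h_\Lambda-\Lambda r\bigr]=\eps_*(r)$. Strong duality holds because, after allowing pointwise mixing of two adjacent $K$ values, the problem is a linear program in the occupation measure. We realize the optimal threshold policy: choose $B=K$ on $m\in[aK(K-1),aK(K+1)]$, and tune $\Lambda$ (with mixing at one breakpoint) so that the round constraint is tight. This gives a composition with $\lceil rn\rceil$ rounds.

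Near $m\to0$ the expansion fails. Since $aK(K-1)\le m$, the optimizer already uses $B=1$ there (no cost) for $m<2a$. So we handle the last $o(n)$ coordinates as singletons and the tail contributes $o(1)$. Riemann-sum errors are $O(1/n)$ per unit of the bounded integrand, so $\limsup\cE_{n,\lceil rn\rceil}\le\eps_*(r)$.

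\textbf{Lower bound (main obstacle).} Here arbitrary compositions must be allowed, including batches comparable to $M$, where the quadratic expansion is invalid. Fix $\Lambda$. We want, for every round, the exact inequality
\[
g(M,B)+\Lambda\cdot(\text{rounds})\ \ge\ \int_{(M-B)/n}^{M/n} h_\Lambda(m)\,dm\cdot n\cdot\tfrac1n \;-\;\text{error},
\]
meaning each round's cost plus $\Lambda/n$ dominates the integral of $h_\Lambda$ over the mass it consumes. Two facts are needed:
\begin{itemize}
\item[(a)] $-\log(1-j/M)\ge j/M$, which gives $g(M,B)\ge \frac{B(B-1)\log_2e}{2M}$ exactly, with no asymptotic error;
\item[(b)] over $[(M-B)/n,M/n]$ the integrand $(B-1)\log_2e/(2m)$ is at most the value at the left endpoint. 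This is the wrong direction, so we instead use that $h_\Lambda(m)\le \frac{(B-1)\log_2e}{2m}+\frac\Lambda B$ for each $m$ and integrate $1/m$ exactly, giving $\frac{(B-1)\log_2e}{2}\log\frac{M}{M-B}$.
\end{itemize}
Fact (b) requires comparing with $g(M,B)=\log_2\frac{M^B}{(M)_B}$ itself, which is a sum of $\log\frac{M}{M-j}$ for $j<B$. For $j\ge1$, $\sum_{j<B}\log\frac M{M-j}$ dominates $\frac{B-1}{B}\cdot\frac{B}{2}\log\frac M{M-B}$ up to a convexity argument. I expect this inequality to be the crux and would first try to prove it termwise via convexity of $j\mapsto-\log(1-j/M)$ (the average over $j<B$ of a convex function vanishing at $0$).

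Summing over rounds and adding $\Lambda R/n\le\Lambda r+o(1)$ gives $\cE\ge\int h_\Lambda-\Lambda r-o(1)$ for each $\Lambda$. Taking the supremum over $\Lambda$ gives $\liminf\cE_{n,\lceil rn\rceil}\ge\eps_*(r)$. The final batch, where $M-B=0$, is handled by truncating at $m\ge\delta$ and letting $\delta\to0$, since $h_\Lambda$ is bounded by $\Lambda$ near $0$.
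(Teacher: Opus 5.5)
Your upper bound (the threshold profile $B=K$ on $[aK(K-1),aK(K+1)]$ at the $\Lambda$ with $r(\Lambda)=r$, singletons below $m=2a$, and linearization error $O(B^3/M^2)$ per batch) follows the paper, and so does your passage to $\overline D_\eps$. Two small points remain open there. The profile uses $rn+O(1)$ rounds rather than exactly $\lceil rn\rceil$; the paper runs the profile for $r-\delta$ and uses continuity of $\eps_*$. Strict decrease of $\eps_*$ where it is positive also still needs its short convexity argument.

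The genuine gap is in the lower bound. The inequality you single out as the crux, $\sum_{j<B}\log\frac{M}{M-j}\ge\frac{B-1}{2}\log\frac{M}{M-B}$, is false for every $2\le B<M$. The convexity argument you intend to use proves the reverse. Since $f(x)=-\ln(1-x)$ is convex with $f(0)=0$, we have $f(j/M)\le\frac jB f(B/M)$ for $0\le j\le B$, and summing over $j<B$ gives $g(M,B)\le\frac{B-1}{2}\log_2\frac{M}{M-B}$, strictly. For example, $B=2$, $M=10$ gives $\log_2\frac{10}{9}\approx0.152<\frac12\log_2\frac{10}{8}\approx0.161$.

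For $B\ll M$ the deficit is only about $\frac{(B^3-B)\log_2e}{12M^2}$, which is harmless for bounded batches at $M\ge\delta n$. It is of order $M$, however, when $B$ is a constant fraction of $M$. For $B=M-1$ your right-hand side is $\sim\frac M2\log_2M$ while $g(M,B)=O(M)$. So your unquantified error term is not $o(1)$ uniformly over compositions, and truncating only the final batch ($M-B=0$) does not fix this. The root cause is that you bound $h_\Lambda$ on the whole interval $[(M-B)/n,M/n]$ by the rate of the batch actually used. That overcharges, because $1/m$ increases across the interval.

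The missing ingredient is a case split that never uses that bound for large batches. Your intended per-round inequality is $g(M,B)+\frac\Lambda n\ge\int_{(M-B)/n}^{M/n}h_\Lambda(m)\,dm$, up to a summable error.

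\emph{Large batches.} If $B\ge2\Lambda\ln2$, use $h_\Lambda\le\Lambda$ (take $B=1$ in the minimum). The integral is then at most $\frac{\Lambda B}{n}\le\frac{B(B-1)\log_2e}{2n}+\frac\Lambda n\le g(M,B)+\frac\Lambda n$, by your fact (a) and $M\le n$, with no error at all.

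\emph{Small batches.} If $B<2\Lambda\ln2$ and $M\ge\delta n$, use that $m\,h_\Lambda(m)=\min_{B'}\bigl[\frac{(B'-1)\log_2e}{2}+\frac{\Lambda m}{B'}\bigr]$ is nondecreasing. The integral is then at most $\frac{M}{M-B}\cdot\frac Bn h_\Lambda(\frac Mn)$. Evaluating the minimum at the actual $B$ and using (a) gives $\frac Bn h_\Lambda(\frac Mn)\le\min\{g(M,B)+\frac\Lambda n,\frac{B\Lambda}{n}\}$. Hence the excess is at most $\frac{B}{M-B}\cdot\frac{B\Lambda}{n}=O_{\Lambda,\delta}(n^{-2})$ per round, and $O(1/n)$ in total.

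The rounds with $M\ge\delta n$ cover $[\delta,1]$, and $\int_0^\delta h_\Lambda\le\delta\Lambda$. Therefore $\cE_{n,R}+\frac{\Lambda R}{n}\ge\int_0^1h_\Lambda-\delta\Lambda-O(1/n)$. Letting $n\to\infty$ and then $\delta\to0$, your duality argument goes through.

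This is exactly the job of the paper's three-way split: batches with $B_t>\sqrt n$, slots $c\le n^{3/4}$, and the factor $1+2n^{-1/4}$ from $H_\lambda(c)\le\frac{M_t}{c}H_\lambda(M_t)$. Your repaired version, with a constant threshold $2\Lambda\ln2$ and a continuous integral in place of the slot-wise Riemann sum, is a slightly cleaner form of the same argument.
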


Theorem~\ref{thm:frontier} characterizes the worst-case joint-scaling
frontier. It remains open whether the target-averaged $D_\eps$ meets the
same frontier in \eqref{eq:permgap}. Such an equality would follow from
convexity of $R\mapsto\cE_{n,R}$. Convexity holds in the first phase because
Theorem~\ref{thm:phase1} gives increasing increments
$\lambda(n-2s)$, but a proof over the complete finite-blocklength range is
not currently available.

\subsection{Balanced binary strings}\label{sec:balanced}

Both deep families so far --- the walk and the permutations --- have
value ranges growing with $n$. This section gives a natural fixed-alphabet
family with an exact polylogarithmic rate: for $n$ even, let
$p_{\mathrm{bal}}$ be uniform on the balanced
binary strings $\{x\in\{0,1\}^n:\sum_ix_i=n/2\}$; equivalently, the
increment sequence of the walk of Section~\ref{sec:ordergap} conditioned
to end at $n/2$.

\begin{theorem}[Balanced strings are polylogarithmically
deep]\label{thm:balanced}
There are absolute constants $c,C$ such that for every even $n$ and every
$0<\eps\le\log_2 n$, define
$L_n(\eps):=(\log_2 n-C(\log_2\log_2 n+\eps+1))_+$. Then
\begin{align*}
c\min\Bigl\{\log_2^3n,\frac{L_n(\eps)^2}{\eps}\Bigr\}
&\le D_\eps(p_{\mathrm{bal}})\\
&\le\overline D_\eps(p_{\mathrm{bal}})\\
&\le C\Bigl(\frac{\log_2^2n}{\eps}+\log_2 n\Bigr).
\end{align*}
In particular $D_\eps(p_{\mathrm{bal}})=\Theta(\log^2n)$ for every fixed
$\eps>0$, with constants that may depend on $\eps$.  Thus a binary alphabet
supports a depth order strictly between that of the walk ($\Theta(\log n)$)
and the permutations ($\Theta(n)$).
\end{theorem}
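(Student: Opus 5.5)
The plan is to reduce everything, through the cost identity of Theorem~\ref{thm:identity}, to a one-dimensional urn problem. Conditionally on a revealed context $x_C$, the law of the unrevealed coordinates under $p_{\mathrm{bal}}$ is uniform on binary strings of length $M=n-|C|$ with exactly $k$ ones, where $k=n/2-\sum_{i\in C}x_i$. These coordinates are exchangeable, so, as in Remark~\ref{rem:permflat}, the round cost $\TC(X_S\mid x_C)$ depends only on $(M,k,B)$ with $B=|S|$. All dependence within the batch passes through its count of ones, so
\[
\TC(X_S\mid x_C)=\KL{\mathrm{Hyp}(M,k,B)}{\Bin(B,k/M)}=:\gamma(M,k,B).
\]
Here $\mathrm{Hyp}(M,k,B)$ is an ad hoc notation for the hypergeometric count of ones in $B$ draws without replacement from $M$ items, $k$ of them ones. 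The first technical step is a two-sided estimate in terms of $f=B/M$, valid whenever $k/M\in[\tfrac14,\tfrac34]$ and $M$ exceeds a fixed power of $\log n$:
\[
c_1f^2\le\gamma(M,k,B)\le C_1f^2\quad (f\le\tfrac12),\qquad \gamma(M,k,B)\ge c_1\quad (f>\tfrac12).
\]
I expect to get this from the Gaussian (local limit) approximation of both counts. The variances are $B\theta(1-\theta)(M-B)/(M-1)$ and $B\theta(1-\theta)$, so the KL is approximately $\tfrac12(r-1-\ln r)$ nats with variance ratio $r=(M-B)/(M-1)$, which is of order $f^2$. I also need a crude upper bound $\gamma\le O(\log n)$ for all $k$, for use in degenerate regimes.

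For the upper bound I will use a fixed schedule, so the bound holds for $\overline D_\eps$. Each round reveals a fraction $f=\min\{1,\eps/\log_2 n\}$ (up to a constant) of the remaining coordinates, and the schedule finishes one coordinate at a time once $M$ drops below a threshold $M_0=O(\log n)$. The geometric phase uses about $\ln n/f=O(\log_2^2 n/\eps+\log n)$ rounds and the tail uses $O(\log n)$ rounds. On the event that $k/M$ stays in $[\tfrac14,\tfrac34]$, the total cost is at most $C_1f^2\cdot\ln n/f=O(f\log n)\le\eps$ for a suitable constant. That event fails only with probability polynomially small in $n$ by hypergeometric concentration for the bridge. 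On the failure event I charge the crude $O(\log n)$ bound per round, and the resulting expected contribution is negligible.

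For the lower bound, fix an arbitrary deterministic value-adaptive policy with error at most $\eps$. Let $G$ be the event that $k_t/M_t\in[\tfrac14,\tfrac34]$ for every round with $M_t\ge M_1:=(\log_2 n)^{C}$. Uniformly over histories, $G$ has probability $1-o(1)$: this follows from concentration of a uniform balanced string along any adaptively chosen subset, via a union bound over the at most $n$ stopping sizes. Let $\tau$ be the last round with $M_\tau\ge M_1$. Along each trajectory in $G$, pushing $M$ from $n$ down to $M_1$ requires $\sum_{t\le\tau}-\ln(1-f_t)\ge\ln(n/M_1)$, which is of order $L_n(\eps)$ in nats. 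Rounds with $f_t>\tfrac12$ each cost at least $c_1$, so their expected number is at most $\eps/c_1$, and each contributes at most $\ln 2$ to the sum. The remaining rounds satisfy $\sum f_t\gtrsim L$ while their cost is at least $c_1\sum f_t^2$. Cauchy--Schwarz gives $R\,\sum f_t^2\ge(\sum f_t)^2$ trajectory-wise. Taking expectations and using Cauchy--Schwarz again in the form $\E[(\sum f)^2/\sum f^2]\ge(\E\sum f)^2/\E\sum f^2$ yields $\E_p[R]\gtrsim L^2/\eps$. When $\eps$ is so small that this exceeds $\log^3 n$, I cap it at $\log^3n$; this is where $\eps$ is absorbed into $L_n(\eps)$.

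I expect the main obstacle to be the uniform two-sided estimate on $\gamma$, and specifically its lower bound. It must hold for all $B$ up to $M/2$ and all $k$ in the central range with explicit constants, including moderately small $M$ where the Gaussian approximation is weakest. It also has to interact cleanly with adaptive stopping, since the policy may steer toward histories with $k/M$ near the boundary. My first attempt would be to avoid Gaussian approximations by lower-bounding the KL through the chi-square divergence of the variance mismatch, or through Pinsker applied to the second moment of the count, which gives $\Omega(f^2)$ directly in the central range.
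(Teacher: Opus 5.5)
Your architecture is the paper's. Exchangeability reduces the state to $(M,k)$. Typicality is imposed on the policy-independent urn trajectory: the revealed values, in reveal order, form a uniform arrangement whatever the policy, so a union bound over the $n$ levels covers every adaptive stopping pattern, which settles the worry in your last paragraph. A quadratic per-round cost is played against log-progress by Cauchy--Schwarz, and a fixed geometric schedule gives the upper bound. Your reduction $\TC(X_S\mid x_C)=\KL{\mathrm{Hyp}(M,k,B)}{\Bin(B,k/M)}$ is also correct, since under both laws $X_S$ is uniform given its number of ones.

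\textbf{Large batches.} The lower bound fails as written, first here. A round with $f_t>\tfrac12$ contributes $\ln\frac{M_t}{M_{t+1}}\ge\ln2$ to your progress sum: at least $\ln 2$, not at most. A single such round can carry $M$ from about $n$ to below $M_1$. Knowing only $\gamma\ge c_1$ for these rounds, the best bound on their expected progress is $(\eps/c_1)\ln n$. The argument is then vacuous once $\eps\gtrsim c_1$, so it does not give $\Theta(\log^2n)$ for fixed $\eps$. You need the cost of a large batch to grow with its progress, $\gamma\gtrsim\ln\frac{M/2}{M-B+1}$ for $B>M/2$. Your Gaussian heuristic $\tfrac12(r-1-\ln r)$ already shows this growth, but you only extracted a constant. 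This is the paper's $\psi_t$ term. It is also how the additive $C\eps$ enters $L_n(\eps)$: large rounds then make expected progress $O(\eps)$. It does not enter through the cap.

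\textbf{Singletons and the cap.} The estimate $\gamma\ge c_1f^2$ is false at $B=1$, because singleton rounds are free. With that estimate your argument proves the uncapped bound $\E_p[R]\gtrsim L^2/\eps$. This is false for $\eps\ll\log^2n/n$, because $D_\eps\le n$. So the truncation at $\log^3n$ must come out of the argument; it cannot be imposed afterwards. In your Cauchy--Schwarz form the repair is easy. Singleton rounds sit at distinct levels $M_t\ge M_1$, so they add at most $\sum_{M\ge M_1}M^{-2}\le1/(M_1-1)$ to $\sum_tf_t^2$. This gives $\E_p[R]\gtrsim L^2/(\eps+1/M_1)$, hence the $\min\{\log_2^3n,\cdot\}$ form once $M_1\ge\log_2^3n$. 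The paper instead splits off $2/M_t$ and uses $\sum_t1/M_t\le1/M_0+\ln(1+R/M_0)$ together with concavity of $\ln$.

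\textbf{The lower estimate on $\gamma$.} You have no working proof of it. A Gaussian approximation says nothing for bounded $B$. KL has no general lower bound in terms of $\chi^2$, and the second-moment statistic is unbounded. The missing tool is Lemma~\ref{lem:balcost}. Order the batch and let $a_j$ count the ones among its first $j$ coordinates. Then $\TC\ge\sum_{j=1}^{B-1}I(X_{(j+1)};a_j)\ge2\log_2e\sum_{j=1}^{B-1}\operatorname{Var}q(a_j)$. Here $q(a_j)=\frac{k-a_j}{M-j}$ has exact variance $\frac{\theta j}{(M-1)(M-j)}$, with $\theta=\frac kM(1-\frac kM)$. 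This bound is non-asymptotic and vanishes at $B=1$. It is at least a constant times $((B-1)/M)^2$, and for $B>M/2$ at least a constant times $\ln\frac{M/2}{M-B+1}$. So it repairs all three points at once.

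The same decomposition with a reverse Pinsker step gives $\gamma\le C_0B^2/M^2$ for $B\le M/8$ on typical states, at every $M$. Your upper bound needs this where its batches are small and $M$ runs down to $O(\log n)$. A Gaussian approximation does not cover that regime.
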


The truncation in the lower bound is necessary because $D_\eps\le n$ for
all $\eps$; consequently, a lower bound proportional to
$(\log n)^2/\eps$ cannot hold uniformly as $\eps\downarrow0$.

The lower bound uses an exchangeable urn coupling whose complete potential
trajectory is independent of the policy. The upper bound is witnessed by a
fixed fractional-batch schedule. Appendix~\ref{app:balanced} develops the
coupling, proves the required one-round cost bounds, and gives the complete
lower- and upper-bound derivations.

\begin{remark}
Theorem~\ref{thm:balanced} gives an exact polylogarithmic rate under a single
global exchangeable constraint. This role is distinct from the product
one-hot construction of Section~\ref{sec:onehot}, which gives a larger binary
rate by a direct-sum mechanism. As with $p_{\mathrm{perm}}$, adaptivity does
not improve the asymptotic order on $p_{\mathrm{bal}}$: position selection is
irrelevant by exchangeability, and a fixed schedule matches the adaptive lower
bound up to constants. The cost state $(M_t,k_t)$ is nevertheless random and
value-dependent; its complete potential trajectory is policy-independent, so
there is nothing for adaptivity to exploit at the level of rates.
\end{remark}

\subsection{Binary one-hot blocks}\label{sec:onehot}

We next show that a fixed binary alphabet supports polynomial robust depth.
For an integer $m\ge2$, partition $n=m^2$ coordinates into $m$ blocks of
length $m$. Let $J_1,\ldots,J_m$ be independent and uniform on $[m]$, and set

\[
X_{a,j}:=\mathbf 1\{J_a=j\},\qquad a,j\in[m].
\]

Thus every block is uniform on its $m$ one-hot vectors and the blocks are
independent. Denote the resulting distribution on $\{0,1\}^{m^2}$ by
$p_m^{\mathrm{hot}}$.

\begin{theorem}[Binary square-root depth]\label{thm:onehot}
There is an absolute constant $c>0$ such that, for every $m\ge16$ and every
$\eps\ge0$,
\[
\frac{cm}{1+\eps}
\le D_\eps(p_m^{\mathrm{hot}})
\le\overline D_\eps(p_m^{\mathrm{hot}})
\le m.
\]
Consequently, for every fixed $\eps\ge0$,
$D_\eps(p_m^{\mathrm{hot}})=\Theta_\eps(\sqrt n)$ along $n=m^2$.
The lower bound holds for every deterministic value-adaptive policy.
\end{theorem}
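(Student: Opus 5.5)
The upper bound will come from an explicit zero-error fixed schedule; the lower bound will come from Theorem~\ref{thm:identity} together with the block additivity of Lemma~\ref{lem:product-additivity}, which reduces the total cost to a sum of single-block costs.

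\textbf{Upper bound.} In round $j\in[m]$ reveal the $j$-th coordinate of every block, i.e.\ $S_j=\{(a,j):a\in[m]\}$. The blocks are independent, so Lemma~\ref{lem:product-additivity} gives $\TC(X_{S_j}\mid x_{C_j})=\sum_a\TC(X_{a,j}\mid x_{C_j})$. Each summand involves a single coordinate and therefore vanishes. By Theorem~\ref{thm:identity} the schedule is exact in $m$ rounds, which gives $\overline D_\eps\le\overline D_0\le m$.

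\textbf{Reduction to one block.} By Theorem~\ref{thm:identity} and Lemma~\ref{lem:product-additivity},
\[
\KL{p}{q_\pi}=\sum_{a=1}^m\E_p\Bigl[\sum_t\TC\bigl(X_{S_t\cap\mathcal A_a}\mid x_{C_t}\bigr)\Bigr].
\]
Each block term depends only on that block's own history:
\begin{itemize}[leftmargin=2em,itemsep=1pt]
\item If the $1$ of block $a$ has already been revealed, every remaining conditional in the block is a point mass and the cost is $0$.
\item Otherwise $J_a$ is uniform on the $M$ unrevealed positions. Revealing $B$ of them costs
\[
\phi(M,B):=B\,h_2(1/M)-\tfrac BM\log M-\tfrac{M-B}{M}\log\tfrac{M}{M-B}.
\]
\end{itemize}
The first step is a one-round estimate for $\phi$. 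I expect $\phi(M,B)\ge c_1(B-1)/M$ for $2\le B\le M$ (for $B=M$ one gets $\phi\approx\log_2 e$). I will check it by expanding $h_2(1/M)$ and $\log\frac{M}{M-B}$: the leading $\frac BM\log M$ terms cancel, and what remains is of order $\frac BM\cdot\frac{B-1}{M}\cdot\frac MB$.

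Next I track a single block along a trajectory. Let $M_0=m>M_1>\cdots$ be its unrevealed counts while the $1$ is still unfound, and let $B_s=M_{s-1}-M_s$. Since $J_a$ is uniform, the probability that the block is still undecided after step $s$ is $M_s/m$. So the expected block cost is
\[
\sum_s\frac{M_{s-1}}{m}\,\phi(M_{s-1},B_s)\ \ge\ \frac{c_1}{m}\sum_s(B_s-1)\ =\ \frac{c_1}{m}\,(m-k_a),
\]
where $k_a$ is the number of rounds that touch block $a$ before its $1$ is found. The difficulty is that $k_a$ is random and adaptive.

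\textbf{Main obstacle.} The main obstacle is to turn these per-block estimates into a bound on $\E_p[R]$ that survives value adaptivity: the policy may react to which blocks have resolved and reallocate rounds. My plan is a \emph{per-trajectory} accounting.
\begin{itemize}[leftmargin=2em,itemsep=1pt]
\item On every trajectory, each block is touched in at most $R(x)$ rounds.
\item A block whose $1$ lies at position rank $r$ in its reveal order, with $r$ uniform by exchangeability of the hidden position, has been touched at most $\min(r,R)$ times while undecided.
\item Summing $(B_s-1)$ over those touches gives at least $r-\min(r,R)$ worth of wasted parallelism.
\end{itemize}
Summing over blocks and taking expectations should yield
\[
\eps\ \ge\ \KL{p}{q_\pi}\ \ge\ c_2\,\E_p\Bigl[\sum_a(r_a-R)_+/m\Bigr]\ \gtrsim\ c_2\,m\Bigl(\tfrac12-\tfrac{\E_p[R]}{m}\Bigr)_+ .
\]
To obtain the stated form $cm/(1+\eps)$, I would instead bound the cost of each block by a convex decreasing function of $R$, namely $\gtrsim m/R-1$ from $k$ batches summing to $m$, and then apply Jensen's inequality in $R$. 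This combination gives $\E_p[R]\ge cm/(1+\eps)$. The hypothesis $m\ge16$ is used only to make the constant $c_1$ in the one-round estimate uniform.
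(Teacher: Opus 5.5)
Your upper bound is correct: the fixed schedule that reveals the $j$-th coordinate of every block in round $j$ is zero-cost by Lemma~\ref{lem:product-additivity}, and it is simpler than the paper's search schedule. The lower bound, however, fails at its first step, because the one-round estimate $\phi(M,B)\ge c_1(B-1)/M$ is false. Set $f(x):=-(1-x)\ln(1-x)=x-\sum_{k\ge2}\frac{x^k}{k(k-1)}$. Your formula is then exactly
\[
\phi(M,B)\ln 2=Bf(1/M)-f(B/M)=\sum_{k\ge2}\frac{B^k-B}{k(k-1)M^k},
\]
whose leading term is $\frac{B(B-1)}{2M^2}$. The factor $\frac MB$ in your heuristic is spurious. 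For instance, $\phi(M,2)=\frac{1+O(1/M)}{M^2\ln 2}$, and $\phi(M,B)\le\frac{B^2}{M^2\ln2}$ whenever $B\le M/2$, so the linear bound holds only when $B=\Theta(M)$. The true one-round cost is quadratic. This is Lemma~\ref{lem:hotcost}, $G(M,k)\ge c_0\bigl((k-1)_+/M\bigr)^2$, proved by data processing on the event that two selected coordinates equal one.

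Everything built on the linear bound is therefore unsupported: the ``wasted parallelism'' count $\sum_s(B_s-1)$ and the inequality $\eps\gtrsim c_2m\bigl(\tfrac12-\E_p[R]/m\bigr)_+$. That inequality is in fact false. Consider the fixed schedule that reveals $4$ positions per block per round until at most $m/8$ positions of each block remain, and then reveals singletons. It uses about $\frac{11}{32}m$ rounds. By the identity and additivity, its divergence is $\sum_sM_{s-1}\phi(M_{s-1},4)$, which tends to $\frac{3\ln8}{2\ln2}=\frac92$ bits. Your inequality would then force $\frac92+o(1)\gtrsim\frac5{32}c_2m$.

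With the quadratic bound, the target form in your closing sentence is right for a non-adaptive block. Up to the constant $\frac1{2\ln2}$, the block cost is $\frac1m\sum_s\frac{B_s(B_s-1)}{M_{s-1}}\ge\frac1m\bigl(\frac mk-1\bigr)$ for $k$ batches, by Cauchy--Schwarz. Even so, ``a convex function of $R$ plus Jensen'' does not survive adaptivity. A block is charged only while its one is unfound, so on trajectories where the one is found early the block costs nothing however small $R$ is. The hidden location is also correlated with the random, policy-dependent $R$.

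The missing ingredients are those of the paper:
\begin{itemize}
\item[(a)] A deferred-decision argument (Lemma~\ref{lem:hotfree}). Conditioned on $J_{-a}$, the zero-answer simulation fixes block $a$'s query sequence. This makes your ``rank uniform by exchangeability'' rigorous and yields $\E_p[Z_{a,t}/M_{a,t}]\le1/m$.
\item[(b)] Replacing $R$ by the deterministic horizon $H=\lceil2\E_p[R]\rceil$ via Markov's inequality. Then each block is resolved by round $H$ with probability at least $\frac12$. Together with the per-round resolution bound $\frac{(k-1)_+}{M}+\frac2M$, this forces $\sum_{t\le H}\E_p[Z_{a,t}u_{a,t}]\ge\frac14$ when $H\le m/8$.
\item[(c)] Cauchy--Schwarz over $t\le H$, which gives block cost at least $c_0/(16H)$. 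Summing over the $m$ blocks by additivity gives $\KL{p}{q_\pi}\ge c_0m/(48\E_p[R])$.
\end{itemize}
Finally, the hypothesis $m\ge16$ is used only in the complementary branch $H>m/8$, not for any one-round constant.
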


The main difficulty is that a policy may allocate its searches across blocks
according to all previously observed values. A deferred-decision lemma shows
that, despite this coupling, an unresolved block receives at most one free
candidate per global round in expectation. Any further acceleration requires
a nontrivial within-block batch, whose quadratic information cost adds across
the independent blocks. Appendix~\ref{app:onehot} gives the complete argument
and the matching zero-error schedule. The block structure is a reveal-only
analogue of group testing with a single defective per block
\cite{gtsurvey2019}; unlike noisy group testing, where a small constant
number of adaptive rounds suffices \cite{scarlett2019}, here every policy
must pay $\Omega(\sqrt n)$ rounds at any fixed budget.

The construction extends to rectangular arrays.

\begin{corollary}[Binary polynomial depth spectrum]\label{cor:onehotrect}
Let $p_{b,L}^{\mathrm{hot}}$ be the product of $b$ independent one-hot blocks
of length $L$, on $n=bL$ binary coordinates. For the same absolute constant
$c$, every $L\ge16$, $b\ge1$ and $\eps\ge0$ satisfy
\[
c\min\left\{L,\frac{b}{1+\eps}\right\}
\le D_\eps(p_{b,L}^{\mathrm{hot}})
\le\overline D_\eps(p_{b,L}^{\mathrm{hot}})
\le L.
\]
Hence every exponent $\alpha\in(0,\tfrac12]$ occurs as
$D_\eps=\Theta_\eps(n^\alpha)$ along suitable binary families.
\end{corollary}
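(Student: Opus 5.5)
The plan is to prove the upper and lower bounds separately. The upper bound $\overline D_\eps\le L$ is the zero-error schedule of Theorem~\ref{thm:onehot}, applied block by block. In round $t$, for each unresolved block, reveal the single coordinate $X_{a,t}$. Because the blocks are independent, Lemma~\ref{lem:product-additivity} makes the within-round total correlation a sum over blocks, and each block contributes a singleton, so every summand vanishes. After at most $L-1$ rounds each block is resolved: either a $1$ has been seen, or only one position remains and it is forced to equal $1$. Forced coordinates are point masses and can be revealed together in the last round at zero cost. Hence $\KL{p}{q}=0$ with at most $L$ rounds.

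For the lower bound I would reduce to the square case, or rerun its argument, and show
\[
\E_p[R]\ge c\min\{L,b/(1+\eps)\}.
\]
There are two regimes.

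\textbf{Regime $b\ge L$.} Partition the $b$ blocks into $\lfloor b/L\rfloor$ groups of $L$ blocks each, with the remainder thrown into the last group. The difficulty is that an adaptive policy on the full product cannot directly be restricted to one group. I would therefore not restrict, and instead re-examine the proof of Theorem~\ref{thm:onehot}. Its ingredients are the deferred-decision lemma, which gives at most one free candidate per unresolved block per round, and the quadratic cost of within-block batches summed over blocks via Lemma~\ref{lem:product-additivity}. Both should be written for $b$ blocks of length $L$ with $m$ appearing only through $L$. The amortized inequality I expect to come out of that argument reads: expected total cost $\gtrsim$ (number of blocks still unresolved after about $cL$ rounds) times a constant. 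With $b\ge L$ blocks this gives $\E[R]\ge cL/(1+\eps\cdot L/b)\ge cL/(1+\eps)$.

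\textbf{Regime $b<L$.} A block can be cleared in about $b$-many batched rounds at a cost that the budget must pay, and the bound should read $\E[R]\ge cb/(1+\eps)$. The route is an embedding. Pad the instance to a square $p_b^{\mathrm{hot}}$ with $L\ge b$: restrict each length-$L$ block to its first $b$ positions, conditioned on the one lying there. Then argue monotonicity, namely that $D_\eps$ of a product block does not increase under this conditioning-and-projection. This is plausible because a policy for the larger instance can be simulated with the extra coordinates revealed only when they are forced.

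I am unsure about that monotonicity. The safer alternative is to again redo the square proof with $b$ blocks. The per-block candidate count then runs up to $L$, but the cost accounting stops at horizon $b$. This yields $\E[R]\ge cb/(1+\eps)$ once $L\ge b$ and $L\ge16$.

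The main obstacle is making the square-case argument of Theorem~\ref{thm:onehot} genuinely two-parameter. The step to check is that the deferred-decision lemma and the per-block quadratic cost bound depend only on the block length $L$ (requiring $L\ge16$), while the block count $b$ enters only through additivity. Once that parametrization is established, the bound $c\min\{L,b/(1+\eps)\}$ follows by choosing the round horizon $T=c\min\{L,b/(1+\eps)\}$.

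For the spectrum, take $b=n^\alpha$ and $L=n^{1-\alpha}$ with $\alpha\le\tfrac12$. Then $L\ge b$, so
\[
D_\eps=\Theta_\eps(\min\{L,b\})=\Theta_\eps(n^\alpha).
\]
This uses the upper bound $L$ only when $\alpha=\tfrac12$. For $\alpha<\tfrac12$ the upper bound $L$ is not tight, so I would replace it with a better schedule: a bisection-style search using pair batches whose cost is charged across the $b$ blocks, with about $b$ rounds per block and error $O(1)$ for fixed $\eps$. This tightens the upper bound to $O_\eps(b)$.
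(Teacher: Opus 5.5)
Your upper bound $\overline D_\eps\le L$ is the paper's zero-cost search schedule and is fine. The spectrum step, however, has a genuine gap. You take $b=n^\alpha$ blocks of length $L=n^{1-\alpha}$. For $\alpha<\tfrac12$ the lower bound is then $cb/(1+\eps)\asymp_\eps n^\alpha$, but the only proven upper bound is $L=n^{1-\alpha}$, so you need a new $O_\eps(b)$ schedule.

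What you sketch for that schedule is not a proof, and its natural readings fail:
\begin{itemize}
\item Halving a block's candidate set costs a constant number of bits per unresolved block (Lemma~\ref{lem:hotcost}), hence $\Theta(b)$ bits already in the first round.
\item Pair batches need about $L/2\gg n^\alpha$ rounds.
\item Equal batches of size $L/T$ accumulate order $(b\log T)/T$ bits in expectation, i.e.\ $\Theta(\log b)$ at $T\asymp b$.
\end{itemize}
A carefully decreasing profile might work, but that would be a separate result. None of this is needed. Orient the family the other way, $L\asymp n^\alpha$ and $b\asymp n^{1-\alpha}$, as the paper does. Then $b/(1+\eps)\ge L$ for large $n$ when $\alpha<\tfrac12$, and $b=L$ at $\alpha=\tfrac12$. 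So the two displayed bounds $c\min\{L,b/(1+\eps)\}$ and $L$ already agree up to $\eps$-dependent constants.

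For the lower bound, your ``safer alternative'' is exactly the paper's proof, but you leave its one real step unexecuted. The accounting you anticipate is also not what it yields: the price is not a constant per block left unresolved after $cL$ rounds, but order $1/\E_p[R]$ per block. Concretely:
\begin{itemize}
\item Set $d=\E_p[R]$ and $H=\lceil 2d\rceil$. The zero-answer simulation of Lemma~\ref{lem:hotfree} gives $\E_p[Z_{a,t}/M_{a,t}]\le 1/L$; only $L$ enters.
\item If $H\le L/8$, free candidates contribute resolution probability at most $2H/L\le\tfrac14$ by round $H$. The batch terms must supply the remaining $\tfrac14$.
\item Cauchy--Schwarz then gives $\sum_{t\le H}\E_p[Z_{a,t}u_{a,t}^2]\ge 1/(16H)$, and Lemma~\ref{lem:hotcost} gives $K_a\ge c_0/(48d)$ for each of the $b$ blocks.
\item Additivity (Lemma~\ref{lem:product-additivity}) gives $\KL{p}{q_\pi}\ge c_0b/(48d)$, so $d\ge cb/\eps$ with $c=c_0/48$.
\item Otherwise $d>L/32\ge cL$. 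At $\eps=0$ the first branch is impossible.
\end{itemize}
This covers $b\ge L$ and $b<L$ at once. The regime split is therefore unnecessary, and so is the conditioning/projection monotonicity. You left that monotonicity unjustified, and it is far from obvious, since the event that every one lies in the first $b$ positions has probability $(b/L)^b$.

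Note also that your final weakening to $cL/(1+\eps)$ for $b\ge L$ falls short of the statement when $b>(1+\eps)L$, where the required bound is $cL$. Keep $\min\{cL,\,cb/\eps\}$ instead, which gives the displayed bound with the same $c$.
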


\begin{remark}[Coordinate representation]
The latent locations $J_a$ are not revealed coordinates; every coordinate in
the sampling problem has alphabet $\{0,1\}$. The construction is a one-hot
representation of independent categorical locations and therefore also shows
that serial depth is sensitive to the coordinate representation on which
parallel reveals operate. The balanced family of Section~\ref{sec:balanced}
provides a complementary binary example based on one global constraint rather
than a product encoding.
\end{remark}

\subsection{Depth spectrum and further directions}\label{sec:spectrum}

\begin{remark}[Black-box worst case versus distributional
depth]\label{rem:blackbox}
Anari, Gao and Rubinstein \cite{anari2024} prove that any parallel sampler
with counting/marginal-oracle access that must work for \emph{every} target
distribution requires $\widetilde\Omega(n^{1/3})$ rounds on some instance;
their protocol class contains factorized-reveal policies, so the bound
applies verbatim to \emph{distribution-oblivious} samplers in our model. It
does \emph{not} produce a distribution of large serial depth: the quantifier
order there is ``for every algorithm there exists a hard $p$'', whereas
$D_\eps(p)$ fixes $p$ first and minimizes over policies that may depend on
$p$ arbitrarily. The gap is real, not merely formal. The hard instances of
\cite{anari2024} are uniform distributions over affine subspaces
$\{x\in\mathbb F_2^n:\ Ax=b\}$, and every such distribution has
$D_0(p)\le 2$: after Gaussian elimination, one round reveals all free
coordinates (jointly independent uniform bits, so the round's total
correlation is zero), and a second round reveals the pivot coordinates, each
a deterministic function of the revealed values (point-mass conditionals,
again zero cost). Likewise the $\widetilde O(\sqrt n)$ upper bound via
autospeculation \cite{anari2026} lives in a \emph{stronger} protocol class
(speculative rejection is allowed) and says nothing about $D_\eps$; within
the commit-only reveal model, no general $o(n)$ upper bound is possible ---
at zero error by Proposition~\ref{prop:generic} below, and at any fixed
$\eps>0$ by Theorem~\ref{thm:perm}.
\end{remark}

At \emph{zero} error, by contrast, maximal depth is not merely possible but
generic:

\begin{proposition}[Generic distributions are maximally deep at zero
error]\label{prop:generic}
Let $|V|\ge2$. For Lebesgue-almost-every $p$ in the interior of the simplex
$\Delta(V^n)$ (in particular, $p$ of full support), every deterministic
value-adaptive policy with $\KL{p}{q_\pi}=0$ reveals exactly one coordinate
per round on every trajectory. Consequently
$D_0(p)=\overline D_0(p)=n$.
\end{proposition}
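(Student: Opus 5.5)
By Theorem~\ref{thm:identity}, a policy has $\KL{p}{q_\pi}=0$ exactly when every realized summand $\TC(X_{S_t}\mid x_{C_t})$ vanishes along $p$-almost every trajectory. Since $p$ has full support, every reveal state $(C,x_C)$ is reached with positive probability by some realization, because every prefix has positive mass. So zero error forces the following: for every state $(C,x_C)$ actually visited, the selected set $S=\pi(C,x_C)$ is mutually independent under $p(\cdot\mid x_C)$. By Lemma~\ref{lem:tc-basic}, if $|S|\ge2$ this requires, for some pair $i\ne j$ in $S$ (indeed every pair), $I(X_i;X_j\mid x_C)=0$.

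The plan is therefore to show that for Lebesgue-a.e.\ $p$, no pair is ever conditionally independent given any context. Concretely, for every $C\subseteq[n]$, every $x_C\in V^C$, and every distinct $i,j\notin C$, we want $X_i\not\perp X_j\mid X_C=x_C$. There are finitely many triples $(C,x_C,\{i,j\})$, so it suffices to show that each corresponding "bad set" is Lebesgue-null in the simplex; a finite union of null sets is null.

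For a fixed triple, conditional independence given $x_C$ implies a polynomial identity in the coordinates $p(y)$. Choose values $a\ne a'$ and $b\ne b'$ in $V$, which is possible because $|V|\ge2$. Define
\[
P_{uv}:=\sum_{z} p(x_C,\,x_i=u,\,x_j=v,\,z),
\]
where $z$ ranges over assignments to the remaining coordinates. Independence forces the $2\times2$ minor $P_{ab}P_{a'b'}-P_{ab'}P_{a'b}$ to vanish; the normalizing marginal cancels because the context has positive probability. This is a nonzero polynomial in the entries of $p$. To see that it is nonzero, set $p$ concentrated near a point where $P_{ab}>0$, $P_{a'b'}>0$, and $P_{ab'}=P_{a'b}=0$; equivalently, observe that the monomial $p(y)p(y')$ with distinct $y,y'$ has coefficient $1$. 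The polynomial remains nonzero after restriction to the affine hyperplane $\sum_y p(y)=1$, since the witness point can be taken in the open simplex by a small perturbation, which keeps the minor nonzero by continuity. The zero set of a nonzero polynomial on a connected open subset of an affine space has Lebesgue measure zero, so each bad set is null.

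Hence, for a.e.\ $p$, every visited state forces $|S_t|=1$, so $R(x)=n$ on every trajectory, which holds for all $x$ by full support. Combined with sequential revealing attaining zero error, this gives $D_0(p)=\overline D_0(p)=n$. The main obstacle is only the bookkeeping that zero KL forces independence at \emph{every} visited state rather than almost surely. This is handled by full support: $p$-a.s.\ in Theorem~\ref{thm:identity} becomes "for every $x$", and each visited state has positive probability, so its round term must vanish exactly.
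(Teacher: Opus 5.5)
Your proposal is correct and follows essentially the same route as the paper. Both arguments use the equality case of Theorem~\ref{thm:identity} together with full support to reduce zero KL to pairwise conditional independence at every visited state, and then show that each of the finitely many relations $X_i\perp X_j\mid X_C=x_C$ lies in a Lebesgue-null polynomial zero set in the simplex. The only difference is the witness that the polynomial does not vanish identically on the simplex. You use a two-point boundary law for a single $2\times2$ minor and push it into the interior by continuity, whereas the paper uses the explicit interior perturbation $p_\delta(x)\propto1+\delta\,\mathbf 1[x_i=a_0]\mathbf 1[x_j=b_0]$, whose conditional pair law has cross-ratio $1+\delta$.
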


Together with Sections~\ref{sec:examples}--\ref{sec:onehot},
Table~\ref{tab:depth-regimes} summarizes the proven picture.

\begin{table*}[t]
\centering
\caption{Depth guarantees established in this paper.}
\label{tab:depth-regimes}
\footnotesize
\begin{tabular}{p{0.19\textwidth}p{0.52\textwidth}p{0.22\textwidth}}
\hline
distribution & depth & source\\
\hline
independent & $D_0=1$ & Example~\ref{ex:indep}\\
copy chain & $D_0=2$ & Example~\ref{ex:copy}\\
order-$m$ Markov & $D_0\le m(\lceil\log_2 (n/m)\rceil+1)$ &
Theorem~\ref{thm:markov}\\
Bernoulli walk & $D_\eps=\Theta(\log n)$ for every fixed $\eps\ge0$ &
Theorems~\ref{thm:ordergap}(i), \ref{thm:walkconverse},
\ref{thm:walkfixed}\\
affine subspaces & $D_0\le2$ & Remark~\ref{rem:blackbox}\\
balanced binary ($|V|{=}2$) & $D_\eps=\Theta(\log^2n)$ for fixed
$\eps>0$ & Theorem~\ref{thm:balanced}\\
one-hot blocks ($|V|{=}2$) & $D_\eps=\Theta_\eps(\sqrt n)$ for fixed
$\eps\ge0$ & Theorem~\ref{thm:onehot}\\
uniform permutations & $D_\eps\ge n\cdot\max\{2^{-\eps},
\tfrac1{1+2\eps\ln2}\}$; $\overline D_\eps/n\to r_*(\eps)$ &
Theorems~\ref{thm:perm}, \ref{thm:hardcap}, \ref{thm:frontier}\\
generic full-support & $D_0=n$ & Proposition~\ref{prop:generic}\\
\hline
\end{tabular}
\end{table*}

\smallskip
Zero-error depth is a \emph{fragile} quantity: maximal for almost every
distribution, yet collapsing whenever the dependence is exactly resolvable
(Markov structure, affine constraints), and Proposition~\ref{prop:generic}
exploits conditional total correlations that may be arbitrarily small, so
by itself it says nothing about $D_\eps$ at fixed $\eps>0$.
Theorems~\ref{thm:perm}, \ref{thm:walkfixed}, \ref{thm:balanced} and
\ref{thm:onehot} supply the robust counterparts: random permutations stay
$\Omega(n)$-deep, the walk is $\Theta(\log n)$-deep, balanced binary strings
are $\Theta(\log^2n)$-deep, and binary one-hot blocks are
$\Theta(\sqrt n)$-deep at every fixed KL budget. Corollary~\ref{cor:onehotrect}
fills the binary polynomial spectrum through exponent one half.

The results suggest that robust depth is governed by recursive conditional-
independence structure. Graphical models with small recursive separators are
natural candidates for shallow depth, whereas persistent long-range
dependence may support larger depth. Establishing a precise structural
characterization remains an open direction.

\begin{openproblem}[Linear depth over a binary alphabet]\label{op:adaptive}
For a fixed tolerance $\eps>0$, does there exist a family of distributions
$p_n$ on $\{0,1\}^n$ such that $D_\eps(p_n)=\Omega(n)$? More generally, which
exponents $\alpha\in(\tfrac12,1]$ can occur as
$D_\eps(p_n)=\Theta_\eps(n^\alpha)$?
\end{openproblem}

All distribution-specific lower bounds above apply to deterministic
value-adaptive policies, as requested in \cite{cai2026confidence} and beyond
the scope of schedule-level analyses
\cite{chen2025schedules,wainwright2026}. It remains open whether value
adaptivity can improve the asymptotic rate over non-adaptive schedules on some
family, and whether the deterministic converses extend to randomized policies.

\section{Numerical Results}\label{sec:numerical}

\subsection{Finite-blocklength checks of the theoretical results}

Exhaustive search over all zero-error deterministic value-adaptive policies
for Bernoulli walks with $n\le7$ gives
$D_0=1,2,2,2.75,2.875,3,3$. The noninteger values arise because a policy
can exploit trajectory-dependent degenerate segments and clear them without
additional information cost.

For the permutation problem, the dynamic program in
Proposition~\ref{prop:dporder} gives the optimal four-round profiles
$(3,2,2,1)$ for $n=8$, $(6,5,3,2)$ for $n=16$, and $(12,9,7,4)$ for
$n=32$. These profiles place larger batches first and approach the limiting
proportions in Theorem~\ref{thm:fixedR}. At $n=240$
(Figure~\ref{fig:permtrade}), the optimal composition
uses only batches of sizes one and two throughout the proven region
$n-R\le n/3$. Over $r=R/n\in[0.55,0.75]$, the finite-blocklength hard-cap
values differ from the limiting frontier in Theorem~\ref{thm:frontier} by
at most $0.6\%$.

\begin{figure}[t]
\centering
\includegraphics[width=0.7\linewidth]{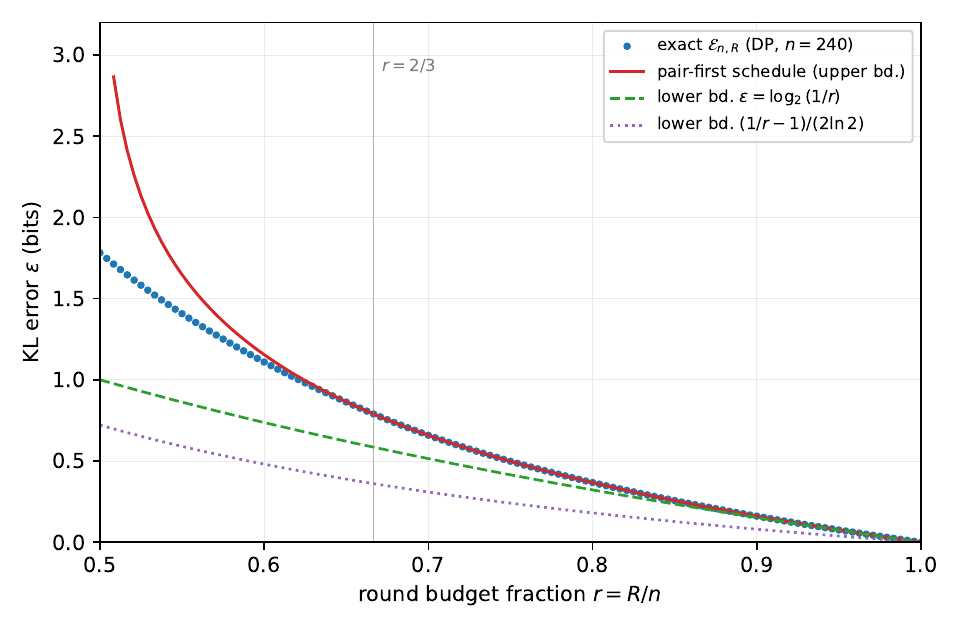}
\caption{The hard-cap round--error tradeoff for uniform random
permutations at $n=240$. The dots are the exact values $\cE_{n,R}$ from
the dynamic program in Proposition~\ref{prop:dporder}. The solid curve is
the pair-first schedule, which equals $\cE_{n,R}$ for $n-R\le n/3$ by
Theorem~\ref{thm:phase1}. The dashed and dotted curves are the two lower
bounds for target-averaged depth from Proposition~\ref{prop:permlog} and
Theorem~\ref{thm:perm}. The vertical line marks the proven endpoint
$R/n=2/3$ of the pair-first phase.}
\label{fig:permtrade}
\end{figure}

\subsection{Conditional-total-correlation measurements on natural text}
\label{sec:empirical}

We measure realized reveal costs on natural text using the chain-rule form
of Theorem~\ref{thm:identity}. For a round revealing $S=\{s_1,\dots,s_k\}$
given context $C$, the realized increment equals
$\sum_j [\log \hat p(x_{s_j}\mid x_{C\cup\{s_{<j}\}}) - \log \hat p(x_{s_j}\mid
x_C)]$, with conditionals $\hat p$ supplied by a bidirectional masked
diffusion model (Qwen2.5-0.5B \cite{qwen2025} converted to an MDLM on
WikiText-103
\cite{merity2017}, following the shift-preserving conversion of Dream
\cite{dream2025}; all
costs are with respect to the model's conditionals, see the caveat below).
Texts: 512 WikiText articles and 31 model-generated texts in four domains
(code, math word problems, prose, structured data), truncated to
$\le 512$ tokens; one generated text falls below the 320-token minimum
length and is excluded, leaving 542 texts in total. Of the 512 WikiText
sequences, 256 are held-out validation examples and 256 come from the training
split. A split-half audit gives nearly identical paired gaps and win rates;
absolute costs on the training half are about three percent higher. Schedules:
left-to-right contiguous blocks
($B\in\{4,16,64\}$), uniformly random position blocks ($B\in\{16,64\}$,
averaged over three seeds), hierarchical bisection, and greedy
minimum-entropy adaptive selection ($k\in\{16,64\}$). All measurements
are conditioned on a first-token anchor: position $0$ is treated as
revealed throughout (under the model's shift convention it has no
context to be predicted from), and costs are normalized by the remaining
$L-1$ tokens. Table~\ref{tab:empirical-cost} reports per-domain means
with standard errors over texts.

\begin{table*}[t]
\centering
\caption{Realized generation cost in bits per token (mean $\pm$ standard
error over texts). In each domain, boldface marks the lowest cost; the
additional boldface bisection entry on WikiText is the lowest cost among
schedules using at most ten rounds.}
\label{tab:empirical-cost}
\footnotesize
\setlength{\tabcolsep}{3.5pt}
\begin{tabular}{lrrrrrrrr}
\hline
 & \multicolumn{3}{c}{left-to-right} & \multicolumn{2}{c}{random} &
 bisect & \multicolumn{2}{c}{greedy min-ent} \\
domain & $B{=}4$ & $B{=}16$ & $B{=}64$ & $16$ & $64$ & & $16$ & $64$\\
\hline
code   & $0.81{\pm}0.10$ & $1.45{\pm}0.16$ & $2.48{\pm}0.19$ & $\mathbf{0.18{\pm}0.02}$ & $0.90{\pm}0.07$ & $0.93{\pm}0.11$ & $0.67{\pm}0.07$ & $1.24{\pm}0.07$\\
math   & $1.58{\pm}0.04$ & $3.01{\pm}0.12$ & $3.89{\pm}0.16$ & $\mathbf{0.15{\pm}0.01}$ & $0.77{\pm}0.05$ & $1.04{\pm}0.04$ & $1.21{\pm}0.12$ & $1.64{\pm}0.09$\\
prose  & $2.23{\pm}0.08$ & $3.84{\pm}0.17$ & $4.54{\pm}0.18$ & $\mathbf{0.20{\pm}0.02}$ & $0.94{\pm}0.02$ & $1.12{\pm}0.18$ & $1.63{\pm}0.16$ & $1.91{\pm}0.17$\\
struct & $0.89{\pm}0.20$ & $1.63{\pm}0.29$ & $2.60{\pm}0.29$ & $\mathbf{0.17{\pm}0.02}$ & $0.76{\pm}0.04$ & $1.11{\pm}0.23$ & $0.66{\pm}0.10$ & $1.23{\pm}0.13$\\
wiki   & $2.30{\pm}0.01$ & $4.34{\pm}0.01$ & $5.34{\pm}0.02$ & $\mathbf{0.15{\pm}0.00}$ & $0.62{\pm}0.00$ & $\mathbf{0.47{\pm}0.01}$ & $1.95{\pm}0.02$ & $2.30{\pm}0.03$\\
\hline
\end{tabular}

\smallskip
\parbox{0.96\textwidth}{\footnotesize Round counts (text lengths are
truncated to multiples of 64 in $[320,512]$, so counts vary with length):
$B{=}4$: 80--128; $B{=}16$, random-16, greedy-16: 20--32; $B{=}64$,
random-64, greedy-64: 5--8; bisect: 9--10.}
\end{table*}

Three findings. \textbf{(i) Left-to-right is the worst order at every
budget}, by $2.7$--$29\times$ over random order at equal round counts,
across all domains --- the natural-text counterpart of
Theorem~\ref{thm:ordergap}. \textbf{(ii) Scattered orders are nearly
free}: uniformly random sets of 16 positions cost only
$\approx0.15$--$0.20$ bits/token; consistent with rapidly decaying
conditional dependence at distance, and with the strong guarantees known
for random-order schedules \cite{chen2025schedules,zhao2026}.
\textbf{(iii) Greedy minimum-entropy selection --- the
confidence-ordering heuristic underlying standard MDLM decoders
\cite{llada2025,entropyunmask2025} --- costs
$1.4$--$13\times$ more than random order at equal parallelism}: coordinates
with low conditional entropy cluster next to revealed context, so greedy sets are spatially clumped and
pay within-round dependence. Bisection is the best schedule on WikiText
among those using at most ten rounds ($0.47$ bits/token in $9$--$10$
rounds, versus $0.62$ for random order at $B{=}64$ in $5$--$8$ rounds),
but its advantage is not uniform across domains: its final rounds reveal
dense alternating sets, which natural (non-Markov) text penalizes.

\subsection{Batch-size profiles: a theory-guided experiment}
\label{sec:empprofiles}

Theorem~\ref{thm:fixedR} shows that at a fixed round budget the batch-size
\emph{profile} matters: on $p_{\mathrm{perm}}$, equal batches are
suboptimal, and the optimal profile is \emph{decreasing} (large batches
first). Does the profile effect transfer to natural text? We fix the round
budget $R\in\{4,8,16\}$ and reveal uniformly random positions, cutting one
shared random permutation (per seed) into consecutive batches --- so the
three profiles are compared \emph{pairwise on identical position
sequences}: equal batches; the $\rho$-decreasing profile of
Theorem~\ref{thm:fixedR}; and its reversal (increasing, small batches
first).

\begin{figure}[t]
\centering
\includegraphics[width=0.62\linewidth]{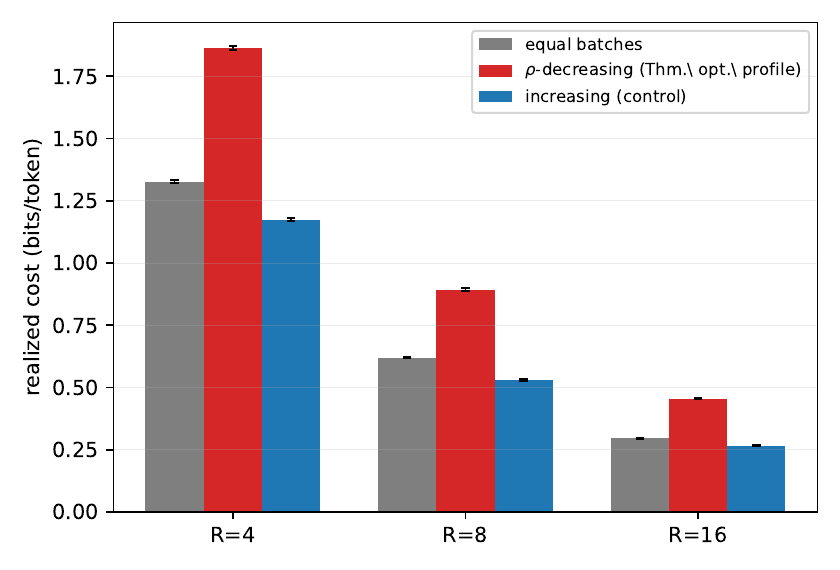}
\caption{Batch-size profiles at equal round budget on natural text
(542 texts, 3 seeds, paired positions). The cost ordering
increasing $<$ equal $<$ decreasing is the \emph{reverse} of the provably
optimal order on random permutations (Theorem~\ref{thm:fixedR}); its
equal-versus-decreasing half holds for every text, seed and $R$, while
increasing $<$ equal holds at $R{=}4,8$ and weakens to the measurement
resolution at $R{=}16$ (see text).}
\label{fig:profiles}
\end{figure}

The decreasing-versus-equal half of the result
(Figure~\ref{fig:profiles}) is unambiguous --- and it is the
\emph{reverse} of the permutation ordering. Pooled over all 542 texts, the
per-token costs are $1.86/1.33/1.17$ ($R{=}4$), $0.89/0.62/0.53$
($R{=}8$), and $0.46/0.30/0.27$ ($R{=}16$) bits for
decreasing/equal/increasing respectively. In paired per-text comparisons
the decreasing profile loses to equal batches on \emph{every one} of the
542 texts at every $R$ ($+0.54$, $+0.27$, $+0.16$ bits/token on average),
and this holds separately for each of the three position seeds. The
increasing profile beats equal batches on $77$--$90\%$ of texts
($-0.152{\pm}0.006$ at $R{=}4$, $-0.089{\pm}0.003$ at $R{=}8$,
$-0.031{\pm}0.002$ at $R{=}16$; strongest on code and structured data,
neutral on the small prose sample). The increasing-versus-equal gap is
robust across position seeds at $R{=}4$ and $R{=}8$ (per-seed win rates
$0.71$--$0.93$), but at $R{=}16$ it approaches the resolution of the
measurement: the per-seed win rates are $0.52/0.76/0.76$, and the gap is
then comparable to the enumeration-order spread quantified in the
qualification below. A consistent explanation is the sign of the cost gradient along
the reveal process. On $p_{\mathrm{perm}}$, later reveals are \emph{more}
correlated --- the unused alphabet shrinks, $g(M,B)$ grows as $M$ falls
--- so large batches belong early, provably. On natural text the data are
consistent with the opposite gradient: accumulated context appears to
screen off the remaining coordinates (as the near-free random-16 costs of
finding (ii) already indicate), making later reveals cheaper, so large
batches belong late. We present this as an empirical regularity of the
measured pseudo-costs, not a theorem about natural text. The hierarchical
bisection schedule of Theorem~\ref{thm:markov} is itself an
increasing-profile schedule ($1,2,4,8,\dots$ reveals per round), which
this experiment independently motivates for natural data as well.

\subsection{Deployed decoding rules under the cost identity}
\label{sec:empdecoders}

The schedules above are generic. We now measure, under the identical
protocol (same 542 texts, model, anchor convention, and teacher-forced
pseudo-costs), the selection rules actually deployed by masked diffusion
decoders: \emph{confidence top-$k$}, which reveals the $k$ unrevealed
positions of highest model confidence $\max_v\hat p(v\mid\cdot)$
\cite{maskpredict2019,llada2025}; \emph{global thresholding}, which
reveals every position with confidence at least $\tau$, falling back to
the single most confident position when none passes \cite{fastdllm2025};
and \emph{block thresholding}, the semi-autoregressive variant that
thresholds within left-to-right blocks of $32$
\cite{fastdllm2025,blockdiff2025}. Selection uses only the model's
conditionals given previously revealed values --- never the ground truth
--- so each rule has the deterministic value-adaptive reveal form of
Definition~\ref{def:policy}, although the learned conditionals need not define
a coherent joint distribution. We evaluate the trajectorywise chain cost
suggested by Theorem~\ref{thm:identity}, settle it on the ground-truth tokens,
and enumerate positions within each round in descending confidence order (the
rule's natural commit order).

\begin{table}[t]
\centering
\caption{Deployed decoding rules versus reference schedules (pooled over
542 texts; mean $\pm$ standard error). Reference rows repeat
Section~\ref{sec:empirical} schedules at matched round counts.}
\label{tab:decoders}
\footnotesize
\setlength{\tabcolsep}{5pt}
\begin{tabular}{lrr}
\hline
rule & rounds & bits/token\\
\hline
threshold $\tau{=}0.99$          & $493.1$ & $0.0005{\pm}0.0002$\\
block-32 threshold $\tau{=}0.99$ & $493.7$ & $0.0004{\pm}0.0001$\\
threshold $\tau{=}0.9$           & $478.2$ & $0.0070{\pm}0.0007$\\
block-32 threshold $\tau{=}0.9$  & $479.6$ & $0.0059{\pm}0.0006$\\
\hline
confidence top-16                & $31.6$  & $1.650{\pm}0.019$\\
greedy min-entropy 16            & $31.6$  & $1.897{\pm}0.023$\\
random 16                        & $31.6$  & $0.151{\pm}0.001$\\
\hline
confidence top-64                & $7.9$   & $2.071{\pm}0.021$\\
greedy min-entropy 64            & $7.9$   & $2.256{\pm}0.029$\\
random 64                        & $7.9$   & $0.632{\pm}0.004$\\
\hline
increasing profile $R{=}16$      & $16.0$  & $0.266{\pm}0.002$\\
increasing profile $R{=}8$       & $8.0$   & $0.531{\pm}0.003$\\
bisection                        & $9.9$   & $0.499{\pm}0.008$\\
\hline
\end{tabular}
\end{table}

\begin{figure}[t]
\centering
\includegraphics[width=0.98\linewidth]{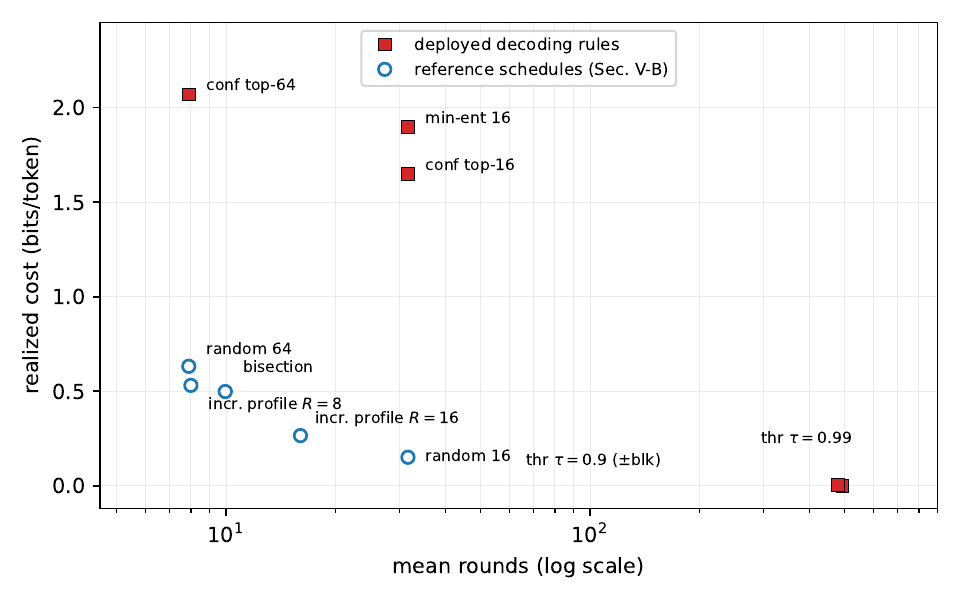}
\caption{Deployed decoding rules and reference schedules in the
round--cost plane (pooled means over 542 texts, logarithmic round axis).
The deployed rules occupy two extremes --- near-serial threshold rules
and expensive confidence rules --- while the reference schedules provide
measured intermediate tradeoff points.}
\label{fig:decoders}
\end{figure}

Table~\ref{tab:decoders} and Figure~\ref{fig:decoders} report the
results. Three observations. \textbf{(i) Threshold decoding buys its
near-zero excess cost with near-total serialization}: at $\tau=0.9$, on
average only $\approx1.07$ positions clear the threshold per round, and
at $\tau=0.99$ decoding is essentially sequential; the block-constrained
variant is indistinguishable (paired difference $+0.001$ bits/token, win
rate $0.44$). This is not an artifact of measuring on natural rather
than model-generated text: on the model-generated domains the pass rate
is $1.00$--$1.02$ per round. The binding mechanism is that few positions
are confident under a mostly-masked context, whatever the text source.
(Wall-clock speedups reported for threshold decoders \cite{fastdllm2025}
rest on caching and on the confidence dynamics of self-generation; the
information-side parallelism isolated here is a complementary quantity,
and Section~\ref{sec:empgen} measures the self-generation dynamics
directly.)
\textbf{(ii) At a fixed round budget, confidence selection is an
inverted heuristic}: paired with uniformly random selection at equal
round counts, confidence top-16 is more expensive on every one of the
542 texts ($1.650$ versus $0.151$ bits/token pooled, a factor of
$\approx11$; between $3.8\times$ and $11.3\times$ in every domain), and
top-64 likewise ($542/542$, $\approx3.3\times$). The mechanism matches
finding~(iii) of Section~\ref{sec:empirical}: confident positions
cluster next to revealed context, exactly where within-round dependence
is strongest --- indeed confidence top-$k$ and greedy minimum-entropy
selection behave as one family (per-text win rate $0.68$ at $k{=}16$).
\textbf{(iii) Theory-guided schedules improve on confidence rules at matched
round counts}: at
$\approx8$ rounds the increasing profile costs $0.53$ bits/token against
$2.07$ for confidence top-64 ($3.9\times$), and at $\approx32$ rounds
random scatter costs $0.15$ against $1.65$ for top-16. The resulting chain-cost
measurement thus provides a decoder-independent diagnostic: in the measured
round--cost plane, the confidence rules lie well above the best reference
schedules at comparable round counts, and the gap is concentrated precisely
where current heuristics place their reveals.

The reported values require one qualification. They are model-based costs evaluated by
\emph{teacher forcing}: the chain-rule increments are computed on the
observed WikiText articles (respectively, on previously generated texts
for the four synthetic domains), not on samples drawn from the model
itself. The identity of
Theorem~\ref{thm:identity} is exact for the distribution defined by coherent
conditionals, while a learned MDLM's conditionals need not be coherent;
in that case the realized increments depend on the within-round
enumeration order $s_1,\dots,s_k$, which we fix as ascending position
order for all fixed schedules, ascending-entropy order for greedy
selection, and descending confidence for the deployed rules of
Section~\ref{sec:empdecoders}.\footnote{Measured directly: on 50
WikiText articles we held the position sequences fixed and replaced the
within-round enumeration order by five random permutations. The median
spread (maximum minus minimum over the six orders) is $0.035$, $0.073$
and $0.092$ bits/token for random-16, bisection and confidence-top-16
respectively --- $2.3\%$, $4.8\%$ and $6.1\%$ of the $1.51$ bits/token
confidence-versus-random gap on the same texts. Differences above $0.1$
bits/token are therefore robust to the convention; differences below
$0.05$ bits/token are reported only as paired observations under a fixed
convention. Randomizing the order lowers the measured cost slightly and
consistently ($0.004$--$0.032$ bits/token), so the reported values are
mild overestimates relative to an order-averaged pseudo-cost.} The
reported numbers are therefore \emph{order-dependent pseudo-costs}
rather than total correlations of a single joint distribution, and they come
from a weak (0.5B) model. Directionally,
the domain ordering (code/structured cheapest, prose/encyclopedic dearest)
agrees with independent measurements of greedy acceptance lengths in
autoregressive continuation.

A second qualification concerns numerical precision. On a 16-text audit,
fixed-schedule pseudo-costs are numerically stable: recomputing them in single
precision moves them by at most $0.006$ bits/token
(median $\approx0.002$), an order of magnitude below every effect
reported above. The adaptive rules of Section~\ref{sec:empdecoders} are
different in kind: rounding perturbs which positions get selected ---
under half and single precision the confidence-top-16 reveal
trajectories share only $4\%$ of their positions --- so their absolute
costs vary by $\pm0.5$ bits/token on average across precision and
batching configurations. In this audit, the confidence-versus-random gap is
$+1.5$ to $+1.6$ bits/token under half precision (depending on batching) and
$+2.04$ under full precision. Thus its direction and order of magnitude
survive the audit, although the absolute adaptive-policy pseudo-costs remain
implementation-dependent; the conclusions below use only these robust
separations.

\subsection{From pseudo-cost to generation quality: a self-sampling
test}\label{sec:empgen}

The costs above are teacher-forced. To test whether they are informative about
deployed behavior, we generated text by \emph{self-sampling}: from a
64-token prefix the model fills the remaining 448 positions, revealing
in each round the set chosen by the policy and sampling those tokens
independently from its conditional marginals (temperature $1$). Prompts
are 200 held-out WikiText prefixes --- drawn from the 256 validation
sequences excluded from MDLM training --- plus 31 prompts from the four
synthetic domains; nine policies run on all 231 prompts with three
seeds, paired on prompt and seed. Alongside the rules and schedules
above we include a \emph{spaced-profile} schedule assembled from the two
principles the cost measurements single out: the increasing batch
profile of Section~\ref{sec:empprofiles}, with each round's positions
drawn at random subject to a minimum distance (decaying $8\to1$ over the
rounds) from all previously revealed and same-round positions. The schedule is
model-independent and can be precomputed; it requires neither confidence-based
selection nor total-correlation estimation. Quality is scored by an external
autoregressive judge (Qwen3-VL-8B-Instruct \cite{qwen3vl2025}) as the
perplexity of the generated span only; the judge uses its own tokenizer,
so perplexities are comparable only across policies within this
protocol.

\begin{table}[t]
\centering
\caption{Self-sampling generation quality (231 prompts $\times$ 3 seeds,
paired; arithmetic mean of per-sequence judge perplexity on the generated
span, with standard errors clustered by prompt; rep-4 is the $4$-gram
repetition rate).}
\label{tab:genquality}
\footnotesize
\setlength{\tabcolsep}{5pt}
\begin{tabular}{lrrr}
\hline
policy & rounds & judge PPL & rep-4\\
\hline
threshold $\tau{=}0.9$      & $431.9$ & $120.6{\pm}4.9$   & $0.068$\\
bisection                   & $9.0$   & $192.2{\pm}5.9$   & $0.019$\\
spaced profile $R{=}16$     & $16.0$  & $201.7{\pm}6.9$   & $0.023$\\
random 16                   & $28.0$  & $236.9{\pm}7.5$   & $0.016$\\
increasing profile $R{=}16$ & $16.0$  & $276.2{\pm}8.5$   & $0.018$\\
increasing profile $R{=}8$  & $8.0$   & $391.6{\pm}11.3$  & $0.012$\\
random 64                   & $7.0$   & $585.3{\pm}15.2$  & $0.006$\\
confidence top-16           & $28.0$  & $1529.6{\pm}40.8$ & $0.003$\\
confidence top-64           & $7.0$   & $1892.6{\pm}47.0$ & $0.001$\\
\hline
\end{tabular}
\end{table}

Table~\ref{tab:genquality} reports the results. The teacher-forced cost
ranking agrees closely with the judge-perplexity ranking at the policy level,
with Spearman correlation $+0.88$ across the nine policies (dropping the
extreme threshold point,
$+0.83$; dropping also bisection, $+0.96$; both rank reversals are
theory-side schedules whose quality rank is \emph{better} than their cost
rank). At a matched budget of 28 rounds, random-16 attains
judge perplexity $237$ against $1530$ for confidence top-16 and wins on
$99\%$ of the 693 paired generations; at 7 rounds the same holds ($585$
against $1893$). The spaced-profile schedule attains perplexity $202$ in
16 rounds --- better than confidence top-16 achieves in 28 rounds
(paired win rate $0.99$) and better than the increasing profile alone at
equal rounds (win $0.75$); on the teacher-forced axis it matches
random-16 at half the round count ($0.153$ versus $0.151$ bits/token,
within the enumeration-order resolution). Its gain over the increasing
profile alone concentrates on code, prose and WikiText (win
$0.75$--$0.92$) and is absent on the small math and structured-data
samples ($0.57$--$0.58$). Finally, the threshold rule is even less
parallel under self-sampling than under teacher forcing: at $\tau=0.9$
the median number of positions clearing the threshold per round is
\emph{zero} ($91.7\%$ of rounds), so the decoder degenerates to one
token per round (432 rounds); its best-in-table perplexity ($121$) is
bought at full serialization, comes with the highest $4$-gram repetition
rate of all nine policies, and an autoregressive judge may itself favor
near-sequential generation. These results are one model (0.5B) at one
temperature; they close the loop from the identity's pseudo-costs to
deployed sampling quality on this testbed, not more.

\appendices

\section{Proof of the Adaptive Cost Identity}\label{app:identity}

\begin{proof}[Proof of Lemma~\ref{lem:tc-basic}]
For a fixed context $x_C$, total correlation is the divergence between the
conditional joint law and the product of its conditional marginals,
\[
\TC(X_A\mid x_C)
=\KL{p_A(\cdot\mid x_C)}{\prod_{i\in A}p_i(\cdot\mid x_C)}.
\]
Nonnegativity and the equality condition follow from the corresponding
properties of KL divergence. For a disjoint partition $S=A\cup B$, the
entropy chain rule gives
\begin{align*}
\TC(X_S\mid x_C)
&=\TC(X_A\mid x_C)+\TC(X_B\mid x_C)\\
&\quad+I(X_A;X_B\mid x_C).
\end{align*}
All added terms are nonnegative, which proves subset monotonicity. Taking
$A=\{i,j\}$ gives the pairwise mutual-information bound.
\end{proof}

\begin{proof}[Proof of Theorem~\ref{thm:identity}]
\emph{Trajectory.} $C_1=\emptyset$ is value-free, so $S_1$ is well defined;
inductively $S_t$ depends only on $x_{C_t}$, hence is determined by $x$, and
the $S_t$ partition $[n]$.

\emph{Chain rule.} Write $p_A$ for the marginal of $p$ on
$A\subseteq[n]$. For each $t$,
$p(x_{S_t}\mid x_{C_t}) = p_{C_{t+1}}(x_{C_{t+1}})/p_{C_t}(x_{C_t})$
(both sides are functions of $x_{C_{t+1}}$ only, and $C_{t+1}$ is determined
by $x_{C_t}$). The product over $t$ telescopes from
$p_{\emptyset}\equiv 1$ to $p_{[n]}(x)=p(x)$, giving the first formula
in~\eqref{eq:chain}.

\emph{Output law.} By construction of the process, conditionally on reaching
state $(C_t, x_{C_t})$ the round-$t$ output has probability
$\prod_{i\in S_t}p(x_i\mid x_{C_t})$; multiplying along the (unique)
trajectory consistent with $x$ yields the second formula
in~\eqref{eq:chain}. Note $q_\pi$ may charge points outside
$\mathrm{supp}(p)$; forward KL is still well defined because
$q_\pi(x)>0$ whenever $p(x)>0$ (each factor
$p(x_i\mid x_{C_t})\ge p(x_{S_t}\mid x_{C_t})>0$), and by the null-history
paragraph the value of $q_\pi(x)$ on $\mathrm{supp}(p)$ --- hence
$\KL{p}{q_\pi}$ --- is independent of the oracle extension.

\emph{Identity.} Dividing the two formulas in \eqref{eq:chain}, taking
logarithms, and averaging under $p$ gives the first equality in
\eqref{eq:identity}. The second equality
in~\eqref{eq:identity} is the tower property: conditioned on the realized
history $(C_t,x_{C_t})$, the remaining coordinates are distributed as
$p(\cdot\mid x_{C_t})$, so
$\E[\tc(x_{S_t}|x_{C_t})\mid X_{C_t}=x_{C_t}]=\TC(X_{S_t}\mid x_{C_t})\ge0$.
Because the summands are nonnegative, equality holds precisely when every
visited round has zero conditional total correlation $p$-almost surely.
\end{proof}

\section{Proofs for Bounded-Order Markov Distributions}\label{app:markov}

\begin{lemma}[Block separation]\label{lem:sep}
Let $p$ be order-$m$ Markov and let $B=\{s{+}1,\dots,s{+}m\}$ be $m$
consecutive coordinates. Then, for every $x_B$ with $p_B(x_B)>0$,
conditionally on $X_B=x_B$ the coordinate blocks
$X_{\le s}$ and $X_{\ge s+m+1}$ are independent.
\end{lemma}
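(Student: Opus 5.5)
The plan is to factor the joint mass directly from the Markov representation in Definition~\ref{def:markov}. Write $p(x)=\prod_{i=1}^{n-m}\phi_i(x_{i:i+m})$. The windows are $\{i,\dots,i+m\}$ of length $m+1$. The first step is to check that none of them can meet both $\{1,\dots,s\}$ and $\{s+m+1,\dots,n\}$. Such a window would have to contain some index $\le s$ and some index $\ge s+m+1$. By contiguity it would then contain all of $\{s,\dots,s+m+1\}$, which has $m+2>m+1$ elements.

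So every factor depends on $x$ only through $(x_{\le s},x_B)$ or only through $(x_B,x_{\ge s+m+1})$. Windows lying inside $B$ can be assigned to either group. Fix such an assignment and write
\[
p(x)=\alpha(x_{\le s},x_B)\,\beta(x_B,x_{\ge s+m+1}),
\]
where $\alpha$ and $\beta$ are products of the corresponding nonnegative potentials. The edge cases $s=0$ or $s+m=n$ have an empty block, which is trivially independent.

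The second step is to condition on $X_B=x_B$ with $p_B(x_B)>0$. The conditional law is
\[
p(x_{\le s},x_{\ge s+m+1}\mid x_B)=\frac{\alpha(x_{\le s},x_B)\,\beta(x_B,x_{\ge s+m+1})}{p_B(x_B)},
\]
and for fixed $x_B$ this is a product of a function of $x_{\le s}$ and a function of $x_{\ge s+m+1}$. Summing out either block shows that the normalizer factorizes as
\[
p_B(x_B)=\Bigl(\sum_{u}\alpha(u,x_B)\Bigr)\Bigl(\sum_{w}\beta(x_B,w)\Bigr).
\]
Since $p_B(x_B)>0$, both sums are positive. Dividing through then gives the conditional law as the product of its two conditional marginals, which is exactly the claimed independence.

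The only point needing care is the window-counting argument together with the handling of zero potentials, since $p$ need not have full support. The positivity of $p_B(x_B)$ is exactly what makes both partial sums nonzero, so the division is legitimate. I do not expect any genuine obstacle beyond these two points.
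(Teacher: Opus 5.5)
Your proposal is correct and follows essentially the same route as the paper: the window-counting observation that no $(m{+}1)$-window can meet both $X_{\le s}$ and $X_{\ge s+m+1}$, followed by factorizing $p$ into an $(x_{\le s},x_B)$ part and an $(x_B,x_{\ge s+m+1})$ part. Your explicit factorization of the normalizer $p_B(x_B)$, with positivity of both partial sums, spells out the step the paper compresses into ``the conditional distribution factorizes accordingly.''
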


\begin{proof}[Proof of Lemma~\ref{lem:sep}]
A window $\{i,\dots,i+m\}$ has $m{+}1$ consecutive elements, so it cannot
contain both a coordinate $\le s$ and a coordinate $\ge s{+}m{+}1$ (that
would require length $\ge m+2$). Hence every potential depends only on
$(x_{\le s}, x_B)$ or only on $(x_B, x_{\ge s+m+1})$, and the conditional
distribution given $X_B=x_B$ factorizes accordingly.
\end{proof}

\begin{proof}[Proof of Theorem~\ref{thm:markov}]
Call $m$ consecutive coordinates an \emph{$m$-block}. The schedule maintains
a collection of disjoint \emph{active segments} (intervals of not-yet-revealed
coordinates), initially $\{[n]\}$, with the invariant
\begin{itemize}[leftmargin=2.6em]
\item[(I)] any two distinct active segments are separated by an $m$-block all
of whose coordinates are already revealed, or by an end of the sequence.
\end{itemize}
A \emph{stage} consists of $m$ rounds. At the start of a stage, assign each
active segment a \emph{target}: its central $m$-block if its length exceeds
$m$, and the whole segment otherwise. In each round of the stage reveal, in
parallel across segments, one still-unrevealed coordinate of every segment's
target (a segment whose target is exhausted skips the round).

\emph{Every round has zero cost.} Fix a round with reveal set $S$; by
construction $S$ contains at most one coordinate per active segment. By
invariant~(I) the context $C$ contains a fully revealed $m$-block between any
two active segments, so by Lemma~\ref{lem:sep}, applied at each separator in
turn (induction on the number of segments), the restrictions of $X$ to
distinct active segments are mutually independent conditionally on those
blocks. The remaining conditioning coordinates of $C$ lie outside all active
segments or \emph{inside} individual segments (coordinates of the current
stage's targets revealed in earlier rounds); conditioning on them refines
each factor separately and preserves the product structure. Hence the
coordinates of $S$, lying in distinct segments, are mutually conditionally
independent given $x_C$, so $\TC(X_S\mid x_C)=0$ for every realized history,
and $\KL{p}{q}=0$ by Theorem~\ref{thm:identity}.

\emph{Round count.} At the end of a stage every target is fully revealed: a
segment of length $\ell\le m$ disappears, while a segment of length $\ell>m$
splits into two active segments of length at most
$\lceil(\ell-m)/2\rceil\le\ell/2$, each flanked by the newly revealed central
$m$-block on one side and by its previous boundary on the other --- restoring
invariant~(I). Thus the maximal active-segment length at least halves per
stage while it exceeds $m$; after $\lceil\log_2(n/m)\rceil$ stages it is at
most $m$, and one further stage empties every remaining segment. This gives
at most $\lceil\log_2(n/m)\rceil+1$ stages of $m$ rounds each. The schedule
depends only on segment lengths, never on values.
\end{proof}

\section{Proofs for the Bernoulli Walk}\label{app:walk}

\begin{lemma}[Binomial entropy]\label{lem:binent}
For all $j\ge1$, the entropy satisfies
$H\bigl(\Bin(j,\tfrac12)\bigr)\ge\tfrac12\log_2(\pi j)$.
\end{lemma}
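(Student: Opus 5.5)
The plan is to lower-bound the Shannon entropy by the collision (R\'enyi-2) entropy and evaluate the collision probability in closed form. For any discrete law $P$, Jensen's inequality applied to the concave function $\log_2$ gives
\[
H(P)=\E_P\bigl[-\log_2 P(U)\bigr]\ \ge\ -\log_2 \E_P[P(U)]=-\log_2\sum_u P(u)^2 .
\]
For $P=\Bin(j,\tfrac12)$ the collision probability follows from Vandermonde's identity $\sum_{i=0}^j\binom ji^2=\binom{2j}{j}$:
\[
\sum_{i=0}^j\Bigl(\binom ji 2^{-j}\Bigr)^2=\binom{2j}{j}4^{-j}=:a_j .
\]
Hence $H(\Bin(j,\tfrac12))\ge-\log_2 a_j$, and it suffices to show $a_j\le 1/\sqrt{\pi j}$.

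For that central-binomial bound, I will show that $b_j:=j\,a_j^2$ is strictly increasing and converges to $1/\pi$. Monotonicity follows from the ratio $a_{j+1}/a_j=(2j+1)/(2j+2)$, which gives
\[
\frac{b_{j+1}}{b_j}=\frac{(2j+1)^2}{4j(j+1)}=1+\frac1{4j(j+1)}>1 .
\]
The limit $b_j\to1/\pi$ is Wallis' product, or equivalently Stirling's formula $\binom{2j}{j}\sim4^j/\sqrt{\pi j}$. An increasing sequence lies below its limit, so $j a_j^2<1/\pi$, that is, $a_j<1/\sqrt{\pi j}$. This yields $H(\Bin(j,\tfrac12))\ge\tfrac12\log_2(\pi j)$ for every $j\ge1$.

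I do not expect a real obstacle here. The one point needing care is the choice of bound. The cruder min-entropy bound $-\log_2\max_i P(i)$ loses half a bit, since $\max_i P(i)\approx\sqrt{2/(\pi j)}$, and so only gives $\tfrac12\log_2(\pi j/2)$. The collision entropy is exactly what produces the constant $\pi$ rather than $\pi/2$.
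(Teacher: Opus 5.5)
Your proposal is correct and follows the paper's proof essentially step for step: collision-entropy lower bound, Vandermonde's identity to get $\binom{2j}{j}4^{-j}$, and monotonicity of $j\binom{2j}{j}^2 16^{-j}$ via the ratio $1+\frac{1}{4j(j+1)}$ increasing to its Stirling/Wallis limit $1/\pi$. The only difference is cosmetic: the paper normalizes by $\sqrt{\pi}$ so that its limit is $1$.
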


\begin{proof}[Proof of Lemma~\ref{lem:binent}]
Shannon entropy dominates collision (R\'enyi-2) entropy,
$H \ge H_2 = -\log_2 \sum_k \Pr[\Bin(j)=k]^2$, and by Vandermonde's identity
$\sum_k\binom jk^2=\binom{2j}j$ the collision probability equals
$\binom{2j}{j}4^{-j}$. It remains to check
$\binom{2j}{j}\le 4^j/\sqrt{\pi j}$ for all $j\ge1$: the ratio
$a_j:=\binom{2j}{j}\sqrt{\pi j}\,4^{-j}$ satisfies $a_1=\sqrt\pi/2<1$ and
$(a_{j+1}/a_j)^2=\frac{(2j+1)^2}{(2j+2)^2}\cdot\frac{j+1}{j}
=\frac{(2j+1)^2}{4j(j+1)}=1+\frac{1}{4j(j+1)}>1$, so $a_j$ increases to its
Stirling limit $1$; hence $a_j<1$ for all $j$.
\end{proof}

\begin{proof}[Proof of Theorem~\ref{thm:ordergap}]
(i) Zero cost is Theorem~\ref{thm:markov} with $m=1$. For the exact round
count: with $m=1$ each stage is a single round revealing the midpoint of
every active segment, so the number of rounds needed for a maximal segment
length $\ell$ obeys $L(\ell)=1+L(\lceil(\ell-1)/2\rceil)$ with $L(0)=0$,
which solves to $L(\ell)=\lceil\log_2(\ell+1)\rceil$ (induction on $\ell$,
splitting on parity), giving $\lceil\log_2(n+1)\rceil$ rounds. (Check
$n=8$: $\{8\}\to\{4\}\to\{2,6\}\to\{1,3,5,7\}$, four rounds
$=\lceil\log_2 9\rceil$.)

(ii) Under a contiguous scan, at the start of a round the revealed set is a
prefix $\{1,\dots,a\}$; the round reveals $S=\{a{+}1,\dots,a{+}B\}$. Given
$x_{\le a}$ (Markov: given $x_a$): the joint law of $X_S$ is that of $B$
fresh increments, with conditional joint entropy exactly $B$ bits, while
the conditional marginal of $X_{a+j}$ is $x_a+\Bin(j,\frac12)$ with entropy
$H(\Bin(j,\frac12))$, independent of the realized $x_a$. Hence, by
Theorem~\ref{thm:identity}, the round cost is the value-free quantity
$\sum_{j\le B}H(\Bin(j))-B$, and the total is as displayed. By
Lemma~\ref{lem:binent} and $\log_2 B! \ge B\log_2(B/e)$,
\begin{align*}
\sum_{j=1}^{B}H(\Bin(j))-B
&\ge \tfrac12\log_2\bigl(\pi^B B!\bigr)-B\\
&\ge \tfrac B2\log_2 B-c_0B=:f(B).
\end{align*}
$f$ is convex on $[1,\infty)$, so
$\sum_t f(B_t)\ge R\, f\!\bigl(\tfrac nR\bigr)
= \tfrac n2\log_2\tfrac nR - c_0 n$ by Jensen applied to the
equal-size profile. The consequence follows by rearranging
$\tfrac n2\log_2\tfrac nR - c_0 n \le \delta n$.
\end{proof}

\begin{lemma}[Bridge structure]\label{lem:bridge}
Let $(C,x_C)$ be a context with $p_{\mathrm{walk},C}(x_C)>0$. Partition
$[n]\setminus C$ into maximal intervals, called segments. Each segment is
flanked by revealed coordinates, with $x_0=0$ a permanent left anchor,
except that the rightmost segment may be open. Conditionally on $x_C$, the
following properties hold.
\begin{enumerate}[label=(\roman*),leftmargin=2.2em]
\item Distinct segments are independent. A segment flanked by revealed
positions $a<b$ carries increments $Z_{a+1},\dots,Z_b$ that are uniform
over arrangements with sum $k=x_b-x_a$. An open segment carries fresh
i.i.d.\ fair increments.
\item If $0<k<b-a$, then no coordinate of the segment is determined and
any two coordinates are strictly dependent. The same conclusion holds in
an open segment.
\item If $k\in\{0,b-a\}$, then every coordinate of the segment is
determined, and the segment is contained in a maximal constant run of
increments of every consistent $x$.
\end{enumerate}
\end{lemma}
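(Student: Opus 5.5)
The plan is to pass from the walk coordinates $x$ to the increment vector $z=(z_1,\dots,z_n)$, which under $p_{\mathrm{walk}}$ is uniform on $\{0,1\}^n$. Revealing the coordinates $C$ amounts to conditioning on the partial sums $x_c=\sum_{i\le c}z_i$ for $c\in C$. Order the revealed positions as $0=c_0<c_1<\dots<c_r$, where $c_0$ is the permanent anchor. The event $\{X_C=x_C\}$ is then the intersection of the events $\{\sum_{i=c_{\ell-1}+1}^{c_\ell}z_i=x_{c_\ell}-x_{c_{\ell-1}}\}$. Each of these events involves a disjoint block of increments, and the increments after $c_r$ are unconstrained.

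\textbf{Step (i).} I will write the conditional law as $p(z\mid x_C)\propto\prod_\ell\mathbf 1\{\text{block-}\ell\text{ sum}=k_\ell\}$. This is a product over blocks, so the factorization argument of Lemma~\ref{lem:product-additivity} applies. Each closed block is uniform on the $\binom{b-a}{k}$ arrangements with sum $k$, and the open tail is i.i.d.\ fair. A segment (a maximal unrevealed interval) sits inside exactly one block. Its unrevealed coordinates are $x_{a+1},\dots,x_{b-1}$, which are functions of that block's increments and of the revealed $x_a$, so distinct segments are independent. Positivity of $p_C(x_C)$ gives $0\le k\le b-a$.

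\textbf{Step (ii).} Take $0<k<b-a$ and $a<i<b$. The value $x_i-x_a$ follows a hypergeometric law whose support is $[\max(0,k-(b-i)),\min(i-a,k)]$. This interval has at least two points because $0<k<b-a$, so $x_i$ is not determined. For $a<i<j<b$, I will exhibit values $u,u'$ of $x_i$ whose conditional laws of $x_j$ have different supports, which shows strict dependence.

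A concrete way to do this: the support of $x_j-x_i$ given $x_i-x_a=u$ is $[\max(0,k-u-(b-j)),\min(j-i,k-u)]$. Taking $u$ at the minimum versus the maximum of its support shifts one endpoint of this interval. That shows the conditional law of $x_j$ depends on $x_i$, and hence $I>0$.

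An easier route is to argue by contradiction. If $x_i$ and $x_j$ were independent, the joint support would be a product set. But the pair ($x_i$ at its minimum, $x_j$ at its maximum) and its reverse are not both feasible, because of monotonicity $x_i\le x_j$ together with the increment constraints. I will check this case carefully.

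The open segment works the same way. The increments there are i.i.d.\ fair, so every $x_i$ has at least two values. The mutual information $I(x_i;x_j)$ for $i<j$ is positive because $x_j=x_i+\mathrm{Bin}(j-i)$.

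\textbf{Step (iii).} If $k=0$, every increment in the block is forced to $0$; if $k=b-a$, every increment is forced to $1$. Either way the coordinates are determined. The block of increments $z_{a+1..b}$ is constant, so it is contained in some maximal constant run of every consistent $x$.

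The main obstacle I expect is the strict-dependence claim in (ii) when the hypergeometric supports are degenerate at one end, for example $k=1$ or $j=i+1$. I would handle this with the product-support contradiction, since it avoids computing any entropies.
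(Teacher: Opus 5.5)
\textbf{There is a genuine gap in (ii).} Your parts (i) and (iii), the non-determination half of (ii), and the open-segment case are fine and follow the paper. The gap is the strict dependence of two coordinates inside a \emph{bridge}. Both routes you propose look only at supports, and supports do not detect this dependence in general.

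Write $U=X_i-x_a$ and $W=X_j-x_a$. The conditional support of $W$ given $U=u$ is $[\max(u,k-(b-j)),\,\min(u+(j-i),k)]$. This interval does not depend on $u$ whenever $(i-a)+(b-j)\le k\le j-i$. In that range the joint support is the full product of the marginal supports.

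A concrete counterexample: take $b=a+4$, $k=2$, $i=a+1$, $j=a+3$.
\begin{itemize}[leftmargin=2em]
\item Here $U=Z_{a+1}\in\{0,1\}$ and $W=2-Z_{a+4}\in\{1,2\}$.
\item All four pairs are feasible. The arrangement $0110$ has $U$ minimal and $W$ maximal, and $1001$ gives the reverse. So your claim that these two extreme pairs are not both feasible is false, and no endpoint of the conditional support moves.
\item Yet the joint law is $(2,1,1,2)/6$, against $1/4$ for each pair under the product law.
\end{itemize}
The problem cases are not the ones you anticipated. For $k=1$ or $j=i+1$ the support argument actually works. The failure occurs in generic configurations such as $i=a+1$, $j=b-1$, $2\le k\le b-a-2$.

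\textbf{How to fix it.} Replace the support comparison by a moment comparison. Given $U=u$, the remaining $b-i$ increments contain $k-u$ ones uniformly arranged. So $X_j-X_i$ is hypergeometric, and
$\E[X_j\mid X_i=x_a+u,\,x_C]=x_a+\frac{(j-i)k}{b-i}+\frac{b-j}{b-i}\,u$.
This is strictly increasing in $u$ because $j<b$. Since $U$ has at least two support points, the conditional mean of $X_j$ varies with $X_i$, hence $I(X_i;X_j\mid x_C)>0$.

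The paper does the equivalent computation through exchangeability. With $m=b-a$ and $\sigma^2=\frac km(1-\frac km)$, it uses $\operatorname{Var}Z_r=\sigma^2$ and $\operatorname{Cov}(Z_r,Z_s)=-\sigma^2/(m-1)$ to get $\operatorname{Cov}(X_i,X_j\mid x_C)=\sigma^2\frac{(i-a)(b-j)}{m-1}>0$. The covariance form has the added benefit of being quantitative, and it is reused in Lemma~\ref{lem:bulkpair}.
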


\begin{proof}[Proof of Lemma~\ref{lem:bridge}]
(i) is the Markov property plus counting: conditioning the i.i.d.\
increments on the values at revealed positions makes the increment blocks
between consecutive anchors independent, each uniform over arrangements
with the prescribed block sum. (ii) Write $m=b-a$,
$\sigma^2=\frac km(1-\frac km)>0$. Exchangeability gives
$\operatorname{Var}(Z_r)=\sigma^2$ and
$\operatorname{Cov}(Z_r,Z_s)=-\sigma^2/(m-1)$ for $r\ne s$, hence for
$a<i\le j<b$,
\[
\operatorname{Cov}(X_i,X_j\mid x_C)
=\sigma^2\,\frac{(i-a)(b-j)}{m-1}\;>\;0 ,
\]
so $X_i,X_j$ are dependent and their conditional mutual information is
strictly positive. The support of $X_i-x_a$ is
$\{\max(0,k-(b-i)),\dots,\min(i-a,k)\}$, of size at least $2$ in all cases
with $0<k<m$, so no coordinate is determined. In an open segment,
$\operatorname{Cov}(X_i,X_j\mid x_C)=\min(i,j)-a'>0$ (times $1/4$) for the
last anchor $a'$, and supports are again non-singletons. (iii) $k=0$ or
$k=m$ forces every increment, hence every coordinate; any consistent $x$
has constant increments on $(a,b]$, which therefore lie inside one maximal
run of $x$.
\end{proof}

By total-correlation monotonicity, Lemma~\ref{lem:bridge}(ii) implies that
a zero-cost reveal set contains at most one coordinate from each
nondegenerate segment, including the open segment, together with any number
of determined coordinates from degenerate segments.

\begin{lemma}[Full-clearing reduction]\label{lem:fullclear}
Let $\pi$ be a deterministic value-adaptive policy with
$\KL{p_{\mathrm{walk}}}{q_\pi}=0$. Then there is a deterministic
value-adaptive policy $\pi'$ with $\KL{p_{\mathrm{walk}}}{q_{\pi'}}=0$
such that $R'(x)\le R(x)$ for every $x$ with $p(x)>0$. Every round of
$\pi'$ reveals at most one coordinate in each nondegenerate segment and
all coordinates of every currently degenerate segment.
\end{lemma}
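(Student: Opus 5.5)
We construct $\pi'$ directly from $\pi$ by a simulation argument. At a reveal state $(C',x_{C'})$ of $\pi'$ with $p(x_{C'})>0$, consider the set $D(C',x_{C'})$ of all coordinates determined by $x_{C'}$. By Lemma~\ref{lem:bridge}(iii) this is exactly the union of the degenerate segments. We will maintain the invariant that $\pi'$, run on $x$ and stopped after $t$ rounds, has revealed
\[
C'_{t+1}(x)\;\supseteq\; C_{t+1}(x),
\]
the set $\pi$ has revealed after $t$ rounds on the same $x$. Here $C'_{t+1}\setminus C_{t+1}$ consists only of coordinates that are determined given $x_{C_{t+1}\cup(\text{stuff already forced})}$. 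To define the round, $\pi'$ computes the history $x_{C_t}$ that $\pi$ would have seen; this is possible because $C_t\subseteq C'_t$ and $C_t$ is recomputable from the revealed values by running $\pi$ forward. It then sets
\[
\pi'(C'_t,x_{C'_t}) := \bigl(S_t\setminus C'_t\bigr)\cup\bigl(\text{all coordinates of degenerate segments of }C'_t\bigr),
\]
adding, if needed, the degenerate segments created by the new reveals at the next round. If this set is empty (everything in $S_t$ was already revealed), we skip to the next $t$, which only decreases the round count. This gives $R'(x)\le R(x)$ on $\mathrm{supp}(p)$.

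Next we check zero cost. By Theorem~\ref{thm:identity} it suffices that each round of $\pi'$ has $\TC=0$ given its realized history. The determined coordinates have point-mass conditionals. They therefore contribute nothing to total correlation, and adding them to a set does not change its TC, since a point mass is independent of everything. So the cost of the $\pi'$-round equals $\TC(X_{S_t\setminus C'_t}\mid x_{C'_t})$.

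Showing this is zero is the key step. Since $\pi$ has zero cost, Theorem~\ref{thm:identity} gives that $p$-a.s.\ the set $S_t$ is conditionally independent given $x_{C_t}$. By Lemma~\ref{lem:bridge}(ii) and TC monotonicity (Lemma~\ref{lem:tc-basic}), $S_t$ then contains at most one coordinate per nondegenerate segment relative to $C_t$. Segments relative to $C'_t\supseteq C_t$ refine those relative to $C_t$: the extra revealed coordinates of $C'_t$ split segments and can make pieces degenerate, but never merge segments. Hence $S_t\setminus C'_t$ still has at most one coordinate in each nondegenerate segment of $C'_t$. By Lemma~\ref{lem:bridge}(i), distinct segments are conditionally independent, so this set has zero TC given $x_{C'_t}$.

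Along the way we must handle the two ways a coordinate of $S_t$ can end up in a degenerate segment of $C'_t$. If such a coordinate was already revealed by $\pi'$, it is removed from the round. If it is not yet revealed, it is determined and costs nothing.

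I expect the main obstacle to be making the simulation well defined as a function of the state $(C',x_{C'})$ alone. The policy $\pi'$ must recover $t$ and $C_t$ from its own state, even though different $x$ reach the same $\pi'$-state after different $\pi$-times. I would resolve this by defining $\pi'$ on histories, namely as a function of the sequence of its own past states. That sequence is determined by $x_{C'_t}$ because each $C'_s$ is a deterministic function of earlier values. I would then note that a deterministic value-adaptive policy may equivalently depend on $x_{C'}$ through this reconstructed history, because $C'$ and its values determine the whole past trajectory.
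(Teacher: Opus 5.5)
Your proposal is correct and follows essentially the paper's construction: you simulate $\pi$ inside the real context, reveal $S_\pi\setminus C^{\mathrm{re}}$ together with all currently degenerate segments, and get zero cost because real segments refine the virtual ones. That refinement step uses, implicitly, that a subsegment of a degenerate segment is itself degenerate. The one difference is the well-definedness step: the paper recovers the virtual context directly from the real state by greedy absorption ($C^{\mathrm{vi}}\leftarrow C^{\mathrm{vi}}\cup\pi(C^{\mathrm{vi}},x_{C^{\mathrm{vi}}})$ while the next set lies in $C^{\mathrm{re}}$), whereas you define $\pi'$ on histories and recover the history from the state by replaying from $\emptyset$, and both routes are valid.
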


\begin{proof}[Proof of Lemma~\ref{lem:fullclear}]
\emph{Construction.} $\pi'$ maintains, alongside the real state
$(C^{\mathrm{re}},x_{C^{\mathrm{re}}})$, a virtual context
$C^{\mathrm{vi}}\subseteq C^{\mathrm{re}}$ recording how far $\pi$ has
been simulated. Repeatedly compute
$S_\pi=\pi(C^{\mathrm{vi}},x_{C^{\mathrm{vi}}})$ (its values are visible,
as $C^{\mathrm{vi}}\subseteq C^{\mathrm{re}}$): while
$S_\pi\subseteq C^{\mathrm{re}}$, advance the virtual context,
$C^{\mathrm{vi}}\leftarrow C^{\mathrm{vi}}\cup S_\pi$, spending no round;
otherwise spend one real round revealing
$(S_\pi\setminus C^{\mathrm{re}})\cup\{\text{all coordinates of all
currently degenerate real segments}\}$ and then advance the virtual
context. Since $\pi$ terminates, so does the loop, and each real round
advances the simulation by at least one $\pi$-round, so
$R'(x)\le R(x)$.

\emph{$\pi'$ is a legal state policy.} The construction uses the auxiliary
variable $C^{\mathrm{vi}}$, but this is reconstructible from the real
state alone. Given $(C^{\mathrm{re}},x_{C^{\mathrm{re}}})$, initialize
$C^{\mathrm{vi}}=\emptyset$ and repeatedly replace
\[
C^{\mathrm{vi}}\;\leftarrow\;
C^{\mathrm{vi}}\cup\pi\bigl(C^{\mathrm{vi}},x_{C^{\mathrm{vi}}}\bigr)
\]
as long as the next reveal set
$\pi(C^{\mathrm{vi}},x_{C^{\mathrm{vi}}})$ is contained in
$C^{\mathrm{re}}$; stop at the first reveal set not contained in
$C^{\mathrm{re}}$, or when $C^{\mathrm{vi}}=[n]$. Every value consulted
lies in $C^{\mathrm{re}}$, so the iteration is well defined and its result
is unique; and it coincides with the virtual context maintained by the
construction, which performs exactly these zero-round absorptions between
real rounds. Thus $\pi'$ is a deterministic function of the real reveal
state (on unreachable states, extend arbitrarily).

\emph{Zero cost.} Fix a real round with reveal set $S'$ at a real state of
positive probability. Real segments refine virtual segments (because
$C^{\mathrm{vi}}\subseteq C^{\mathrm{re}}$), and a subsegment of a
degenerate virtual segment is degenerate; hence every nondegenerate real
segment lies inside a nondegenerate \emph{virtual} segment. By the zero
cost of $\pi$ and the observation preceding this lemma, $S_\pi$ contains
at most one coordinate of each nondegenerate virtual segment; a fortiori
$S_\pi\setminus C^{\mathrm{re}}$ contains at most one coordinate of each
nondegenerate real segment. The added coordinates lie in degenerate real
segments and are point masses given $x_{C^{\mathrm{re}}}$. Distinct real
segments are conditionally independent given the real context
(Lemma~\ref{lem:bridge}(i)), so $S'$ splits into mutually independent
singletons and point masses: $\TC(X_{S'}\mid x_{C^{\mathrm{re}}})=0$. By
Theorem~\ref{thm:identity}, $\KL{p}{q_{\pi'}}=0$.
\end{proof}

\begin{proof}[Proof of Theorem~\ref{thm:walkconverse}]
By Lemma~\ref{lem:fullclear} it suffices to bound $R'$ for the
full-clearing policy $\pi'$. Fix $x$ with $p(x)>0$ and let $s_t$ be the
number of nondegenerate segments at the start of round $t$: $s_1=1$, and
$s_{t+1}\le 2s_t$, since each nondegenerate segment receives at most one
reveal (splitting it into at most two segments) while degenerate segments
are cleared and disappear. Hence $s_t\le 2^{t-1}$, the number of
single reveals is at most $\sum_{t\le R'}s_t\le 2^{R'}-1$, and --- as each
single reveal creates at most two new segments --- the total number of
segments ever created is at most $1+2(2^{R'}-1)\le 2^{R'+1}$. Every
coordinate is revealed either as a single reveal or in the clearing of a
degenerate segment, and by Lemma~\ref{lem:bridge}(iii) each degenerate
segment has size at most $\mathrm{maxrun}(x)$. Therefore
\begin{align*}
n&\le(2^{R'}-1)+2^{R'+1}\,\mathrm{maxrun}(x)\\
 &\le2^{R'+1}\bigl(1+\mathrm{maxrun}(x)\bigr),
\end{align*}
i.e.\ $R(x)\ge R'(x)\ge\log_2 n-\log_2(1+\mathrm{maxrun}(x))-1$.
Take expectations; finally $\Pr[\mathrm{maxrun}\ge r]\le n2^{1-r}$ yields
$\E[\mathrm{maxrun}]\le\log_2 n+O(1)$ and, by Jensen,
$\E\log_2(1+\mathrm{maxrun})\le\log_2(1+\E[\mathrm{maxrun}])
=\log_2\log_2 n+O(1)$.
\end{proof}

\begin{lemma}[Uniform bulk-pair information]\label{lem:bulkpair}
There are absolute constants $\ell_0$ and $c_1>0$ with the following
property. Let $T$ be a segment of length $\ell\ge\ell_0$ under a realized
context $x_C$. Suppose that $T$ is a nondegenerate bridge with parameter
$0<k<\ell$ or an open segment. If $u<v$ are coordinates in the middle half
of $T$, then
\[
I(X_u;X_v\mid x_C)\ge c_1.
\]
\end{lemma}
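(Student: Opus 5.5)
The plan is to reduce to binary events and then lower-bound a single covariance uniformly over all segment parameters. By data processing, $I(X_u;X_v\mid x_C)\ge I(\mathbf 1_A;\mathbf 1_B\mid x_C)$ for any events $A=\{X_u\ge\alpha\}$ and $B=\{X_v\ge\beta\}$. For binary variables, Pinsker's inequality gives
\[
I(\mathbf 1_A;\mathbf 1_B)=\KL{P_{AB}}{P_A\otimes P_B}\ge\tfrac{2}{\ln 2}\,\mathrm{TV}^2 ,
\]
and the total variation distance is at least $|\Pr(A\cap B)-\Pr(A)\Pr(B)|$. So it suffices to find thresholds $\alpha,\beta$ for which this covariance is at least an absolute constant $\kappa>0$, whenever $\ell\ge\ell_0$ and $u<v$ lie in the middle half of $T$.

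\textbf{Geometry.} Write $u-a=s\ell$ and $v-a=t\ell$, with $\tfrac14\le s<t\le\tfrac34$. For a bridge with parameter $k$, the hypergeometric covariance formula from Lemma~\ref{lem:bridge}(ii) gives the conditional correlation
\[
\rho=\sqrt{\frac{s(1-t)}{t(1-s)}}\;\ge\;\tfrac13 .
\]
For an open segment (fresh fair increments from the last anchor $a'$), the correlation is $\rho=\sqrt{s/t}\ge\sqrt{1/3}$. The rest of the proof splits on the bridge variance scale $j:=\min(k,\ell-k)$, with a large constant $K_0$ chosen first and $\ell_0$ chosen afterwards.

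\textbf{Case $j\ge K_0$, and the open segment.} Here $\operatorname{Var}(X_u)$ and $\operatorname{Var}(X_v)$ are both at least $c\,j$. I would invoke a bivariate Berry--Esseen/CLT bound for the hypergeometric pair $(X_u-x_a,\,X_v-X_u)$. This pair is a sum of increments from sampling without replacement; one way in is to represent it as i.i.d.\ Bernoulli variables conditioned on their sum and apply a local limit theorem. The bound should show that the standardized pair is within $O(j^{-1/2})$ in Kolmogorov distance of a Gaussian vector with correlation $\rho$. Taking $\alpha,\beta$ at the medians, the Gaussian orthant formula gives covariance
\[
\frac{\arcsin\rho}{2\pi}\ \ge\ \frac{\arcsin(1/3)}{2\pi}.
\]
Choosing $K_0$ large enough makes the error below half of this. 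The open segment is the same argument with binomial increments, whose variance is at least $\ell/16$.

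\textbf{Case $1\le j<K_0$.} By the symmetry $Z\mapsto 1-Z$, assume the segment contains exactly $j$ ones, placed at uniformly random positions. Take $A=\{X_u-x_a\ge1\}$ and $B=\{X_v-x_a\ge1\}$. Since $A\subseteq B$,
\[
\Pr(A\cap B)-\Pr(A)\Pr(B)=\Pr(A)\bigl(1-\Pr(B)\bigr).
\]
As $\ell\to\infty$, $\Pr(A)\to1-(1-s)^j\ge\tfrac14$ and $1-\Pr(B)\to(1-t)^j\ge 4^{-K_0}$. These limits hold uniformly in $s,t$ on the compact range and in the finitely many values $j<K_0$, so for $\ell\ge\ell_0(K_0)$ the covariance is at least $\tfrac18\,4^{-K_0}$.

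Setting $\kappa$ to the minimum of the two case constants, and $c_1:=\tfrac{2}{\ln2}\kappa^2$, proves the lemma.

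\textbf{Main obstacle.} The hard step is the first case: the Gaussian approximation of the hypergeometric pair must hold uniformly in $k$ and in the positions, with an error depending only on $j$. The difficulty is largest when $j$ is moderate and $\ell$ is much larger than $j$. I would first try the conditioned-i.i.d.\ representation together with a standard local limit theorem. If that is awkward, a fallback is to replace the Gaussian comparison with a direct monotone-coupling lower bound on the covariance via the FKG inequality, since the increments are negatively associated but the partial sums are positively associated.
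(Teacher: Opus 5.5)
Your reduction (data processing to threshold indicators, then Pinsker) is fine. Your small-parameter case is correct and in fact slicker than the paper's. The nested events $\{X_u-x_a\ge1\}\subseteq\{X_v-x_a\ge1\}$ give covariance $\Pr(A)(1-\Pr(B))\ge s\,(1-t)^j(1-o(1))$ directly. The paper instead splits $\operatorname{Cov}(X_u,X_v)\ge k/32$ into at most $k^2$ threshold-indicator covariances. Your constant is exponentially small in $K_0$ rather than of order $K_0^{-2}$, which is harmless.

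The gap is the large-parameter case. It carries the entire difficulty of the lemma, and you only assert it. A normal approximation for the pair $(X_u-x_a,\,X_v-X_u)$ with error depending only on $j$ cannot hold uniformly over $u<v$ in the middle half. For $v=u+1$ the second component is a single Bernoulli increment, and the correlation $\rho$ of $(X_u,X_v)$ tends to $1$. Multivariate Berry--Esseen bounds scale with $\Sigma^{-1/2}$, so they become vacuous there; the same happens on the open segment. You would need at least a further split by the gap $v-u$:
\begin{itemize}
\item for $v-u\ge\eta\ell$, a genuine bivariate local CLT;
\item for $v-u<\eta\ell$, a univariate CLT for $X_u$ combined with Chebyshev's inequality on $X_v-X_u$.
\end{itemize}
None of this is carried out. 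The regime $\ell\gg j$ that you flag is handled by the conditioned-binomial local limit argument you mention; nearby positions are the real obstruction. Your FKG fallback does not help either, since positive association yields only $\operatorname{Cov}\ge0$, not a bound away from zero.

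The idea you are missing is that the paper never approximates the joint law. It writes $I(X_u;X_v)=H(X_u)+H(X_v)-H(X_u,X_v)$ and handles each term separately.
\begin{itemize}
\item The joint term needs only second moments. With independent uniform dithers $U_1,U_2$, $H(X_u,X_v)=h(X_u+U_1,X_v+U_2)\le\tfrac12\log_2\bigl((2\pi e)^2\det(\Sigma+\tfrac1{12}\mathrm{Id})\bigr)$ by the Gaussian maximum-entropy bound.
\item The marginal terms are purely univariate. The uniform measure on $k$-subsets is strongly Rayleigh, so each hypergeometric marginal is a sum of independent Bernoullis. Lemma~\ref{lem:pbentropy}, a univariate local-CLT estimate, then gives $H(X_u)\ge\tfrac12\log_2(2\pi e\tilde\sigma_u^2)-\delta(\sigma_u)$ with $\tilde\sigma_u^2=\sigma_u^2+\tfrac1{12}$.
\end{itemize}
Together these yield $I(X_u;X_v)\ge\tfrac12\log_2\frac{1}{1-\tilde\rho^{\,2}}-\delta(\sigma_u)-\delta(\sigma_v)$. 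This uses only $\rho^2\ge\tfrac19$, so it is automatically uniform in the positions, including the end $\rho\to1$ that breaks the CLT route. To get a complete proof, replace your large-parameter case (and the open segment) by this entropy sandwich, and keep your nested-event argument for $j<K_0$.
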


\begin{proof}[Proof of Lemma~\ref{lem:bulkpair}]
Positions are measured inside $T$. By Lemma~\ref{lem:bridge}, conditionally
on the full context the pair has the bridge (respectively free-walk) law;
by the increment-flip symmetry $k\leftrightarrow\ell-k$ assume
$k\le\ell/2$. From the covariance formulas in the proof of
Lemma~\ref{lem:bridge},
\begin{align*}
\rho^2
&:=\frac{\operatorname{Cov}(X_u,X_v)^2}
{\operatorname{Var}(X_u)\operatorname{Var}(X_v)}
=\frac{u(\ell-v)}{v(\ell-u)}\ge\frac19,\\
\operatorname{Cov}(X_u,X_v)&\ge\frac{k}{32},\\
\operatorname{Var}(X_u)\wedge\operatorname{Var}(X_v)&\ge\frac{3k}{32},
\end{align*}
for bulk positions and any separation; for the open segment
$\rho^2=u/v\ge\frac13$ and $\operatorname{Var}(X_u)=u/4\ge\ell/16$. Fix a
threshold $K_0$ (chosen below).

\emph{Small parameter ($k\le K_0$; bridges only).} Given the anchors,
$X_u$ and $X_v$ each take at most $k+1$ values, so at most $k$ thresholds
$a$ have $\Pr[X_u\ge a]\in(0,1)$. Writing a nonnegative integer variable
as $X=\sum_{a\ge1}\mathbf 1[X\ge a]$ gives the covariance identity
$\operatorname{Cov}(X_u,X_v)=\sum_{a,b}\operatorname{Cov}(\mathbf 1[X_u\ge
a],\mathbf 1[X_v\ge b])$, a sum of at most $k^2$ terms; some term is at
least $\frac{k/32}{k^2}=\frac1{32k}$. For that pair of binary variables
$(U,V)$, Pinsker's inequality \cite{covertho2006} gives
$I(U;V)\ge\frac2{\ln2}\,\mathrm{TV}^2=\frac8{\ln2}
\operatorname{Cov}(U,V)^2\ge\frac{1}{128k^2\ln2}$, and
$I(X_u;X_v)\ge I(U;V)$ by data processing. Hence the claim holds with
constant $\frac1{128K_0^2\ln2}$.

\emph{Large parameter ($k>K_0$, and the open segment).} The marginal laws
of $X_u$ and $X_v$ are hypergeometric (binomial for the open segment).
The uniform measure on $k$-subsets of $[\ell]$ is strongly Rayleigh
\cite{borcea2009negative}, i.e.\ its multivariate generating polynomial
is real stable; substituting $z_i=z$ on a subset $A$ of the coordinates
and $z_i=1$ elsewhere preserves stability, so the count of selected
elements in $A$ --- which is exactly $X_u$ --- has a real-rooted
univariate generating polynomial and is therefore distributed as a sum of
\emph{independent} Bernoulli variables. Let $U_1,U_2$ be independent uniform dithers on
$[0,1)$. For integer variables, $H(X)=h(X+U_1)$ and
$H(X_u,X_v)=h(X_u+U_1,X_v+U_2)$ exactly, so the Gaussian maximum-entropy
bound \cite{covertho2006} gives
\[
H(X_u,X_v)\;\le\;\tfrac12\log_2\bigl((2\pi e)^2\det(\Sigma+\tfrac1{12}I)\bigr),
\]
with $\Sigma$ the covariance matrix of $(X_u,X_v)$, while
Lemma~\ref{lem:pbentropy} in the appendix bounds each marginal from below:
$H(X_u)\ge\tfrac12\log_2(2\pi e\,\tilde\sigma_u^2)-\delta(\sigma_u)$ with
$\tilde\sigma^2=\sigma^2+\tfrac1{12}$ and $\delta(\sigma)\to0$ as
$\sigma\to\infty$, uniformly over Bernoulli-sum laws. Adding up,
\begin{align*}
I(X_u;X_v)
&\ge\tfrac12\log_2\frac1{1-\tilde\rho^{\,2}}
-\delta(\sigma_u)-\delta(\sigma_v),\\
\tilde\rho^{\,2}
&:=\frac{\operatorname{Cov}(X_u,X_v)^2}
{\tilde\sigma_u^2\tilde\sigma_v^2}
\ge\rho^2\left(1-\frac1{6\sigma_0^2}\right)
\end{align*}
whenever $\sigma_u^2\wedge\sigma_v^2\ge\sigma_0^2$. Since
$\rho^2\ge\frac19$, choose an absolute $\sigma_0$ with
$\delta(\sigma_0)\le\frac1{100}$ and $\sigma_0^2\ge2$; then the right side
is at least $\tfrac12\log_2\tfrac98-\tfrac1{40}-\tfrac2{100}>0.03$. The
regime applies once $\frac{3k}{32}\ge\sigma_0^2$, i.e.\ for
$k>K_0:=\lceil\tfrac{32}3\sigma_0^2\rceil$ (and always for the open
segment when $\ell\ge16\sigma_0^2$). Taking
$c_1=\min\{\tfrac1{128K_0^2\ln2},\,0.03\}$ and
$\ell_0=16\sigma_0^2$ completes the proof. (Numerically the true constant
is much better: exhaustive computation over all bridge parameters gives
$\min I=0.085$, attained at balanced $k$ and quarter positions.)
\end{proof}

\begin{proof}[Proof of Theorem~\ref{thm:walkfixed}]
\emph{Chain.} Set $T_1:=[1,n]$ (the open segment). Given $T_t$ with
$\ell_t:=|T_t|$, a round is \emph{active} if it reveals some coordinate of
$T_t$; on an active round let $A\subseteq T_t$ be those reveals and let
$T_{t+1}$ be the largest maximal unrevealed subinterval of
$T_t\setminus A$. Since every coordinate is eventually revealed, active
rounds occur until the chain stops, which we declare as soon as
$\ell_t<\ell_0$ or $T_t$ is degenerate. A segment's anchors are the
revealed coordinates flanking it at its creation (or $x_0$, or the open
right end), so by Lemma~\ref{lem:bridge}(i) its conditional law given any
later context is its own bridge law; in particular a nondegenerate chain
segment stays nondegenerate, and if the chain stops at a degenerate
segment then $\ell_{\mathrm{end}}\le\mathrm{maxrun}(x)$ by
Lemma~\ref{lem:bridge}(iii). Hence the terminal length satisfies
$\ell_{\mathrm{end}}+1\le L:=\max\{\ell_0,\ \mathrm{maxrun}(x)+1\}$, and
the drops $\delta_t:=\log_2\frac{\ell_t+1}{\ell_{t+1}+1}$ telescope to
$\sum_t\delta_t\ge\log_2\frac{n+1}{L}$ over active rounds, whose number is
at most $R(x)$.

\emph{Progress needs middle cuts.} On an active round let $m$ be the
number of reveals in the middle half $M$ of $T_t$ (an interval of length
at least $\ell/2-2$). If $m\le\ell/2-2$, some stretch of $M$ of length at
least $(\ell/2-2-m)/(m+1)$ contains no reveal and lies inside one child,
so $\ell_{t+1}+1\ge(\ell/2-1)/(m+1)$ and
\[
\delta_t\;\le\;\log_2\frac{(\ell+1)(m+1)}{\ell/2-1}
\;\le\;2+\log_2(m+1)\qquad(\ell\ge5).
\]
\emph{Middle cuts cost.} Order the middle reveals $u_1<\dots<u_m$. By
subset monotonicity of total correlation and the chain rule
$\TC(X_{\{u_1,\dots,u_m\}}\mid x_C)=\sum_{i\ge2}I(X_{u_i};X_{u_{<i}}\mid
x_C)\ge\sum_{i\ge2}I(X_{u_i};X_{u_{i-1}}\mid x_C)$, each summand being a
bulk pair of the nondegenerate segment $T_t$, Lemma~\ref{lem:bulkpair}
gives
\[
Y_t\;:=\;\TC(X_{S_t}\mid x_{C_t})\;\ge\;c_1\,(m-1)_+ .
\]
Hence $m+1\le2+Y_t/c_1$ and
$\delta_t\le2+\log_2(2+Y_t/c_1)\le3+\frac{Y_t}{2c_1\ln2}$, using
$\log_2(2+y)\le1+\frac y{2\ln2}$. In the remaining case $m>\ell/2-2$ we
use $\delta_t\le\log_2(\ell+1)$ and
$Y_t\ge c_1(\ell/2-3)$, so $\delta_t\le\frac{Y_t}{2c_1\ln2}$ once
$\ell\ge\ell_0$ for $\ell_0$ a suitable absolute constant, and the same
amortized bound holds.

\emph{Assembly.} Summing over active rounds and using
$\sum_{t}Y_t\le\sum_{\text{all rounds}}\TC(X_{S_t}\mid x_{C_t})$, whose
$p$-expectation is $\KL{p}{q_\pi}\le\eps$ by Theorem~\ref{thm:identity},
\[
\log_2\frac{n+1}{L}\;\le\;3R(x)+\frac1{2c_1\ln2}\sum_t Y_t ,
\]
and taking expectations,
$\log_2(n+1)-\E[\log_2 L]\le3\,\E[R]+\frac{\eps}{2c_1\ln2}$. Finally
$\E[\log_2 L]\le\log_2\ell_0+\E[\log_2(\mathrm{maxrun}+1)]
\le\log_2\log_2 n+O(1)$ as in Theorem~\ref{thm:walkconverse}. The upper
direction of $\Theta(\log n)$ is Theorem~\ref{thm:ordergap}(i).
\end{proof}

\section{Proofs for Uniform Random Permutations}\label{app:perm}

\begin{proof}[Proof of Theorem~\ref{thm:perm}]
\emph{Step 1: each round's cost depends only on its size.} Condition on a
realized history $(C,x_C)$ with $|C|=c$ and $p_C(x_C)>0$, and write
$M=n-c$. The conditional law of the unrevealed coordinates is uniform over
the bijections from the remaining positions onto the $M$ unused symbols.
Hence for any reveal set $S$ with $|S|=B$ and any consistent values $x_S$
(injective, unused),
\begin{align*}
p(x_S\mid x_C)&=\frac1{(M)_B},\\
p(x_i\mid x_C)&=\frac1M,\qquad i\in S,\\
(M)_B&:=M(M-1)\cdots(M-B+1),
\end{align*}
so the realized increment is the same for every consistent value:
\begin{align*}
\tc(x_S\mid x_C)
&=\log_2\frac{M^B}{(M)_B}\\
&=\sum_{k=0}^{B-1}-\log_2\left(1-\frac{k}{M}\right)=:g(M,B).
\end{align*}
Neither the choice of positions nor the observed values enter --- only
$(M,B)$.

\emph{Step 2: quadratic growth.} From $-\log_2(1-u)\ge u/\ln 2$ on $[0,1)$
and $M\le n$,
\[
g(M,B)\;\ge\;\frac{1}{\ln 2}\sum_{k=0}^{B-1}\frac kM
\;=\;\frac{B(B-1)}{2M\ln2}\;\ge\;\frac{B(B-1)}{2n\ln2}.
\]

\emph{Step 3: per-trajectory bound.} Fix $x$ with batch sizes
$B_1,\dots,B_{R(x)}$, $\sum_t B_t=n$. By Step 2 and Cauchy--Schwarz
($\sum_t B_t^2\ge n^2/R(x)$),
\[
\sum_{t}\tc\bigl(x_{S_t}\mid x_{C_t}\bigr)
\;\ge\;\frac{\sum_t B_t^2-n}{2n\ln2}
\;\ge\;\frac{n/R(x)-1}{2\ln2}.
\]

\emph{Step 4: expectation.} By Theorem~\ref{thm:identity} and Jensen's
inequality for the convex map $r\mapsto1/r$,
\begin{align*}
\KL{p_{\mathrm{perm}}}{q_\pi}
&=\E_p\!\left[\sum_t\tc(x_{S_t}\mid x_{C_t})\right]\\
&\ge\frac{n\E_p[1/R]-1}{2\ln2}
\ge\frac{n/\E_p[R]-1}{2\ln2}.
\end{align*}
Rearranging $\bigl(n/\E_p[R]-1\bigr)/(2\ln2)\le\eps$ gives the claim.
\end{proof}

\begin{lemma}[Splitting a batch only helps]\label{lem:split}
For $1\le A<B\le M$,
\[
\begin{aligned}
g(M,B)&-g(M,A)-g(M{-}A,B{-}A)\\
&=(B{-}A)\log_2\frac{M}{M-A}>0.
\end{aligned}
\]
In particular, $\cE_{n,R}$ is strictly decreasing in $R$ until
$\cE_{n,n}=0$.
\end{lemma}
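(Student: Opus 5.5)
The identity is a direct computation with falling factorials. Write $g(M,B)=B\log_2 M-\log_2 (M)_B$ and use the factorization
\[
(M)_B=(M)_A\,(M-A)_{B-A},
\]
which holds because the first $A$ factors of $M(M-1)\cdots(M-B+1)$ form $(M)_A$ and the remaining $B-A$ factors start at $M-A$. Substituting this gives
\[
g(M,B)-g(M,A)-g(M-A,B-A)=B\log_2M-A\log_2M-(B-A)\log_2(M-A).
\]
The $\log_2$ of falling-factorial terms cancel exactly. The right-hand side equals $(B-A)\log_2\frac{M}{M-A}$. It is strictly positive because $B>A$ and $1\le A<B\le M$ forces $0<M-A<M$.

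For the consequence, I argue that $\cE_{n,R+1}<\cE_{n,R}$ whenever $\cE_{n,R}>0$, for $R<n$. Take an optimal composition $(B_1,\dots,B_R)$ for $\cE_{n,R}$. Since $R<n$, some part satisfies $B_t\ge2$. Replace that round by two consecutive rounds of sizes $A=1$ and $B_t-1$, both at remaining mass $M_t$ then $M_t-1$. All other rounds keep the same $(M_s,B_s)$ pairs, since the remaining counts before and after the split round are unchanged. By the identity, the cost strictly drops by $(B_t-1)\log_2\frac{M_t}{M_t-1}>0$. This produces an $(R+1)$-part composition of strictly smaller cost, so $\cE_{n,R+1}<\cE_{n,R}$.

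Finally, $\cE_{n,n}=0$ because the all-singletons composition has $g(M,1)=0$. If $\cE_{n,R}>0$ for all $R<n$, the chain is strictly decreasing all the way to $R=n$. Indeed it is: for $R<n$ every composition has a part $\ge2$, and $g(M,B)>0$ for $B\ge2$. So every $\cE_{n,R}$ with $R<n$ is positive, and the strict decrease holds on the whole range.

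The only step needing care is the bookkeeping that the untouched rounds keep their $(M_s,B_s)$. This is immediate because $M_s$ depends only on the suffix sum, which is unchanged for rounds after the split and unaffected for rounds before it.
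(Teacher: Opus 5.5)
Your proof is correct and follows the paper's route: the factorization $(M)_B=(M)_A\,(M-A)_{B-A}$ makes the falling-factorial terms cancel, which leaves $(B-A)\log_2\frac{M}{M-A}$. The paper leaves the ``in particular'' clause implicit, and the intended argument is exactly yours: split a part $B_t\ge2$ of an optimal $R$-part composition into sizes $1$ and $B_t-1$, which leaves every other pair $(M_s,B_s)$ unchanged.
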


\begin{proof}[Proof of Lemma~\ref{lem:split}]
$(M)_B=(M)_A\,(M{-}A)_{B-A}$; substitute into the definition of $g$ and
cancel.
\end{proof}

\begin{proof}[Proof of Theorem~\ref{thm:hardcap}]
Any trajectory of an admissible policy uses some $r\le R$ rounds, and its
cost is the cost of its composition, which is at least $\cE_{n,r}\ge
\cE_{n,R}$ by Lemma~\ref{lem:split}. Averaging over $p$ gives the lower
bound; a minimizing composition, used as a fixed schedule, attains it.
\end{proof}

\begin{proof}[Proof of Proposition~\ref{prop:dporder}]
The falling factorials telescope, $\prod_t(M_t)_{B_t}=n!$, so
$\sum_t g(M_t,B_t)=\sum_t B_t\log_2 M_t-\log_2 n!$; this identity gives
part~(ii). For
(i), swapping adjacent batches $a,b$ (tail of $K$ coordinates,
$M=a{+}b{+}K$, $x=a/M$, $y=b/M$) changes the cost by
$M[\,y\ln(1-x)-x\ln(1-y)\,]/\ln2=Mxy[h(x)-h(y)]/\ln2$ with
$h(u)=\ln(1-u)/u$ strictly decreasing; so placing the larger batch first
never increases the cost.
\end{proof}

\begin{proof}[Proof of Theorem~\ref{thm:fixedR}]
By the telescoped form and Stirling, uniformly over compositions,
$\sum_t g=n[\frac1{\ln2}+\sum_t b_t\log_2 m_t]-\frac12\log_2(2\pi n)
+O(n^{-1})$ with $b_t=B_t/n$, $m_t=M_t/n$. The continuous problem
$C_R=\min\{\sum_t b_t\ln m_t\}$ obeys the recursion
$C_1=0$, $C_R=\min_{0<y<1}y(\ln y+C_{R-1})$ (condition on the mass $y$
remaining after the first batch and rescale); the inner minimum is attained
uniquely at $y=e^{-1-C_{R-1}}$ with value $-y$, giving $C_R=-\rho_R$ and
the product profile. Rounding the (strictly positive) continuous optimizer
to integers perturbs the bracket by $O_R(1/n)$. For the tail, $u_R:=1-\rho_R$
satisfies $u_R=1-e^{-u_{R-1}}$, so $1/u_R-1/u_{R-1}\to\frac12$ and
Stolz--Ces\`aro gives $u_R\sim2/R$; the equal-batch rate is
Corollary-level algebra from the telescoped form and Stirling on $R!$.
\end{proof}

\begin{proof}[Proof of Theorem~\ref{thm:phase1}]
Write $E(n,s):=\cE_{n,n-s}$. Since the pair-first composition is feasible,
$E(n,s)\le P(n,s)$; we prove $E(n,s)\ge P(n,s)$ by induction on $n$ over
the region $3s\le n$. For $s=0$, both quantities vanish. Let $s\ge1$.
By Proposition~\ref{prop:dporder}(i) an optimal composition can be taken
nonincreasing, so its first batch $B$ satisfies $B\ge2$ (a nonincreasing
composition starting with a singleton is all singletons, i.e.\ $s=0$).
Conditioning on the first batch,
\[
E(n,s)\;=\;\min_{2\le B\le s+1}\ \bigl[g(n,B)+E(n-B,\,s-B+1)\bigr],
\]
and each subproblem satisfies
$3(s-B+1)=3s-3B+3\le n-3B+3\le (n-B)-(2B-3)\le n-B$
for $B\ge2$, so the induction hypothesis gives
$E(n-B,s-B+1)=P(n-B,s-B+1)$. Define
$\Delta(B):=g(n,B)+P(n-B,s-B+1)-P(n,s)$. The pair-first recursion
$P(n,s)=\lambda(n)+P(n-2,s-1)$ gives $\Delta(2)=0$, so it suffices to show
$\Delta(B+1)\ge\Delta(B)$ for $2\le B\le s$. A direct computation, using
$\lambda(m-1)-\lambda(m)=\log_2\bigl(1+\tfrac1{m(m-2)}\bigr)$, yields the
increment identity
\begin{align*}
\Delta(B{+}1)-\Delta(B)
&=\log_2\frac{n}{n-B}-\lambda(q)\\
&\quad+\sum_{j=0}^{s-B-1}
 \log_2\Bigl(1+\frac{1}{m_j(m_j-2)}\Bigr),\\
m_j&:=n-B-2j,\qquad q:=n+B-2s .
\end{align*}
Note $q\ge n/3+B$ (from $3s\le n$) and $q\le n-B$ (from $s\ge B$).

\emph{Case $B\ge3$.} Drop the (nonnegative) sum. Then
$\ln\frac{n}{n-B}\ge\frac Bn$ and
$\ln(1+\frac1{q-1})\le\frac1{q-1}$, so it suffices that
$B(q-1)\ge n$; indeed $B(q-1)\ge 3(\tfrac n3+2)=n+6>n$.

\emph{Case $B=2$.} Here the sum is essential. Since consecutive $m_j$
differ by $2$ and $\frac1{m(m-2)}=\frac12(\frac1{m-2}-\frac1m)$, it
telescopes exactly:
\[
\sum_{j=0}^{s-3}\frac{1}{m_j(m_j-2)}=\frac12\Bigl(\frac1q-\frac1{n-2}\Bigr)
=:T\;\ge\;0 .
\]
Using $-\ln(1-x)\ge x+\tfrac{x^2}2$, $\ln(1+u)\ge u-\tfrac{u^2}2$ (so the
sum is at least $T(1-u_{\max})$ with $u_{\max}=\frac1{q(q-2)}$), and
$\ln(1+\tfrac1{q-1})\le\tfrac1{q-1}$, it suffices that
\[
\frac2n+\frac2{n^2}
+\frac12\Bigl(\frac1q-\frac1{n-2}\Bigr)
 \Bigl(1-\frac1{q(q-2)}\Bigr)
\ge\frac1{q-1}.
\]
Set $q_0:=\tfrac n3+2$, so $q\ge q_0$, and freeze
$u_0:=\frac1{q_0(q_0-2)}\ge u_{\max}$. With $u_0$ frozen, the function
$q\mapsto\frac{1-u_0}{2q}-\frac1{q-1}$ is increasing (its derivative is
$\frac1{(q-1)^2}-\frac{1-u_0}{2q^2}>0$), so the worst case is $q=q_0$.
At $q=q_0$, writing $\frac1{2(n-2)}=\frac1{2n}+\frac1{n(n-2)}$ and using
the exact cancellation
\[
\frac3{2n}-\frac3{n+3}+\frac3{2(n+6)}=\frac{27}{n(n+3)(n+6)},
\]
together with $u_0=\frac9{n(n+6)}$,
$\tfrac12\bigl(\tfrac1{q_0}-\tfrac1{n-2}\bigr)\le\tfrac3{2(n+6)}$, and
$\frac2{n^2}-\frac1{n(n-2)}=\frac{n-4}{n^2(n-2)}$, the margin is bounded
below by
\begin{align*}
&\frac{n-4}{n^2(n-2)}
+\frac{27}{n(n+3)(n+6)}-\frac{27}{2n(n+6)^2}\\
&\quad=\frac{n-4}{n^2(n-2)}+\frac{27(n+9)}{2n(n+3)(n+6)^2},
\end{align*}
which is positive for all $n\ge6$, since both terms are nonnegative and
the second is strictly positive.
This proves $\Delta$ is nondecreasing, hence $\Delta(B)\ge0$ for all $B$,
closing the induction.

\emph{Limit.} For $\eps<\tfrac12\log_23$ fixed, $P(n,s)$ with $s=\alpha n$
converges to $\tfrac12\log_2\frac1{1-2\alpha}$, which crosses $\eps$ at
$\alpha_\eps=\tfrac{1-2^{-2\eps}}2<\tfrac13$; monotonicity of
$\cE_{n,R}$ in $R$ (Lemma~\ref{lem:split}) turns this into the stated
limit for $\overline D_\eps=\min\{R:\cE_{n,R}\le\eps\}$.
\end{proof}

\begin{proof}[Proof of Proposition~\ref{prop:convexenv}]
Conditioned on $R(X)=r$ the path cost is at least $\cE_{n,r}\ge
\underline\cE_n(r)$; convexity and Jensen give
$\KL{p_{\mathrm{perm}}}{q_\pi}\ge
\E_p[\underline\cE_n(R)]\ge\underline\cE_n(\E_p[R])$.
\end{proof}

\begin{proof}[Proof of Proposition~\ref{prop:permlog}]
Keeping only the $k=B-1$ term in
$g(M,B)=\sum_{k=0}^{B-1}-\log_2(1-k/M)$ gives
$g(M,B)\ge\log_2\frac{M}{M-B+1}$, and
$M_t-B_t+1=M_{t+1}+1$, so with $M_{R+1}=0$,
\begin{align*}
\sum_t g(M_t,B_t)
&\ge\log_2\prod_{t=1}^{R}\frac{M_t}{M_{t+1}+1}\\
&=\log_2\Bigl[n\prod_{t=2}^{R}\frac{M_t}{M_t+1}\Bigr]\\
&\ge\log_2\Bigl[n\prod_{j=1}^{R-1}\frac{j}{j+1}\Bigr]
 =\log_2\frac nR,
\end{align*}
where the last inequality uses that $M_2>\dots>M_R\ge1$ are distinct
positive integers, so $M_t\ge R-t+1$, and $m\mapsto m/(m+1)$ is
increasing. Taking expectations and using convexity of $r\mapsto-\log_2r$
(Jensen), $\eps\ge\E_p[\log_2(n/R)]\ge\log_2 n-\log_2\E_p[R]$.
\end{proof}

\begin{proof}[Proof of Theorem~\ref{thm:frontier}]
\emph{Lower bound.} Fix $\Lambda>0$ and set $\lambda=\Lambda/n$. For any
composition with $R$ parts, using
$-\log_2(1-\tfrac kM)\ge\tfrac kM\log_2e$,
\begin{align*}
\sum_tg(M_t,B_t)+\lambda R
&\ge\sum_t B_t
 \Bigl[\frac{(B_t-1)\log_2e}{2M_t}+\frac\lambda{B_t}\Bigr]\\
&\ge\sum_t B_t\,H_\lambda(M_t),
\end{align*}
where $H_\lambda(M):=\min_{B\ge1}\bigl[\frac{(B-1)\log_2e}{2M}
+\frac\lambda B\bigr]$ is nonincreasing in $M$ with
$H_\lambda\le\lambda$. It remains to replace each batch evaluation
$B_t\,H_\lambda(M_t)$ by the slot-wise sum of $H_\lambda(c)$ over the
$B_t$ remaining-counts $c\in(M_{t+1},M_t]$ that the batch covers. Split
three ways. Batches with $B_t>\sqrt n$: every covered slot already
receives $\frac{(B_t-1)\log_2e}{2M_t}\ge\frac{\log_2e}{4\sqrt n}$, which
for large $n$ exceeds $\sup_cH_\lambda(c)\le\Lambda/n$. Slots with
$c\le n^{3/4}$: their total target mass is at most
$n^{3/4}\cdot\Lambda/n\to0$. Remaining batches ($B_t\le\sqrt n$, slots
above $n^{3/4}$): every covered slot has $c\ge M_t-B_t\ge
M_t(1-n^{-1/4})$, and since
$H_\lambda(c)\le\frac{M_t}cH_\lambda(M_t)$,
the slot-wise sum exceeds $B_tH_\lambda(M_t)$ by a factor at most
$1+2n^{-1/4}$. Combining and using the exact scaling
$nH_{\Lambda/n}(mn)=h_\Lambda(m)$, a Riemann sum gives
\[
\cE_{n,R}+\frac\Lambda nR\;\ge\;(1-o(1))\int_0^1h_\Lambda(m)\,dm-o(1),
\]
and taking $R=\lceil rn\rceil$ and optimizing over $\Lambda$,
$\liminf_n\cE_{n,\lceil rn\rceil}\ge\eps_*(r)$.

\emph{Upper bound.} Let $B_\Lambda(m)$ denote the minimizer in
$h_\Lambda$ (a.e.\ unique: $B_\Lambda=K$ on the open phase intervals,
ties only at the finitely many boundaries), and set
\begin{align*}
r(\Lambda)&:=\int_0^1\frac{dm}{B_\Lambda(m)},\\
e(\Lambda)&:=\int_0^1
 \frac{(B_\Lambda(m)-1)\log_2e}{2m}\,dm,\\
\int_0^1h_\Lambda(m)\,dm&=e(\Lambda)+\Lambda r(\Lambda).
\end{align*}
The map $\Lambda\mapsto\int h_\Lambda$ is concave (a pointwise infimum of
affine functions of $\Lambda$) with derivative $r(\Lambda)$ by the
envelope theorem; $r(\Lambda)$ is continuous --- the phase boundaries
$aK(K\pm1)$, $a=\frac1{2\Lambda\ln2}$, move continuously and the
integrand is bounded --- nonincreasing, with $r(\Lambda)\to1$ as
$\Lambda\to0$ and $r(\Lambda)\to0$ as $\Lambda\to\infty$. Hence for every
$r\in(0,1)$ there is $\Lambda_r$ with $r(\Lambda_r)=r$; by concavity the
supremum defining $\eps_*$ is attained there, and
\[
\eps_*(r)\;=\;e(\Lambda_r):
\]
the dual value \emph{is} the cost integral of the frontier profile.

Now run this profile as an explicit fixed schedule at size $n$ (with
$a=\frac1{2\Lambda_r\ln2}$): while the remaining count lies in
$(naK(K{-}1),\,naK(K{+}1)]$ use batches of size $K$, for
$K=K_{\max},\dots,2$ with $K_{\max}:=B_{\Lambda_r}(1)$ an $r$-dependent
constant, then singletons. Region $K<K_{\max}$ has length $2aKn+O(1)$
and contributes $2an+O(1)$ rounds; summing (and checking the top region)
the round count is $R_n=n\int_0^1dm/B_{\Lambda_r}(m)+O(K_{\max})
=rn+O(1)$. Each batch's cost exceeds its linearization
$\frac{B(B-1)\log_2e}{2M}$ by at most $CB^3/M^2$; summed over the
non-singleton region (where $M\ge2an$) the excess is
$O(K_{\max}^2/n)\to0$, and the linearized total converges to
$e(\Lambda_r)$ by a Riemann sum. Finally, to respect the exact budget:
for $\delta>0$ run the profile for $r-\delta$, whose round count is at
most $(r-\delta)n+O(1)\le\lceil rn\rceil$ for large $n$; by monotonicity
(Lemma~\ref{lem:split}),
$\limsup_n\cE_{n,\lceil rn\rceil}\le\eps_*(r-\delta)$, and
$\delta\downarrow0$ using continuity of $\eps_*$ on $(0,1)$: convexity
and monotonicity hold because $\eps_*$ is a supremum of affine
nonincreasing functions of $r$, and $\eps_*$ is finite because taking
$B=\lceil\sqrt{\Lambda m}\rceil$ gives
$h_\Lambda(m)\le\min\{\Lambda,\,2\sqrt{\Lambda/m}\}$, whence
$\int_0^1h_\Lambda(m)\,dm\le4+4\sqrt\Lambda$ and the supremum over
$\Lambda$ is finite. At the endpoint $r=1$, sequential
revealing gives $\eps_*(1)=0$. For $\overline D_\eps$, first suppose
$\eps>0$. For any
$\delta>0$, $\cE_{n,\lceil(r_*(\eps)+\delta)n\rceil}\to
\eps_*(r_*(\eps)+\delta)<\eps$ (where $\eps_*$ is strictly decreasing
while positive: were it constant on an interval where it is positive,
convexity and monotonicity would force it to remain constant up to
$r=1$, contradicting $\eps_*(1)=0$), so $\overline
D_\eps\le(r_*(\eps)+\delta)n$ eventually;
conversely $\cE_{n,\lceil(r_*(\eps)-\delta)n\rceil}\to
\eps_*(r_*(\eps)-\delta)>\eps$, so $\overline
D_\eps>(r_*(\eps)-\delta)n$ eventually. When $\eps=0$,
$\overline D_0(p_{\mathrm{perm}})=n$ directly, so the same conclusion holds
with $r_*(0)=1$.
\end{proof}

\section{Proofs for Balanced Binary Strings}\label{app:balanced}

Conditioning $p_{\mathrm{bal}}$ on any revealed values leaves the
unrevealed coordinates uniform over completions with the correct remaining
count of ones. If $M$ coordinates remain and $k$ of them are ones, the
conditional law therefore depends only on $(M,k)$, and position selection is
irrelevant.

Selection bias can be removed by a coupling. Whichever position a policy
selects, the revealed value is one with probability $k/M$ given the history.
For every deterministic value-adaptive policy, the sequence of revealed
values consequently has the law of a uniform arrangement of an urn with
$n/2$ ones and $n/2$ zeros. Realize the sampling process on one such urn
sequence, and let $k(M)$ be the number of ones among its final $M$ draws.
For every fixed $M$, the variable $k(M)$ has the
$\mathrm{Hypergeometric}(n,n/2,M)$ law. The complete trajectory
$\{k(M)\}_{M=0}^n$ is policy-independent, and the states visited by any
policy form a subsequence $(M_t,k(M_t))$. We do not condition $k(M)$ on the
policy-dependent event $\{M_t=M\}$. All typicality statements below are
made for the complete potential trajectory and hence hold simultaneously
for the states visited by every policy.

\begin{lemma}[Round cost on an exchangeable state]\label{lem:balcost}
Condition on a history with $M$ unrevealed coordinates, $k$ of them ones,
and let a round reveal any $B$ coordinates. With
$\theta=\frac{k}{M}(1-\frac{k}{M})$, the conditional total correlation
satisfies
\begin{align*}
\TC&\ge2\theta\log_2e
 \sum_{j=1}^{B-1}\frac{j}{(M-1)(M-j)},\\
\TC&\le C_0\frac{B^2}{M^2}
 \quad\text{if }B\le\tfrac M8\text{ and }\tfrac kM\in[\tfrac38,\tfrac58].
\end{align*}
\end{lemma}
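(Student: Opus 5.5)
The plan is to compute $\TC$ through a sequential chain rule and bound each term by a binary divergence estimate. The second-moment calculation of the hypergeometric law is exact, so both directions reduce to one variance computation.

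\emph{Chain-rule reduction.} Fix the history. By exchangeability (the paragraph preceding the lemma) the revealed coordinates $X_{s_1},\dots,X_{s_B}$ have the law of the first $B$ draws without replacement from an urn with $M$ balls, $k$ of them ones. The chain rule for total correlation (as in the proof of Lemma~\ref{lem:tc-basic}) gives
\[
\TC(X_S\mid x_C)=\sum_{j=1}^{B-1}I\bigl(X_{s_{j+1}};X_{s_1},\dots,X_{s_j}\mid x_C\bigr).
\]
Let $S_j$ be the number of ones among the first $j$ draws. Given the first $j$ draws, $X_{s_{j+1}}$ is Bernoulli$\bigl((k-S_j)/(M-j)\bigr)$, while its marginal is Bernoulli$(k/M)$. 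Hence each summand is the expectation over $S_j$ of a binary KL divergence:
\[
I_j=\E\Bigl[\KL{\mathrm{Ber}\bigl(\tfrac{k-S_j}{M-j}\bigr)}{\mathrm{Ber}\bigl(\tfrac kM\bigr)}\Bigr].
\]

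\emph{Exact second moment.} The deviation of the two Bernoulli parameters is
\[
\Delta_j:=\frac{k-S_j}{M-j}-\frac kM=-\frac{S_j-jk/M}{M-j}.
\]
The variable $S_j$ is hypergeometric with mean $jk/M$ and variance $j\theta(M-j)/(M-1)$. Therefore
\[
\E[\Delta_j^2]=\frac{j\theta}{(M-1)(M-j)}.
\]

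\emph{Lower bound.} Apply Pinsker's inequality in its binary form, $\KL{\mathrm{Ber}(a)}{\mathrm{Ber}(b)}\ge 2\log_2 e\,(a-b)^2$. Then
\[
I_j\ge 2\log_2 e\,\E[\Delta_j^2]=2\theta\log_2 e\,\frac{j}{(M-1)(M-j)}.
\]
Summing over $j=1,\dots,B-1$ gives exactly the stated lower bound, with no loss.

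\emph{Upper bound.} Use the $\chi^2$ bound $\KL{\mathrm{Ber}(a)}{\mathrm{Ber}(b)}\le \log_2 e\,(a-b)^2/\bigl(b(1-b)\bigr)$. It is valid here because $b=k/M\in[\tfrac38,\tfrac58]$ is bounded away from $0$ and $1$, and $b(1-b)=\theta$. This yields
\[
I_j\le\log_2 e\,\frac{j}{(M-1)(M-j)}.
\]
Summing over $j<B$ gives at most $\log_2 e\,B^2/\bigl(2(M-1)(M-B)\bigr)$. Since $B\le M/8$, we have $M-B\ge 7M/8$, and $M-1\ge M/2$ because $M\ge8$ whenever $B\ge2$; the case $B=1$ is trivial since then $\TC=0$. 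So the sum is at most $C_0B^2/M^2$ for an absolute constant $C_0$.

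The only step needing care is choosing the correct binary divergence inequalities and checking the hypergeometric variance factor $(M-j)/(M-1)$, which cancels the $(M-j)$ in $\Delta_j$ exactly. The centering assumption on $k/M$ is used only to keep the $\chi^2$ denominator $\theta$ bounded below. The constraint $B\le M/8$ only controls the factor $M/(M-B)$.
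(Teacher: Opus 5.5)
Your proof is correct and follows essentially the same route as the paper: the chain rule into sequential mutual informations, each written as an expected binary divergence between $\mathrm{Ber}\bigl((k-S_j)/(M-j)\bigr)$ and $\mathrm{Ber}(k/M)$, the exact hypergeometric variance, and scalar Pinsker for the lower bound. The only difference is in the upper bound. You use the global $\chi^2$ bound, whereas the paper confines $q(a_j)$ deterministically to $[\tfrac18,\tfrac78]$ and then applies a local quadratic bound. Under your bound $\theta$ cancels exactly, so your argument shows that the centering hypothesis on $k/M$ is not needed at all, despite your closing remark; only $0<k<M$ (otherwise $\TC=0$) and $M-B\gtrsim M$ are used.
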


\begin{proof}[Proof of Lemma~\ref{lem:balcost}]
Order the revealed coordinates arbitrarily and let $a_j$ be the number of
ones among the first $j$. By exchangeability the $(j{+}1)$-st revealed
value is Bernoulli with parameter $q(a_j)=\frac{k-a_j}{M-j}$ given the
first $j$, and $a_j\sim\mathrm{Hypergeometric}(M,k,j)$ with
$\E q(a_j)=\frac kM$ and
$\operatorname{Var}q(a_j)
=\frac{j(M-j)}{(M-1)(M-j)^2}\,\theta
=\frac{\theta\,j}{(M-1)(M-j)}$.
The chain rule and data processing give
$\TC=\sum_{j\ge1}I(X_{(j+1)};X_{(\le j)})\ge\sum_{j\ge1}I(X_{(j+1)};a_j)$,
and for a Bernoulli mixture,
$I(X;a)=\E\,d\bigl(q(a)\,\|\,\E q(a)\bigr)\ge
2\log_2e\,\operatorname{Var}q(a)$ by the scalar Pinsker inequality
$d(p\|q)\ge2(p-q)^2\log_2e$; this is (i). For (ii): deterministically
$|q(a_j)-\frac kM|=\frac{|kj-a_jM|}{M(M-j)}\le\frac{j}{M-j}\le\frac14$
for $j<B\le M/8$, so $q(a_j)\in[\frac18,\frac78]$, where
$d(q\|\bar q)\le C_1(q-\bar q)^2$; summing the variances gives
$\TC\le C_1\sum_{j<B}\frac{\theta j}{(M-1)(M-j)}\le C_0B^2/M^2$.
\end{proof}

\begin{proof}[Proof of Theorem~\ref{thm:balanced}]
\emph{Lower bound.} Set $M_0=\lceil C_2\log_2n\rceil$ and call the level
$M$ \emph{typical} if $\frac{k(M)}{M}\in[\frac38,\frac58]$. By the
hypergeometric tail bound \cite{hoeffding1963} and a union over the levels
$M\ge M_0$ of the policy-independent potential trajectory,
$\Pr[\text{some level }M\ge M_0\text{ atypical}]\le n^{-2}$ for $C_2$
absolute; on the complementary \emph{good} event, every round of every
policy with $M_t\ge M_0$ starts at a typical state
$(M_t,k(M_t))$. On a typical round write
$\phi_t=\frac{(B_t\wedge(M_t/2))-1}{M_t}$ and
$\psi_t=\log_2\frac{M_t/2}{M_t-B_t}$ if $B_t>M_t/2$ (else $0$). By
Lemma~\ref{lem:balcost}(i), keeping the terms $j<B\wedge(M/2)$ (where
$M-j\le M$) and the terms $j\ge M/2$ (where $\frac j{M-1}\ge\frac12$),
\[
Y_t:=\TC_t\;\ge\;c_3\,\phi_t^2+c_3\,\psi_t ,
\]
with $c_3$ absolute (using $\theta\ge\frac{15}{64}$ on typical rounds).
The count crossing of a round obeys
$\log_2\frac{M_t}{M_{t+1}}\le2\phi_t+\frac2{M_t}+\psi_t+\mathbf
1[B_t>M_t/2]$, and $\mathbf 1[B_t>M_t/2]\le2\phi_t+\frac2{M_t}$ as well
(there $\phi_t\ge\frac12-\frac1{M_t}$), so summing over the rounds with
$M_t\ge M_0$ and telescoping, on the good event,
\begin{align*}
\log_2\frac n{M_0}
&\le\sum_t\Bigl(4\phi_t+\psi_t+\frac4{M_t}\Bigr)\\
&\le4\sqrt{R\sum_t\phi_t^2}+\sum_t\psi_t\\
&\quad+4\ln\Bigl(1+\frac R{M_0}\Bigr)+\frac4{M_0},
\end{align*}
by Cauchy--Schwarz and $\sum_t1/M_t\le\frac1{M_0}+\ln(1+R/M_0)$ ($M_t$
distinct, $\ge M_0$). Now $\sum_t(\phi_t^2+\psi_t)\le W/c_3$ with
$W:=\sum_tY_t$, $\E W\le\eps$, and taking expectations (using
$\E\sqrt{RW}\le\sqrt{\E R\,\E W}$ and concavity of $\ln$),
\begin{align*}
(1-n^{-2})\log_2\frac n{M_0}
&\le4\sqrt{\frac{\E R\,\eps}{c_3}}+\frac\eps{c_3}\\
&\quad+4\ln\bigl(1+\E R\bigr)+\frac4{M_0}.
\end{align*}
If $\E R\ge\log_2^3n$, the first branch of the minimum in the statement
holds and we are done; otherwise $4\ln(1+\E R)\le C\log_2\log_2n$, and
solving the resulting quadratic inequality for $\sqrt{\E R\,\eps}$ gives
the second branch.

\emph{Upper bound.} If $\eps\le2/n$, sequential revealing has zero cost and
its $n$ rounds satisfy the claimed bound after increasing the absolute
constant.  Suppose henceforth that $\eps>2/n$.  Use the fixed schedule
$B_t=\lceil\gamma M_t\rceil$, where
$\gamma=\eps/(C_4\log_2n)\le1/16$, until $M_t\le M_0$, and then reveal
singletons.  The first phase uses at most $C\log_2(n)/\gamma$ rounds, and
the singleton tail uses $M_0$ rounds.  Thus
$R\le C(\log_2^2n/\eps+\log_2n)$.  If $B_t=1$, the round cost is zero; if
$B_t\ge2$, then $\gamma M_t>1$ and
$B_t/M_t\le\gamma+1/M_t\le2\gamma\le1/8$.  On the good event,
Lemma~\ref{lem:balcost}(ii) therefore bounds each non-singleton round by
$4C_0\gamma^2$.  The accumulated cost on this event is at most $\eps/2$
when $C_4$ is sufficiently large.  On the complementary event, which has
probability at most $n^{-2}$, the total correlation accumulated by any
fixed schedule is at most $n$ bits because a batch of size $B$ costs at
most $B$ bits.  Its expected contribution is therefore at most
$n^{-1}<\eps/2$.  Hence
$\KL{p_{\mathrm{bal}}}{q_\pi}\le\eps$, and the fixed schedule also
witnesses the asserted bound on $\overline D_\eps$.
\end{proof}

\section{Proofs for Binary One-Hot Blocks}\label{app:onehot}

Consider an unresolved one-hot block at a history with $M\ge2$ possible
locations for its one. If the current round selects $1\le k\le M$ candidate
coordinates, the selected vector is zero with probability $1-k/M$ and equals
each of its $k$ unit vectors with probability $1/M$. Its conditional total
correlation is

\[
G(M,k):=k h_2(1/M)-h_2(k/M)-\frac{k}{M}\log_2 k,
\]

with $G(M,0):=0$.

\begin{lemma}[One-round one-hot cost]\label{lem:hotcost}
For every $M\ge2$ and $0\le k\le M$,
\[
G(M,k)\ge c_0\left(\frac{(k-1)_+}{M}\right)^2,
\qquad c_0:=\frac{1}{2e\ln2}.
\]
\end{lemma}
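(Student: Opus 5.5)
We will not work with the closed form of $G(M,k)$ directly. Instead we use the fact, recorded just before the statement, that $G(M,k)$ is the conditional total correlation of the $k$ selected coordinates of an unresolved block with $M$ live locations. We then lower-bound that total correlation by a chain-rule decomposition, in the same way as in the proof of Lemma~\ref{lem:balcost}. The cases $k\in\{0,1\}$ are trivial, since $G(M,0)=0$ by convention and a single coordinate has zero total correlation, while the right side vanishes because $(k-1)_+=0$. So assume $2\le k\le M$.

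\textbf{Step 1: chain rule.} Order the selected coordinates as $Y_1,\dots,Y_k$. Under the conditional law, each $Y_i$ is Bernoulli$(1/M)$, and at most one $Y_i$ equals one. Telescoping $H(Y_1,\dots,Y_k)$ gives
\[
G(M,k)=\sum_{j=1}^{k-1} I\bigl(Y_{j+1};Y_{\le j}\bigr).
\]
Each summand can be written as $\E\,d\bigl(q(Y_{\le j})\,\|\,1/M\bigr)$, where $q(\cdot)$ is the conditional probability that $Y_{j+1}=1$ and $d$ is binary KL divergence in bits.

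\textbf{Step 2: isolate the informative event.} Given $Y_{\le j}$, there are two cases.
\begin{itemize}[leftmargin=2.2em]
\item With probability $j/M$ some earlier coordinate equals one. Then $Y_{j+1}=0$ deterministically and $q=0$.
\item With probability $1-j/M$ all earlier coordinates are zero. Then $q=1/(M-j)$.
\end{itemize}
We drop the second, nonnegative, term and keep only the first. This gives
\[
I\bigl(Y_{j+1};Y_{\le j}\bigr)\ \ge\ \frac jM\,d\bigl(0\,\|\,1/M\bigr)
=\frac jM\log_2\frac{M}{M-1}\ \ge\ \frac{j}{M^2\ln 2},
\]
using $-\ln(1-u)\ge u$.

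\textbf{Step 3: sum.} Summing over $j$ yields
\[
G(M,k)\ \ge\ \frac{k(k-1)}{2M^2\ln2}\ \ge\ \frac{(k-1)^2}{2M^2\ln 2}.
\]
This exceeds $c_0\bigl((k-1)/M\bigr)^2$ because $1/(2\ln2)\ge 1/(2e\ln2)$. The stated constant therefore holds with room to spare.

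The only point needing care is confirming that the displayed formula for $G(M,k)$ really equals $\TC$ of the selected vector, so that Step~1 applies. We would check this directly. The selected vector takes $k+1$ values, with probabilities $1-k/M$ (the zero vector) and $1/M$ (each unit vector). Hence
\[
H(\text{joint})=h_2(k/M)+\tfrac kM\log_2 k,
\]
and the sum of the marginal entropies is $k\,h_2(1/M)$. Their difference is exactly $G(M,k)$. Everything after that check is elementary.
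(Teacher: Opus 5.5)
Your proof is correct, and it takes a different route from the paper. The paper does not use the chain rule. It writes $G(M,k)=\KL{P}{Q}$, with $P$ the true law of the selected vector and $Q$ the product of its Bernoulli$(1/M)$ marginals. It then applies data processing to the indicator of the event $A$ that at least two selected coordinates equal one; this event has $P(A)=0$. It lower-bounds $Q(A)$ by the exactly-two-ones term $\binom{k}{2}M^{-2}(1-1/M)^{k-2}\ge k(k-1)/(2eM^2)$, which gives $G\ge-\log_2(1-Q(A))\ge(\log_2 e)\,Q(A)$. The factor $e$ in $c_0$ comes entirely from the uniform estimate $(1-1/M)^{k-2}\ge e^{-1}$.

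Your sequential decomposition keeps only the branch in which the one has already been found, and so avoids that loss. It yields the uniform bound $G(M,k)\ge k(k-1)/(2M^2\ln 2)$ for all $2\le k\le M$. The branch you discard contributes only lower-order terms when $k/M\to0$, so your constant $1/(2\ln2)$ is sharp to leading order in that regime. Your argument mirrors the chain-rule technique of Lemma~\ref{lem:balcost} and exploits the explicit sequential conditionals of the one-hot law.

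The paper's argument is a one-step support-violation bound that needs no conditional computations. It applies verbatim to any law that puts zero mass on an event charged by the product of its marginals. Either constant suffices for Theorem~\ref{thm:onehot}. Your direct verification that the closed form of $G(M,k)$ equals the total correlation of the selected vector is also correct, including the edge case $k=M$.
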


\begin{proof}
The cases $k=0,1$ are immediate, so assume $k\ge2$.
Let $P$ be the true law of the selected coordinates and $Q$ the product of
their Bernoulli$(1/M)$ marginals. Under $P$, the event $A$ that at least two
selected coordinates equal one has probability zero. Under $Q$, retaining
only outcomes with exactly two ones gives
\[
Q(A)\ge \binom{k}{2}\frac1{M^2}\left(1-\frac1M\right)^{k-2}
\ge\frac{k(k-1)}{2eM^2},
\]
where the last step uses $k\le M$ and
$(1-1/M)^{M-2}\ge e^{-1}$. Data processing under the indicator of $A$ yields
\begin{align*}
G(M,k)=\KL{P}{Q}
&\ge-\log_2(1-Q(A))\\
&\ge(\log_2e)Q(A),
\end{align*}
which proves the claim because $k(k-1)\ge(k-1)_+^2$.
\end{proof}

Fix a block $a$ and a global round $t$. Let $Z_{a,t}$ indicate that the block
is unresolved immediately before round $t$. On $Z_{a,t}=1$, let
$M_{a,t}\ge2$ be its number of possible one-locations and let $k_{a,t}$ be
the number selected from them in that round. After termination, pad the
trajectory with empty rounds. On $Z_{a,t}=0$, set $M_{a,t}=1$ and
$k_{a,t}=0$, and define
\[
u_{a,t}:=\frac{(k_{a,t}-1)_+}{M_{a,t}}.
\]

\begin{lemma}[One free candidate per round]\label{lem:hotfree}
For every deterministic value-adaptive policy, every block $a$ and every
round $t$,
\[
\E_p\left[\frac{Z_{a,t}}{M_{a,t}}\right]\le\frac1m.
\]
\end{lemma}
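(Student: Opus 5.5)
The plan is to prove the bound by an exact computation over reveal histories. The key structural fact is that the block-$a$ factor of a history's probability telescopes to $M_{a,t}/m$, and this cancels the $1/M_{a,t}$ in the expectation.

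\emph{Setup.} Fix the policy $\pi$, the block $a$ and the round $t$. Since $\pi$ is deterministic, the random state immediately before round $t$ is a function of $x$. Call this state $h=(C_t,x_{C_t})$. On $Z_{a,t}=1$ every revealed coordinate of block $a$ equals $0$, and $M_{a,t}=m-|C_t\cap\mathcal A_a|$, where $\mathcal A_a$ is the coordinate set of block $a$. Thus
\[
\E_p\!\left[\frac{Z_{a,t}}{M_{a,t}}\right]=\sum_{h\ \text{unresolved in }a}\frac{p_{C_t}(x_{C_t})}{M_{a,t}(h)},
\]
where the sum runs over reachable states at round $t$ with positive $p$-probability and block $a$ unresolved. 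Padded rounds after termination have $Z_{a,t}=0$ and contribute nothing.

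\emph{Factorization.} By Lemma~\ref{lem:product-additivity}, $p_{C_t}(x_{C_t})$ factors into the block-$a$ marginal times the marginal of the other blocks. The block-$a$ marginal of an all-zero pattern on $m-M_{a,t}$ coordinates equals $\Pr[J_a\notin C_t\cap\mathcal A_a]=M_{a,t}/m$. Hence
\[
p_{C_t}(x_{C_t})=\frac{M_{a,t}}{m}\,Q(h),
\]
where $Q(h)$ is the probability that the blocks $b\ne a$ produce the revealed values $x_{C_t\setminus\mathcal A_a}$. Substituting this cancels $M_{a,t}$, so the expectation equals $\frac1m\sum_h Q(h)$.

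\emph{Main step: showing $\sum_h Q(h)\le1$.} This is where the adaptive coupling must be handled. The difficulty is that the set of revealed block-$a$ coordinates is itself chosen adaptively, using the values of the other blocks.

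To control this, I would consider an auxiliary process. Run $\pi$ with the blocks $b\ne a$ drawn from their true independent laws, but answer every block-$a$ reveal with $0$ until the state before round $t$. This is well defined as long as block $a$ still has at least two candidates. Every unresolved history $h$ in the sum arises from exactly one realization of the other blocks' revealed values in this process, and it is reached with probability exactly $Q(h)$. The reason is that $\pi$'s choices are a function of the state, and block $a$ contributes no randomness in this process.

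Distinct states $h$ are disjoint events for the auxiliary process. It follows that $\sum_h Q(h)\le1$, with possible slack from auxiliary trajectories that exhaust block $a$ before round $t$.

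\emph{Expected obstacle.} The delicate point is making the correspondence between states $h$ and auxiliary trajectories rigorous. I would check this by induction over rounds $s<t$: along any unresolved history, the policy's reveal sets coincide in the true process and in the auxiliary process, because the states agree. Combining the cancellation with $\sum_h Q(h)\le1$ then gives $\E_p[Z_{a,t}/M_{a,t}]\le1/m$, as claimed in Lemma~\ref{lem:hotfree}.
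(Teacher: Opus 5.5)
Your proposal is correct and is essentially the paper's argument: both rest on the zero-answer simulation of block $a$ and on cancelling $1/M_{a,t}$ against the mass $M_{a,t}/m$ that $J_a$ places on the surviving candidates. The paper conditions on $J_{-a}=j_{-a}$ and sums over $J_a\in U_t(j_{-a})$, whereas you sum over round-$t$ histories and factor $p_{C_t}(x_{C_t})=\frac{M_{a,t}}{m}Q(h)$. That is a Fubini rearrangement of the same computation, and your implicit step that reaching $h$ at round $t$ has probability $p_{C_t}(x_{C_t})$ holds because the deterministic trajectory to $h$ consults only values in $x_{C_t}$.
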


\begin{proof}
Condition only on the latent locations
$J_{-a}:=(J_{a'})_{a'\ne a}=j_{-a}$ of the other blocks. Run the deterministic
policy up to round $t$ in a zero-answer simulation for block $a$: answer zero
whenever that block is queried, and answer queries to every other block using
the fixed value $j_{-a}$. Stop if block $a$ has at most one candidate left.
If it remains unresolved before round $t$, let $U_t(j_{-a})$ be its remaining
candidate set; otherwise put $U_t(j_{-a})=\varnothing$.

For every $j\in U_t(j_{-a})$, the true trajectory with $J_a=j$ agrees with
the simulation through the beginning of round $t$: none of the previously
queried positions in block $a$ equals $j$, and the claim follows inductively
from determinism of the policy. Hence $Z_{a,t}=1$ and
$M_{a,t}=|U_t(j_{-a})|$. If $j\notin U_t(j_{-a})$, the one has already been
queried or the final candidate has been inferred, so $Z_{a,t}=0$. Independence
and uniformity of $J_a$ therefore give
\[
\E_p\left[\frac{Z_{a,t}}{M_{a,t}}\,\middle|\,
J_{-a}=j_{-a}\right]
=\sum_{j\in U_t(j_{-a})}\frac1m\frac1{|U_t(j_{-a})|}
\le\frac1m,
\]
where the sum is zero if $U_t(j_{-a})$ is empty. Averaging over $J_{-a}$
completes the proof.
\end{proof}

\begin{proof}[Proof of Theorem~\ref{thm:onehot}]
Let $T_a$ be the round in which block $a$ becomes resolved, either because
its one is revealed or because only one candidate remains. At an unresolved
state with $M$ candidates and a batch of $k$ candidates, the conditional
resolution probability is at most
\begin{equation}\label{eq:hotresolve}
\frac{(k-1)_+}{M}+\frac2M.
\end{equation}
Indeed, for $k\le M-2$ resolution requires the one to lie in the batch and
has probability $k/M$; for $k=M-1$ or $k=M$ resolution is certain, and the
displayed expression is at least one.

Set $d:=\E_p[R]$ and $H:=\lceil2d\rceil$. Markov's inequality gives
$\Pr\{R\le H\}\ge1/2$. Since every block is resolved when the policy
terminates, $\Pr\{T_a\le H\}\ge1/2$. Summing the mutually exclusive
resolution events and using Lemma~\ref{lem:hotfree},
\begin{align}
\frac12
&\le\sum_{t\le H}\E_p[Z_{a,t}u_{a,t}]
 +2\sum_{t\le H}\E_p\left[\frac{Z_{a,t}}{M_{a,t}}\right]\notag\\
&\le\sum_{t\le H}\E_p[Z_{a,t}u_{a,t}]+\frac{2H}{m}.
\label{eq:hotprogress}
\end{align}
Suppose first that $H\le m/8$. Then
$\sum_{t\le H}\E_p[Z_{a,t}u_{a,t}]\ge1/4$, and Cauchy--Schwarz on the
product of trajectory measure and counting measure on rounds gives
\[
\frac1{16}
\le H\sum_{t\le H}\E_p[Z_{a,t}u_{a,t}^2].
\]

Let $\mathcal B_a$ be the coordinates of block $a$ and define its expected
information contribution by
\[
K_a:=\E_p\left[\sum_{t\ge1}
\TC(X_{S_t\cap\mathcal B_a}\mid x_{C_t})\right].
\]
In an unresolved block the summand equals $G(M_{a,t},k_{a,t})$; in a
resolved block all selected coordinates are deterministic. Lemma~\ref{lem:hotcost}
and $H\le3d$ therefore imply
\begin{equation}\label{eq:hotblockcost}
K_a\ge c_0\sum_{t\le H}\E_p[Z_{a,t}u_{a,t}^2]
\ge\frac{c_0}{16H}\ge\frac{c_0}{48d}.
\end{equation}

At every realized positive-probability history, the queried sets may depend
on values from several blocks, but their selection history is a deterministic
function of revealed values and supplies no additional conditioning
information. Lemma~\ref{lem:product-additivity} therefore gives
\[
\TC(X_{S_t}\mid x_{C_t})
=\sum_{a=1}^m\TC(X_{S_t\cap\mathcal B_a}\mid x_{C_t}).
\]
The adaptive cost identity and \eqref{eq:hotblockcost} yield, when
$H\le m/8$,
\begin{equation}\label{eq:hotdirectsum}
\KL{p_m^{\mathrm{hot}}}{q_\pi}
=\sum_{a=1}^mK_a\ge\frac{c_0m}{48d}.
\end{equation}

If an admissible policy has divergence at most $\eps>0$, this branch gives
$d\ge c_0m/(48\eps)$. In the complementary branch $H>m/8$,
\[
d>\frac m{16}-\frac12\ge\frac m{32},
\qquad m\ge16.
\]
Taking $c=c_0/48<1/32$ proves $d\ge cm/(1+\eps)$ in both branches. When
$\eps=0$, the branch $H\le m/8$ is impossible by
\eqref{eq:hotdirectsum}, so the complementary bound applies.

For the upper bound, in every unresolved block reveal one new candidate per
round. After finding the one, or after eliminating all but its final possible
location, clear the remaining deterministic coordinates in the next round.
Each unresolved block contributes at most one nondeterministic coordinate to
a round, resolved blocks contribute only deterministic coordinates, and the
blocks remain conditionally independent. Every round therefore has zero
conditional total correlation, and all blocks finish within $m$ rounds. Thus
$\overline D_0(p_m^{\mathrm{hot}})\le m$, which also bounds
$\overline D_\eps$.
\end{proof}

\begin{proof}[Proof of Corollary~\ref{cor:onehotrect}]
For a block of length $L$, the zero-answer argument of
Lemma~\ref{lem:hotfree} gives
$\E_p[Z_{a,t}/M_{a,t}]\le1/L$. With $d=\E_p[R]$ and
$H=\lceil2d\rceil$, if $H\le L/8$, the proof above gives
$K_a\ge c_0/(48d)$ for every one of the $b$ blocks. Hence
\[
\KL{p_{b,L}^{\mathrm{hot}}}{q_\pi}
\ge\frac{c_0b}{48d}.
\]
If $H>L/8$, then $d>L/16-1/2\ge L/32$. Combining the two branches with
$c=c_0/48$, including the zero-error case as above, gives
$D_\eps\ge c\min\{L,b/(1+\eps)\}$. The same zero-cost search schedule
terminates within $L$ rounds. Finally, for fixed $\eps$ choose
$L\asymp n^\alpha$ and $b\asymp n^{1-\alpha}$ along integer subsequences;
when $0<\alpha\le1/2$, the lower and upper bounds are both of order
$n^\alpha$.
\end{proof}

\section{Proof of Generic Maximal Zero-Error Depth}\label{app:generic}

\begin{proof}[Proof of Proposition~\ref{prop:generic}]
For a triple $(C,x_C,\{i,j\})$ with $C\subset[n]$, $x_C\in V^C$ and
$i\ne j\notin C$, the conditional-independence relation
$X_i\perp X_j\mid X_C=x_C$ is the finite system of polynomial equations,
indexed by $(a,b)\in V^2$,
\[
\begin{aligned}
p_{\{i,j\}\cup C}(a,b,x_C)\,p_C(x_C)
&=p_{\{i\}\cup C}(a,x_C)\\
&\quad\cdot p_{\{j\}\cup C}(b,x_C),
\end{aligned}
\]
in the atoms of $p$ (every marginal is a sum of atoms). Each system fails somewhere in the interior of
the simplex: perturbing the uniform distribution by
$p_\delta(x)\propto 1+\delta\,\mathbf 1[x_i=a_0]\,\mathbf 1[x_j=b_0]$
(any fixed $a_0,b_0\in V$, $\delta\in(0,1)$) leaves $X_i$ and $X_j$
dependent conditionally on \emph{every} $x_C$: conditioning on $x_C$ and
marginalizing the coordinates outside $C\cup\{i,j\}$ multiplies every atom
by the same count, so the conditional pair law is exactly
$g(a,b)\propto 1+\delta\,\mathbf 1[a=a_0]\mathbf 1[b=b_0]$, and for any
$a\ne a_0$, $b\ne b_0$ (here $|V|\ge2$) the cross-ratio
$\frac{g(a_0,b_0)\,g(a,b)}{g(a_0,b)\,g(a,b_0)}=1+\delta\ne 1$, whereas
every product law has cross-ratio $1$. A polynomial that does not vanish
identically on the affine hull of the simplex vanishes on a subset of the
simplex that is null for its $(|V|^n{-}1)$-dimensional Lebesgue measure;
hence each relation cuts out a null set,
and there are finitely many triples, so outside a null set $N$ no
nontrivial pairwise conditional independence holds at any history.

Fix a full-support $p\notin N$ and a policy $\pi$ with
$\KL{p}{q_\pi}=0$. The equality condition of Theorem~\ref{thm:identity}
gives $\TC(X_{S_t}\mid x_{C_t})=0$ $p$-almost surely along the trajectory;
since every atom of $p$ is positive, this holds at every state visited by
the trajectory of \emph{any} $x\in V^n$. If
some reveal set had $|S_t|\ge2$, zero total correlation would make its
coordinates mutually --- in particular pairwise --- conditionally
independent given $x_{C_t}$, contradicting $p\notin N$. Hence
$|S_t|\equiv1$, so $R(x)=n$ for every $x$, and $\E_p[R]=\max_x R(x)=n$.
\end{proof}

\section{Entropy of Bernoulli sums}\label{app:pb}

This appendix proves the marginal-entropy estimate used in
Lemma~\ref{lem:bulkpair}. Throughout, $X=\sum_{j=1}^m\xi_j$ with
$\xi_j\sim\mathrm{Bern}(p_j)$ independent, $\mu=\E X$,
$\sigma^2=\operatorname{Var}X=\sum_jp_j(1-p_j)$, and
$g(x)=\frac1{\sqrt{2\pi}\sigma}e^{-(x-\mu)^2/(2\sigma^2)}$.

\begin{lemma}[Local CLT for Bernoulli sums]\label{lem:pblclt}
There is an absolute constant $C_L$ such that for $\sigma^2\ge2$,
$\sup_{k\in\mathbb Z}\,|\Pr[X=k]-g(k)|\le C_L/\sigma^2$.
\end{lemma}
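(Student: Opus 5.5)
We prove the local limit theorem by Fourier inversion, which is the standard route for sums of independent non-identical lattice variables. Let $\varphi(s)=\E e^{is(X-\mu)}=\prod_j\varphi_j(s)$, where $\varphi_j(s)=e^{-isp_j}(1-p_j+p_je^{is})$. For $k\in\mathbb Z$ put $x=k-\mu$. Lattice inversion and the Gaussian Fourier integral give
\[
\Pr[X=k]=\frac1{2\pi}\int_{-\pi}^{\pi}\varphi(s)e^{-isx}\,ds,\qquad
g(k)=\frac1{2\pi}\int_{\mathbb R}e^{-\sigma^2s^2/2}e^{-isx}\,ds .
\]
Hence the error is at most $\frac1{2\pi}\bigl(\int_{|s|\le\pi}|\varphi(s)-e^{-\sigma^2s^2/2}|\,ds+\int_{|s|>\pi}e^{-\sigma^2s^2/2}ds\bigr)$. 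The tail integral is $O(e^{-\pi^2\sigma^2/2}/\sigma^2)$, which is negligible.

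On $|s|\le\pi$ we use two per-factor facts. First, $|\varphi_j(s)|^2=1-2p_j(1-p_j)(1-\cos s)\le\exp(-4p_j(1-p_j)s^2/\pi^2)$, so $|\varphi(s)|\le e^{-c\sigma^2s^2}$ with $c=2/\pi^2$. Second, a cumulant expansion: for $|s|\le s_0$ with $s_0$ a small absolute constant, $\log\varphi_j(s)=-\tfrac12p_j(1-p_j)s^2+r_j(s)$ with $|r_j(s)|\le Cp_j(1-p_j)|s|^3$. This holds because every cumulant of a centered Bernoulli is bounded by an absolute constant times its variance. Summing over $j$ gives $\varphi(s)=e^{-\sigma^2s^2/2+r(s)}$ with $|r(s)|\le C\sigma^2|s|^3$.

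We then split the range at $|s|=s_0$. On $|s|\le s_0$, reducing $s_0$ if necessary so that $C|s|\le\sigma^2\cdot$ small $\cdot$ relative to the quadratic term (i.e.\ $Cs_0\le1/4$), we get $|e^{r}-1|\le|r|e^{|r|}$. Then $|\varphi-e^{-\sigma^2s^2/2}|\le C\sigma^2|s|^3e^{-\sigma^2s^2/4}$, whose integral is $O(\sigma^2\cdot\sigma^{-4})=O(1/\sigma^2)$. On $s_0<|s|\le\pi$, both $|\varphi|$ and the Gaussian are at most $e^{-c'\sigma^2}$, contributing $O(e^{-c'\sigma^2})\le C/\sigma^2$. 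The hypothesis $\sigma^2\ge2$ is used only to absorb these exponentials and the Gaussian tail into $C_L/\sigma^2$.

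The main obstacle is that the first-order bound already gives $1/\sigma^2$: the cubic term $\kappa_3s^3$ integrated against the Gaussian produces an odd integrand in absolute value, yielding exactly order $\sigma^2\cdot\sigma^{-4}$. So no Edgeworth correction is needed, and the proof reduces to verifying the uniform cumulant bound $|r_j(s)|\le Cp_j(1-p_j)|s|^3$ near zero. We would prove that bound by Taylor-expanding $\log(1-p+pe^{is})-isp$ with the remainder controlled by $p(1-p)$, noting that $|1-p+pe^{is}|\ge1/2$ for $|s|\le s_0$.
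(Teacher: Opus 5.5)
Your argument is correct and is essentially the paper's proof: lattice Fourier inversion, the per-factor bound $|\varphi(s)|\le e^{-2\sigma^2s^2/\pi^2}$ on $[-\pi,\pi]$, and a third-derivative Taylor remainder $|r(s)|\le C\sigma^2|s|^3$ near the origin. The paper obtains that remainder from $|g_j'''|\le 5p_j(1-p_j)$, using $|1-p_j+p_je^{it}|^2\ge\frac34$ on $|t|\le1$. The only difference is where you cut the integral. You cut at a fixed absolute $s_0$ with $Cs_0\le\frac14$ and absorb the cubic remainder into half the Gaussian exponent. The paper cuts at $t_0=\sigma^{-2/3}$, where $e^{\sigma^2|t|^3}\le e$, and bounds $t_0<|t|\le1$ separately by Gaussian tails. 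Your split is slightly cleaner and gives the same $O(\sigma^{-2})$ bound.
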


Local limit theorems of this type are classical (see, e.g.,
\cite{petrov1975}); we include a short self-contained proof with explicit
constants.

\begin{proof}
Write $\varphi(t)=\E e^{itX}=\prod_j(1-p_j+p_je^{it})$ and
$\psi(t)=e^{i\mu t-\sigma^2t^2/2}$. Two facts about the factors
$g_j(t)=\log(1-p_j+p_je^{it})$: first,
$|1-p_j+p_je^{it}|^2=1-2p_j(1-p_j)(1-\cos t)$, so
$|\varphi(t)|\le e^{-\sigma^2(1-\cos t)}\le e^{-2\sigma^2t^2/\pi^2}$ on
$[-\pi,\pi]$; second, $g_j''(t)=-p_j(1-p_j)e^{it}/D^2$ and
$g_j'''(t)=-ip_j(1-p_j)e^{it}(D-2p_je^{it})/D^3$ with
$D=1-p_j+p_je^{it}$, and $|D|^2\ge1-p_j(1-p_j)t^2\ge\frac34$ for
$|t|\le1$, so $|g_j'''(t)|\le5\,p_j(1-p_j)$ there. Taylor's theorem then
gives $|\log\varphi(t)-(i\mu t-\sigma^2t^2/2)|\le\sigma^2|t|^3$ for
$|t|\le1$. Now
$\Pr[X=k]=\frac1{2\pi}\int_{-\pi}^{\pi}\varphi(t)e^{-ikt}\,dt$ and
$g(k)=\frac1{2\pi}\int_{\mathbb R}\psi(t)e^{-ikt}\,dt$, so
$|\Pr[X=k]-g(k)|\le\frac1{2\pi}\bigl[\int_{|t|\le
t_0}|\varphi-\psi|+\int_{t_0<|t|\le1}(|\varphi|+|\psi|)
+\int_{1<|t|\le\pi}(|\varphi|+|\psi|)+\int_{|t|>\pi}|\psi|\bigr]$ with
$t_0=\sigma^{-2/3}$. On the first range
$|\varphi-\psi|\le|\psi|\,\sigma^2|t|^3e^{\sigma^2|t|^3}\le
e\,\sigma^2|t|^3e^{-\sigma^2t^2/2}$, whose integral is at most
$4e/\sigma^2$; the second range is bounded by
$2\int_{t_0}^\infty e^{-11\sigma^2t^2/24}dt\le
Ce^{-\frac{11}{24}\sigma^{2/3}}$ (using $1-\cos t\ge\frac{11}{24}t^2$ on
$|t|\le1$); the third by $2\pi e^{-\sigma^2(1-\cos1)}$; the last by
$Ce^{-\sigma^2\pi^2/2}$. All but the first are $O(\sigma^{-2})$.
\end{proof}

\begin{lemma}[Near-maximal entropy]\label{lem:pbentropy}
There is an absolute nonincreasing function $\delta(\sigma)\to0$
($\sigma\to\infty$), depending on nothing but $\sigma$, such that every
Bernoulli-sum law with variance $\sigma^2\ge2$ satisfies
\[
H(X)\;\ge\;\tfrac12\log_2\!\bigl(2\pi e(\sigma^2+\tfrac1{12})\bigr)
\;-\;\delta(\sigma).
\]
One may take $\delta(\sigma)=C(1+\ln\sigma)\,\sigma^{-1/3}$.
\end{lemma}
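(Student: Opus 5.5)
We plan to compare $H(X)$ with the differential entropy of the Gaussian density $g$, using the local CLT of Lemma~\ref{lem:pblclt} on a central window and crude tail bounds outside it. Write $p_k=\Pr[X=k]$. The target quantity $\tfrac12\log_2(2\pi e(\sigma^2+\tfrac1{12}))$ differs from $\tfrac12\log_2(2\pi e\sigma^2)$ by $O(\sigma^{-2})$, which can be absorbed into $\delta$. It therefore suffices to show $H(X)\ge\tfrac12\log_2(2\pi e\sigma^2)-C(1+\ln\sigma)\sigma^{-1/3}$.

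\textbf{Step 1 (cross-entropy decomposition).} Since $H(X)=-\sum_k p_k\log_2 p_k\ge-\sum_k p_k\log_2 \tilde g(k)-\KL{p}{\tilde g}$ is circular, we use Gibbs directly in the other direction instead:
\[
H(X)=-\sum_kp_k\log_2 g(k)-\sum_kp_k\log_2\frac{p_k}{g(k)}.
\]
The first term equals $\tfrac12\log_2(2\pi\sigma^2)+\frac{\log_2 e}{2\sigma^2}\E(X-\mu)^2=\tfrac12\log_2(2\pi e\sigma^2)$ exactly. Hence everything reduces to showing that $\Delta:=\sum_kp_k\log_2(p_k/g(k))\le C(1+\ln\sigma)\sigma^{-1/3}$.

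\textbf{Step 2 (central window).} Let $W=\{k:|k-\mu|\le a\sigma\}$ with $a=\sqrt{c\ln\sigma}$, where $c$ is chosen small (say $a^2/2\le\tfrac13\ln\sigma$). On $W$ we have $g(k)\ge\frac{1}{\sqrt{2\pi}\sigma}\sigma^{-1/3}$. Lemma~\ref{lem:pblclt} then gives $p_k/g(k)\le1+C_L\sigma^{-2}/g(k)\le1+C\sigma^{-2/3}$, so $\log_2(p_k/g(k))\le C\sigma^{-2/3}$ and the window contributes at most $C\sigma^{-2/3}$.

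\textbf{Step 3 (tails).} Off $W$, we use $p_k\le1$ and $-\log_2 g(k)=\tfrac12\log_2(2\pi\sigma^2)+\frac{(k-\mu)^2\log_2 e}{2\sigma^2}$. This bounds the tail contribution by $\sum_{k\notin W}p_k\log_2(1/g(k))-\sum_{k\notin W}p_k\log_2(1/p_k)\le \Pr[X\notin W]\cdot O(\ln\sigma)+O(\E[(X-\mu)^2\mathbf 1_{X\notin W}]/\sigma^2)$. Bernstein/Hoeffding for Bernoulli sums (variance $\sigma^2$, bounded summands) gives sub-Gaussian-type tails $\Pr[|X-\mu|>t\sigma]\le2e^{-t^2/3}$ for $t\le\sigma$, and exponential tails beyond that. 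This makes both the tail probability and the truncated second moment at most $C\sigma^{-c'}$ after choosing $c$ appropriately. Balancing exponents yields the $\sigma^{-1/3}$ rate with a $\ln\sigma$ factor.

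\textbf{Main obstacle.} The hard part is the tension in choosing the window. It must be wide enough that the Bernstein tails (which, for these arrays, give only a Gaussian-like decay rate up to constants, not the exact $e^{-t^2/2}$) make the tail mass polynomially small in $\sigma$. It must also be narrow enough that $g(k)\gg\sigma^{-2}$ on it, so that the additive LCLT error stays relatively small. If the constant in the Bernstein exponent forces $a$ too large, we would instead split the window at $|k-\mu|=a\sigma$ and, in the intermediate annulus, bound $p_k\log(p_k/g(k))$ by $p_k\log(1+C\sigma^{-2}/g(k))\le C\sigma^{-2}\cdot\sigma\cdot\ln\sigma$ summed over $O(\sigma\sqrt{\ln\sigma})$ points. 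This still gives an $O(\sigma^{-1/3}\ln\sigma)$-type bound. Monotonicity of $\delta$ is then arranged by replacing the bound with its running supremum over larger $\sigma$.
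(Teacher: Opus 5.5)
Your proposal is correct and is essentially the paper's argument. Both write $H(X)$ against the exact Gaussian cross-entropy $\tfrac12\log_2(2\pi e\sigma^2)$, turn the local CLT into a multiplicative bound $p_k\le g(k)(1+\bar r)$ on a window of radius $\Theta(\sigma\sqrt{\ln\sigma})$, and handle the complement with Bernstein tails and $p_k\le1$ (equivalently, by discarding the nonnegative terms $-p_k\log_2 p_k$).

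One quantitative correction: the window must be wider than your illustrative choice, not narrower. The paper takes $z_0^2=\tfrac43\ln\sigma$, which makes the relative LCLT error $O(\sigma^{-1/3})$ and the tail probability $O(\sigma^{-1/2})$, or $O(\sigma^{-4/9})$ with your Bernstein constant $1/3$. Your choice $a^2\le\tfrac23\ln\sigma$ leaves a tail deficit of order $\sigma^{-2/9}\ln\sigma$, which is worse than the claimed rate. With the wider window the annulus fallback is unnecessary.
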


\begin{proof}
Let $Z=(X-\mu)/\sigma$ and let $B=\{k:|k-\mu|\le z_0\sigma\}$ with
$z_0^2=\tfrac43\ln\sigma$ (so $z_0\le\sigma$). On $B$,
$g(k)\ge\frac{\sigma^{-2/3}}{\sqrt{2\pi}\sigma}$, so by
Lemma~\ref{lem:pblclt} the ratio bound
$p_k\le g(k)(1+\bar r)$ holds with
$\bar r=\sqrt{2\pi}\,C_L\,\sigma^{-1/3}$. Since every term
$-p_k\log_2p_k$ is nonnegative, discarding the complement of $B$ gives
\begin{align*}
H(X)&\ge\sum_{k\in B}p_k(-\log_2 g(k))-\bar r\log_2e\\
&=\Pr[B]\log_2(\sqrt{2\pi}\sigma)
 +\frac{\log_2e}{2}\E[Z^2\mathbf 1_B]\\
&\quad-\bar r\log_2e .
\end{align*}
Bernstein's inequality for sums of independent centered bounded variables
\cite{hoeffding1963}
gives $\Pr[|Z|>s]\le2e^{-3s^2/8}$ for $s\le\sigma$ and
$\Pr[|Z|>s]\le2e^{-3s\sigma/8}$ beyond, whence
$\Pr[B^c]\le2\sigma^{-1/2}$ and
$\E[Z^2\mathbf 1_{B^c}]\le C(1+\ln\sigma)\sigma^{-1/2}$. Using
$\E[Z^2]=1$ and collecting the three deficits (each at most
$C(1+\ln\sigma)\sigma^{-1/3}$) yields
$H(X)\ge\frac12\log_2(2\pi\sigma^2)+\frac12\log_2e-\delta$; finally
$\frac12\log_2(1+\frac1{12\sigma^2})\le\frac{\log_2e}{24\sigma^2}$ is
absorbed into $\delta$. (Numerically the deficit is far smaller:
$\delta\approx0.06/\sigma^2$ across hypergeometric marginals.)
\end{proof}

\end{document}